\documentclass[11pt]{article} 
\usepackage{amsmath,amssymb,amsfonts,amsthm}
\usepackage[margin=1in]{geometry}
\usepackage{aliascnt}
\usepackage{hyperref}
\usepackage{enumitem}
\usepackage{float} 
\usepackage{placeins}
\usepackage{mathtools} 
\usepackage{todonotes} 
\newcommand{\R}{\mathbb{R}}
\newcommand{\C}{\mathbb{C}}
 
\usepackage{graphicx} 
\theoremstyle{definition}
\newtheorem{theorem}{Theorem}[section] 
\newtheorem{lemma}[theorem]{Lemma} 

\newtheorem{proposition}[theorem]{Proposition}
\newtheorem{corollary}[theorem]{Corollary} 
\newaliascnt{remark}{theorem}
\newtheorem{remark}[remark]{Remark}
\aliascntresetthe{remark}

\DeclareMathOperator{\Spec}{Spec}
\DeclareMathOperator{\ind}{ind}
\DeclareMathOperator{\sign}{sign}

 \DeclareMathOperator{\APS}{APS}  \DeclareMathOperator{\wind}{wind} \DeclareMathOperator{\Dom}{Dom} \newcommand{\Z}{\mathbb{Z}} \DeclareMathOperator{\Ran}{Ran} 
\hypersetup{colorlinks,breaklinks, linkcolor = purple, citecolor = teal, urlcolor = purple, }
\numberwithin{equation}{section}

\title{Dirac Operators, APS Boundary Conditions, and Spectral Flow on a Finite Warped Cylinder}
\author{Taro Kimura$^*$, Sanchita Sharma%
\thanks{Université Bourgogne Europe, CNRS, IMB UMR5584, Dijon, France}} 
\date{}
\begin{document} 
\maketitle
\begin{abstract}
We study the Dirac operator on a finite warped cylinder coupled to a background \(U(1)\) gauge field. We identify the intrinsic endpoint operators defining the Atiyah--Patodi--Singer (APS) boundary condition and derive a determinant characterization of the modewise APS spectrum. In the constant-gauge, invertible setting, the endpoint reduced $\eta$ contributions cancel, so the APS index vanishes. For smooth gauge families, the APS projector becomes discontinuous when a boundary mode crosses zero. We therefore introduce a regularized APS-type family of self-adjoint endpoint conditions that remains continuous across such crossings. This regularized family admits a real-symplectic boundary formulation within the standard spectral-flow/Maslov framework: for nondegenerate regularization, the zero-mode set coincides with the boundary-zero set, and transverse boundary zeros give isolated regular crossings.
\end{abstract}
\setcounter{tocdepth}{2}
\tableofcontents 
\section{Introduction and Overview} 

In the context of Dirac-type operators on manifolds with boundary, the Atiyah-Patodi-Singer (APS) boundary condition~\cite{APS1,APS2,APS3} is a fundamental elliptic boundary condition that has far-reaching implications for various fields, including physics and mathematics. It is defined from the spectral decomposition of the induced self-adjoint boundary Dirac operator and enters both the APS index theorem and later spectral-flow and Maslov-index formulations; see, for example,~\cite{LawsonMichelsohn,BGV,Hijazi,BoossWojciechowski,BaerBallmann2012}. 

Recently, APS-type phenomena have also been revisited from the physical, bulk-boundary, and domain-wall viewpoints. In particular, Fukaya, Onogi, and Yamaguchi showed that the APS index can be recovered from a domain-wall Dirac operator in a physically natural setup~\cite{Fukaya}, and this perspective was subsequently developed further in a more systematic mathematical form in~\cite{FukayaMatsuo2020}; see also Fukaya's review~\cite{FukayaReview2021} for a broader overview of massive-fermion and domain-wall reformulations of index theory. Related bulk-boundary and anomaly-inflow interpretations appear in the work of Witten and Yonekura~\cite{WittenYonekura2019} and of Kobayashi and Yonekura~\cite{KobayashiYonekura2021}. In a two-dimensional setting, close in spirit to the present paper, Onogi and Yoda related the reformulated APS index to Berry-phase and bulk-boundary contributions in a domain-wall picture~\cite{OnogiYoda2021}. Complementary lattice studies of curved domain-wall fermions by Aoki, Fukaya, and collaborators are also relevant to the geometric setting considered here~\cite{AokiFukaya2022,AokiFukaya2023,AokiFukayaKan2024,AokiSingleCurved2024}. More recently, Zhu established a generalized APS/domain-wall formula without assuming the invertibility of the boundary Dirac operator~\cite{Zhu2023}, while Aoki et al.\ gave a lattice/domain-wall realization showing how APS-type index information can be captured discretely~\cite{Aoki2026}.

This paper studies the aforementioned structures on a finite warped cylinder coupled to a background \(U(1)\) field. 
In abstract form, these structures are well understood \cite{LawsonMichelsohn,BGV,Hijazi,BoossWojciechowski,BaerBallmann2012}. 
In explicit curved models, however, the bulk Dirac operator, the intrinsic boundary operator, the boundary \(\eta\)-terms, and the zero-mode crossing are less often written out simultaneously in a single concrete setting. 
In this model, we can compute the coupled Dirac operator explicitly and identify the boundary operators entering the APS projector \cite{BoossWojciechowski,LeschWojciechowski1996}; 
reduce the bulk spectra to ordinary differential equations and related characteristic/determinant formulations \cite{ScottWojciechowski2000}; 
prove cancellation of the endpoint reduced \(\eta\)-invariant contributions in the constant gauge case under the invertibility assumption \cite{Bunke1995,LeschWojciechowski1996,KirkLesch}; 
and formulate the parameter-dependent zero-mode problem in a symplectic boundary space, in the spirit of spectral flow and Maslov index theory \cite{RobbinSalamon1995,CappellLeeMiller1994,Nicolaescu1995,KirkLesch,vanDenDungenRonge2020,Prokhorova2013,GorokhovskyLesch2013}.

At the level of the reduced mode equations, the present model also sits in a broader analytic context. After Fourier decomposition, the warped-cylinder Dirac problem becomes a one-dimensional first-order coupled system, and hence, a scalar second-order equation; for the warped function considered here, this equation is of general Heun type. Reductions of Dirac equations in curved or warped geometries to Heun-type equations occur in several settings, including quantum-corrected and black-hole-type backgrounds~\cite{Baradaran2025,Filho2023}, one-dimensional Dirac systems with varying interaction profiles~\cite{Ishkhanyan2023}, and related recent analyses of Dirac systems via Heun equations~\cite{Loginov:2022yex,Blas:2025bkt,Blas2026}. Our focus, however, is the explicit warped-cylinder boundary-value problem and its APS and spectral consequences.

We work on a finite warped cylinder
\begin{equation}
M=[0,T]\times S^1,
\qquad
g=dt^2+f(t)^2\,d\theta^2,
\end{equation}
coupled to a background \(U(1)\) gauge field. This model is simple enough to admit explicit formulas, yet rich, exhibiting nontrivial APS boundary operators, modewise reduction of the bulk problem, reduced \(\eta\)-invariant contributions, and a parameter-dependent zero-mode crossing problem. For the warping function \(f(t)=e^t+\alpha e^{-t}\), the scalar mode equation is of general Heun type.

We explicitly work with the finite warped cylinder coupled to a background \(U(1)\) field, and compute the bulk Dirac operator, identify the intrinsic endpoint operators defining the APS boundary condition, characterize the modewise APS spectrum by a boundary determinant, and prove cancellation of the endpoint reduced \(\eta\)-invariant contributions in the constant gauge case under the invertibility assumption \(k+A\neq 0\) for all allowed modes. For parameter-dependent gauge families, we do not claim continuity of the APS projector across the boundary-zero set \(k+A(s)=0\). Instead, we introduce a regularized APS-type boundary family, continuous across those points, and use it to place the corresponding zero-mode problem into the standard spectral-flow/Maslov formalism.

\paragraph{Main results.}
\begin{enumerate}[label=(\roman*)]
\item For each Fourier mode \(k\), the coupled Dirac operator reduces to a one-dimensional first-order system; for warping function \(f(t)=e^t+\alpha e^{-t}\), the associated scalar second-order equation is of general Heun type.

\item The intrinsic self-adjoint boundary operators for APS condition are identified explicitly. For \(m=k+A\neq 0\),  we characterize the modewise APS spectrum by a boundary determinant condition \(F_k(\lambda)=0\).

\item In the constant gauge case, assuming \(k+A\neq 0\) for all allowed modes, the endpoint reduced \(\eta\)-invariant contributions cancel. Thus,
\begin{equation}
\ind(D^+_{\APS})=0.
\end{equation}

\item For smooth gauge families \(A=A(s)\), the zero-mode problem admits a real symplectic boundary formulation. For the continuous regularized APS-type boundary family introduced in Section~\ref{sec:maslov}, the corresponding zero-mode problem fits the standard spectral-flow/Maslov formalism. Under the assumption
\[
\delta \neq \frac{2}{\ell(T)},
\qquad
\ell(T)=\int_0^T \frac{d\tau}{f(\tau)}.
\]
The zero-mode set is exactly the boundary-zero set \(k+A(s)=0\), and isolated regular crossings are precisely the transverse boundary zeros.
\end{enumerate}

Sections~\ref{sec:2ddirac}-\ref{sec:aps-boundary} discuss the APS index. \autoref{sec:maslov} concerns the regularized APS-type family introduced only to study parameter variation across the boundary zeros \(k+A(s)=0\). All spectral-flow and Maslov statements in \autoref{sec:maslov} refer to the regularized family.

\paragraph{Organization of the paper.}
We organize the paper as follows. In \autoref{sec:2ddirac}, we derive the explicit Dirac operator on the warped cylinder, fix conventions for the spinor bundle and Clifford multiplication, and include the coupling to the background \(U(1)\) field. In \autoref{sec:bulk}, we perform the Fourier-mode reduction, derive the decoupled scalar equation, explain the reduction to Heun type for the specific warped function \(f(t)=e^t+\alpha e^{-t}\), and formulate the APS boundary condition together with the determinant spectral condition. In \autoref{sec:aps-boundary}, we study the chiral APS index, compute the relevant reduced \(\eta\)-invariants, and prove the cancellation of the two endpoint contributions in the constant gauge invertible assumption. In \autoref{sec:maslov}, we treat one-parameter gauge families, introduce the symplectic boundary formalism and the regularized APS-type Lagrangian family, and describe the resulting Maslov/spectral-flow framework together with explicit model paths \(A(s)\). The appendices discuss the longer ordinary differential equation reductions and auxiliary computations.

\paragraph{Acknowledgments.}
This work was supported by EIPHI Graduate School (No.~ANR-17-EURE-0002) and the Bourgogne-Franche-Comté region.

\section{Dirac operator on a finite two-dimensional warped manifold}\label{sec:2ddirac}
In this section, we derive the Dirac operator on a two-dimensional warped cylinder, fix the geometric and Clifford conventions, and include the coupling to a background $U(1)$ gauge field. Throughout, we work in the Euclidean signature.
Let 
\begin{equation}
M=[0,T]\times S^1,
\qquad T>0,
\end{equation}
be equipped with coordinates \((t,\theta)\), where \(t\in[0,T]\) and \(\theta\in[0,2\pi)\). We endow \(M\) with the warped product metric
\begin{equation}
g=dt^2+f(t)^2\,d\theta^2,
\end{equation}
where \(f\in C^\infty([0,T])\) satisfies \(f(t)>0\) for all \(t\in[0,T]\). Thus, \(g\) is a smooth Riemannian metric on \(M\). The boundary of \(M\) is
\begin{equation}
\partial M=\{0\}\times S^1\sqcup \{T\}\times S^1.
\end{equation}
We write
\begin{equation}
Y_0=\{0\}\times S^1,\qquad Y_T=\{T\}\times S^1.
\end{equation}
Hence, \(M\) is a compact Riemannian manifold with boundary.

Next, we fix an orthonormal frame on $M$. Following standard conventions, Greek indices $(\mu,\nu)$ refer to curved coordinates, whereas Latin indices $(a,b)$ refer to the orthonormal frame. Using $e^a=e^a_{\ \mu}\,dx^\mu$, we choose the orthonormal coframe
\begin{equation}
e^1=dt,\qquad e^2=f(t)\,d\theta,
\end{equation}
with the dual frame
\begin{equation}
e_1=\partial_t,\qquad e_2=\frac{1}{f(t)}\partial_\theta.
\end{equation}
The metric components in the coordinate basis are recovered from the vielbein by $g_{\mu \nu} = e^{a}_{\mu} e^{b}_{\nu} \delta_{ab}$.

This frame is adapted to the warped manifold: $e_1$ is normal to the boundary, while $e_2$ is the tangent to the circle, scaled by the warping factor. We orient $M$ by $dt\wedge d\theta$, that is, by the ordered orthonormal frame $(e_1,e_2)$. The inward unit normal is $\partial_t$ at $t=0$ and $-\partial_t$ at $t=T$. In \autoref{subsec:aps}, this sign change will induce the corresponding sign change in the boundary operators on $Y_0$ and $Y_T$ used in the APS projector. For brevity, we often write $f$ in place of $f(t)$.

\subsection{Spinor bundle and Dirac operator}\label{subsec:diracop}
Let $S \to  M$ denote the complex spinor bundle associated with the chosen spin structure on $M$.
The two-dimensional complex spinor bundle has rank two. Since \(M=[0,T]\times S^1\) is a cylinder, once a spin structure is fixed, the spinor bundle can be trivialized in the chosen orthonormal frame. Spinors can therefore be represented by \(\mathbb C^2\)-valued functions, with the choice of spin structure reflected in the periodic or anti-periodic boundary condition in the \(\theta\)-direction.

We choose a representation of the complex Clifford algebra \(\mathrm{Cl}_2\) in \(\mathbb C^2\) by Hermitian \(2\times 2\) matrices \(\gamma_1,\gamma_2\) satisfying
\begin{equation}
\{\gamma_a,\gamma_b\}=2\delta_{ab}.
\end{equation}
the associated Hermitian Dirac operator is,
\begin{equation}
D=i\,\gamma_a\nabla_{e_a},
\end{equation}
where \(i=\sqrt{-1}\). This convention is chosen so that the Clifford action of the induced boundary is defined below.
\begin{equation}
c(X)=-\,i\,\gamma(N)\gamma(X),\qquad X\in \mathcal{T}(\partial M),
\end{equation}
is Hermitian. Here, \(\mathcal{T}(\partial M)\) denotes the tangent vector bundle along the boundary of \(M\).

We represent the Clifford algebra using Pauli matrices, where $\gamma_1 = \sigma_1$  and $ \gamma_2 = \sigma_2$. With the orientation fixed above, we define the chirality matrix by
\begin{equation}
\gamma_3 = i \gamma_1 \gamma_2 = - \sigma_3.
\end{equation}
Therefore,
\begin{equation}
\gamma_1 = \begin{pmatrix} 0 & 1 \\ 1 & 0 \end{pmatrix}, \qquad
\gamma_2 = \begin{pmatrix} 0 & -i \\ i & 0 \end{pmatrix}, \qquad
\gamma_3 = \begin{pmatrix} -1 & 0 \\ 0 & 1 \end{pmatrix}.
\end{equation}

The spin connection on $S$ is determined by the Levi-Civita connection of $g$, with respect to the orthonormal frame $( e_1, e_2)$ introduced above. The only non-zero coefficients of the Levi-Civita connection are
\begin{equation}
\nabla_{e_{2}} e_1=  \frac{f^{\prime}}{f}e_2,  \quad  \nabla_{e_1} e_2 = - \frac{f^{\prime}}{f}e_1.
\end{equation}

Thus, the only independent non-vanishing spin connection one-form is
\begin{equation}
\omega^1_{ 2} = -\frac{f'}{f} e^2 = - \omega^2_{\ 1}.
\end{equation}
These spin connection one-forms satisfy Cartan's first structure equation $ d e^{a} + \omega^{a}_{b} \wedge e^{b} = 0$.
The spinor covariant derivative on \(S\) is denoted by \(\nabla^S\). In the chosen trivialization, it is given by
\begin{equation}
\nabla^S_{e_a}=\partial_{e_a}+\frac12\,\omega_{a\,bc}\Sigma^{bc},
\qquad
\Sigma^{bc}=\frac14[\gamma_b,\gamma_c],
\end{equation}
where \(\partial_{e_a}\) denotes directional differentiation along \(e_a\), acting componentwise on spinors. For the warped metric under consideration, this gives;
\begin{equation}
\nabla^S_{e_1} = \partial_t, \qquad \nabla^S_{e_2} = \frac{1}{f}\partial_\theta - \frac{1}{2} \frac{f'}{ f} \gamma_1 \gamma_2.
\end{equation}

\paragraph{Function spaces and sections.}
For a vector bundle \(V\to M\), we write \(\Gamma(V)\), \(L^2(M;V)\), and \(H^1(M;V)\) for the spaces of smooth, square-integrable, and first Sobolev sections of \(V\), defined using the Riemannian volume measure and the Hermitian bundle metric. The same notation applies to each boundary component \(Y_{t_0}\). On \([0,T]\), we use the standard scalar Sobolev spaces \(L^2([0,T])\) and \(H^1([0,T])\).

For each boundary component \(Y_{t_0}\), let \(S_{Y_{t_0}}\) denote the induced boundary spinor bundle. We then write
\begin{equation}
\Gamma(S_{Y_{t_0}}),\qquad
L^2(Y_{t_0};S_{Y_{t_0}}),\qquad
H^1(Y_{t_0};S_{Y_{t_0}})
\end{equation}
for the corresponding spaces of smooth, square-integrable, and first Sobolev boundary spinors. We also write
\begin{equation}
H^{1/2}(Y_{t_0};S_{Y_{t_0}})
\end{equation}
for the standard fractional Sobolev trace space on the boundary component \(Y_{t_0}\). Therefore, \(H^{1/2}(Y_{t_0};S_{Y_{t_0}})\) is the natural target of the trace map
\begin{equation}
H^1(M;S)\longrightarrow H^{1/2}(Y_{t_0};S_{Y_{t_0}})
\end{equation}
given by restriction of spinors to \(Y_{t_0}\).

\paragraph{Chirality splitting.}
The two-dimensional spinor bundle splits as follows.
\begin{equation}
S=S^+\oplus S^-,
\end{equation}
where, under the chosen trivialization,
\begin{equation}
S^+ \cong M\times \mathbb C\binom{0}{1},\qquad
S^- \cong M\times \mathbb C\binom{1}{0}.
\label{eq:chiral_split}
\end{equation}
$S^\pm$ are the $\pm1$-eigenbundles of the chirality operator $\gamma_3$. In our representation,
\begin{equation}
\gamma_3\binom{1}{0}=-\binom{1}{0},\qquad
\gamma_3\binom{0}{1}=+\binom{0}{1}.
\end{equation}

\subsection{Coupling to a background \(U(1)\) gauge field}\label{subsec:gaugetw}
Let \(E\to M\) be a Hermitian complex line bundle, and fix a global unitary trivialization in which the unitary connection takes the form
\begin{equation}
\nabla^E=d+iA\,d\theta,
\end{equation}
where \(A\in\mathbb R\) is constant. We then consider the spinor bundle \(S\otimes E\) coupled to the background gauge field and the corresponding Dirac operator
\begin{equation}
D:\Gamma(S\otimes E)\to \Gamma(S\otimes E).
\end{equation}
From this point onward, all spinor bundles, boundary spinor bundles, and the associated spaces of sections, Sobolev spaces, and \(L^2\)-spaces are understood to be coupled to \(E\). For brevity, we suppress \(E\) from the notation and continue to write \(S\), \(S^\pm\), and \(S_{Y_{t_0}}\) in place of \(S\otimes E\), \(S^\pm\otimes E\), and \(S_{Y_{t_0}}\otimes E|_{Y_{t_0}}\), respectively.
In gauge theory notation, the connection one-form is $ A_\mu dx^\mu = A\,d\theta$, so that $A_t=0$ and $A_{\theta}=A$ is constant. 

The Dirac operator is defined using the tensor product connection.
\begin{equation}
\nabla^{S\otimes E}=\nabla^S\otimes 1 + 1\otimes \nabla^E
\end{equation}
on $S\otimes E$. Thus,
\begin{equation}
D = i\gamma_a \nabla^{S\otimes E}_{e_a}.
\end{equation}
In this trivialization, one obtains
\begin{equation}
\nabla^{S\otimes E}_{e_1}=\partial_t,\qquad
\nabla^{S\otimes E}_{e_2}=\frac1{f}(\partial_\theta+iA)-\frac{f'}{2f}\gamma_1\gamma_2.
\end{equation}
Combining these expressions, we obtain the full Dirac operator:
\begin{equation}
D = i \gamma_1 \left( \partial_t + \frac{f'}{2 f} \right) + i \gamma_2 \frac{1}{f} (\partial_{\theta} + iA).
\end{equation}
With Hermitian Euclidean gamma matrices, \(D\) is self-adjoint. For the APS boundary condition, however, the relevant operator is not the tangential term appearing in the bulk decomposition, but the intrinsic self-adjoint boundary Dirac operator defined on each boundary component below.
After plugging in the Pauli matrices in the previous equation, we obtain 
\begin{equation} \label{eq:Diracop}
D = 
i \begin{pmatrix}
0 & \partial_t + \dfrac{f'}{2f} - \dfrac{i}{f}( \partial_{\theta} + i A) \\
\partial_t + \dfrac{f'}{2f} + \dfrac{i}{f} ( \partial_{\theta} + i A)& 0
\end{pmatrix}.
\end{equation} 
The term $\frac{f'}{2f}$ encodes the effects of warping geometry. 
Thus, we obtain the Dirac operator on the warped cylinder coupled to the background gauge field.

\section{Bulk spectrum}\label{sec:bulk}
We observe that the metric and background connection are invariant in the \(\theta\)-direction. Therefore, the operator decomposes into a one-dimensional radial equation parametrized by the angular mode \(k\). 

\begin{remark} \label{remark:spin_structure}
We allow either spin structure on \(S^1\): for the periodic spin structure, \(k\in\mathbb Z\), whereas for the anti-periodic spin structure, \(k\in\mathbb Z+\tfrac12\). In what follows, \(k\) denotes an arbitrary allowed mode; the derivation is identical in both cases.
\end{remark}

Thus, the bulk spectral equation reduces to a modewise family of radial equations parametrized by
\begin{equation}
m=k+A.
\end{equation}
We now consider the eigenvalue equation and the associated Hilbert space of eigenfunctions.
\begin{equation}
D\psi=\lambda\psi,
\qquad
\lambda\in\mathbb R.
\end{equation}
\paragraph{Mode Hilbert spaces.}
For a fixed Fourier mode $k$, the natural radial Hilbert space induced from
$L^2(M;S\otimes E)$ is
\begin{equation}
\mathfrak H_k=L^2([0,T],f(t)\,dt;\C^2),
\end{equation}
with the corresponding Sobolev space
\begin{equation}
\mathfrak W_k=H^1([0,T],f(t)\,dt;\C^2).
\end{equation}
Accordingly, all modewise adjointness and self-adjointness statements in this section are understood with respect to the weighted radial inner product unless an explicit unitary conjugation to the unweighted space is made later. Writing
\begin{equation}\label{eq: ansatz}
\psi(t,\theta) = 
e^{ik \theta} \begin{pmatrix}
u(t) \\
v(t)
\end{pmatrix},
\end{equation}
we have
\begin{equation}
i \begin{pmatrix}
0 & \partial_t + \dfrac{f'}{2f} + \dfrac{k + A}{f} \\
\partial_t + \dfrac{f'}{2f} - \dfrac{k + A}{f} & 0
\end{pmatrix}
\begin{pmatrix}
u(t) \\
v(t)
\end{pmatrix}
= \lambda
\begin{pmatrix}
u(t) \\
v(t)
\end{pmatrix} ,
\end{equation}
leading to coupled differential equations
\begin{subequations}\label{eq:uv_system}
\begin{align}
v' + \dfrac{f'}{2f} v + \dfrac{k + A}{f} v &= -i\lambda u, \label{eq:uv_system_a}\\
u' + \dfrac{f'}{2f} u - \dfrac{k + A}{f} u &= -i\lambda v. \label{eq:uv_system_b}
\end{align}
\end{subequations}
The coupled system decouples into second-order scalar equations for the two spinor components. Writing \(m=k+A\), we obtain for \(u(t)\)
\begin{equation}
u^{\prime \prime} + \dfrac{f^\prime}{f} u' + \left( \frac{f^{\prime \prime}}{2 f} - \frac{f'^2}{4f^2} + \frac{m f'}{f^2} - \frac{m^2}{f^2} + \lambda^2 \right) u = 0.
\end{equation}
The corresponding equation for \(v(t)\) is obtained by replacing \(m\) with \(-m\).

\subsection{Reduction to Heun type}\label{subsec:redtoheun}
To place the resulting second-order equation into a standard Fuchsian form, we perform a sequence of transformations. First, we apply a Liouville transformation, which removes the first-derivative term and rewrites the equation in Schr\"odinger form.
 In particular, if the decoupled equation is written as
\begin{equation}\label{eq:decoupled-general}
u''(t)+P(t)\,u'(t)+R(t)\,u(t)=0,
\end{equation}
then setting \(u(t)=e^{-\frac12\int^t P(s)\,ds}\,w(t)\) produces
\begin{equation}\label{eq:liouville}
w''(t)+\Bigl(R(t)-\tfrac12 P'(t)-\tfrac14 P(t)^2\Bigr)\,w(t)=0.
\end{equation}
In our case \(P(t)=f'(t)/f(t)\), hence \(u(t)=f(t)^{-1/2}w(t)\). The explicit effective potential obtained from \eqref{eq:liouville} is given in \autoref{app:heun}.

Now we substitute $f(t)=e^t+\alpha e^{-t}$ with $\alpha>0$, and introduce $z=e^t$. Since
\begin{equation}\label{eq:zchange}
f=\frac{z^2+\alpha}{z},\qquad \frac{d}{dt}=z\frac{d}{dz},
\end{equation}
on the original real domain $t\in[0,T]$, we have
$z=e^t\in[1,e^T]\subset\mathbb R_{>0}$. For singularity analysis, we consider the normalized equation as a complex ODE in the variable $z$. Its coefficients are rational functions of $z$, so the equation is regular on
\begin{equation}
\mathbb C\setminus\{0,\pm i\sqrt{\alpha}\},
\end{equation}
with singular points at
\begin{equation}\label{eq:zsing}
z\in\{0,\ \pm i\sqrt{\alpha},\ \infty\}.
\end{equation}

To place these four singular points in a standard configuration, let \(\beta=\sqrt{\alpha}\), and then we apply the M\"obius transformation
\begin{equation}\label{eq:mobius}
x=\frac{z-i\beta}{z+i\beta},
\end{equation}
which maps \(z \in  \{0,\pm i\beta,\infty\}\) to \(x\in\{-1,0,1,\infty\}\). In the \(x\)-variable, the equation can be written in the form
\begin{equation}\label{eq:normalized-x}
W''(x)+\frac{2x}{x^2-1}W'(x)+Q(x)\,W(x)=0.
\end{equation}
The explicit expressions for \(Q(x)\), the Liouville-transformed potential, and the rational form of the \(z\)-equation are given in \autoref{app:heun}.
Finally, we remove the double poles by factoring off the Frobenius gauge.
\begin{equation}\label{eq:frobenius-gauge}
W(x)=x^{\rho_0}(x-1)^{\rho_1}(x+1)^{\rho_{-1}}\,y(x),
\end{equation}
where \(\rho_0,\rho_{\pm1}\) are chosen from the indicial equations so that the \((x-a)^{-2}\) terms cancel for \(a\in\{0,\pm1\}\). The resulting equation for \(y\) has four regular singular points at \(x\in\{-1,0,1,\infty\}\), so the mode equation is of general Heun type. We used this observation only as structural motivation; the actual spectral condition will be formulated later using the APS boundary conditions and the associated boundary determinant.

\subsection{APS boundary conditions on the warped cylinder}\label{subsec:aps}

In the previous subsection, we reduced the eigenvalue equation
$D \psi =  \lambda \psi$ to a second-order mode equation. To obtain a discrete spectrum of a finite cylinder, we need to specify an elliptic boundary condition. For this purpose, we impose APS boundary conditions, which require identifying the induced boundary operator on each boundary component.

The boundary is $\partial M = Y_0 \sqcup Y_T$ with $Y_0 = \{ 0 \} \times S^1$ and $Y_T  = \{ T \} \times S^1$.

\subsubsection{Normal vector and boundary Clifford multiplication}\label{subsec:boundary-clifford}
Let \(\{e_1,e_2\}\) be the orthonormal frame $e_1= \partial_t, e_2= \frac{1}{f(t)}\partial_\theta.$
Following the conventions of \cite{Hijazi}, we use the inward-pointing unit normal $N$ along each boundary
component. At $Y_0$ the inward unit normal is $N=+e_1$, whereas, at $Y_T$ the inward unit normal is
$N=-e_1$. We continue to write \(\gamma(\cdot)\) for the Hermitian gamma-matrix representation in the chosen orthonormal frame, so that
\begin{equation}
\gamma(e_1)=\gamma_1,\qquad \gamma(e_2)=\gamma_2.
\end{equation}
Consequently,
\begin{equation}
\gamma(N)=\gamma_1 \quad \text{at} \quad Y_0,
\qquad
\gamma(N)=-\gamma_1 \quad \text{at} \quad Y_T.
\end{equation}

For the warped metric, the bulk Dirac operator is not literally of pure product form near the boundary,
because its normal part contains the additional zeroth-order term
\begin{equation}
\frac{f'(t)}{2f(t)}.
\end{equation}
Consequently, we define the APS boundary operator intrinsically on each boundary component, following the hypersurface formalism of \cite{Hijazi}. To define a Hermitian Clifford action along the boundary, we set
\begin{equation}
c(X)=-\,i\,\gamma(N)\gamma(X),\qquad X\in \mathcal{T}(\partial M).
\end{equation}
Since \(X\perp N\), the Hermitian matrices \(\gamma(N)\) and \(\gamma(X)\) anticommute, so \(\gamma(N)\gamma(X)\) is skew-Hermitian; hence \(c(X)\) is Hermitian. Moreover, \(c(X)\) satisfies the Clifford relations for the induced boundary metric.
We equip \(L^2(Y_{t_0};S_{Y_{t_0}})\) with the natural \(L^2\)-inner product
\begin{equation}
\langle \phi,\psi\rangle_{L^2(Y_{t_0})}
=
\int_0^{2\pi}\langle \phi(\theta),\psi(\theta)\rangle_{\mathbb C^2}\,f(t_0)\,d\theta .
\end{equation}
With respect to this inner product, the tangential covariant derivative is skew-adjoint and the induced boundary Dirac operator is self-adjoint. In dimension two, the intrinsic boundary Dirac operator \(\tilde D_{Y_{t_0}}\) is related to the bulk Dirac operator by the hypersurface identity,
\begin{equation}
\tilde D_{Y_{t_0}}\psi
=
\frac12 H_{t_0}\,\psi-\gamma(N)D\psi-\nabla_N\psi,
\end{equation}
for spinors \(\psi\) obtained by restriction from the bulk. Here \(H_{t_0}\) denotes the mean curvature of the boundary component \(Y_{t_0}\), computed with respect to the chosen inward unit normal. In the present model, this intrinsic boundary operator is identified with the self-adjoint boundary operator \(B_{t_0}\) introduced below, and it is this operator that enters the APS projector.

 Therefore, for the unit tangent vector \(\mathcal{U}=e_2|_{t=t_0}\) on
\(Y_{t_0}\) one obtains
\begin{equation}    \label{eq:signs}
c(\mathcal{U})=
\begin{cases}
+\sigma_3,& t_0=0,\\
-\sigma_3,& t_0=T.
\end{cases}
\end{equation}
Crucially, the sign changes between the two ends because the inward normal changes sign.

\subsubsection{Self-adjoint boundary Dirac operator}\label{subsec:boundary-dirac}
With the line bundle \(E\) fixed, we suppress it from the notation here as well. The covariant derivative along the unit tangent at \(Y_{t_0}\) is
\begin{equation}
\nabla_\mathcal{U}=\frac{1}{f(t_0)}(\partial_\theta+iA).
\end{equation}
The operator \(\nabla_\mathcal{U}\) is skew-adjoint on \(L^2(Y_{t_0};S_{Y_{t_0}})\), so we introduce the Hermitian tangential operator
\begin{equation}
\mathcal{D}_{t_0} =-\,i\,\nabla_\mathcal{U}=\frac{1}{f(t_0)}\bigl(-i\partial_\theta + A\bigr),
\end{equation}
which has a real spectrum. We define the intrinsic boundary Dirac operator by
\begin{equation} \label{eq:30}
B_{t_0} =c(\mathcal{U})\,\mathcal{D}_{t_0}.
\end{equation} 
With the explicit expression for \(c(\mathcal{U})\) from \eqref{eq:signs}, this becomes
\begin{equation} \label{eq:31}
B_0=\frac{1}{f(0)}\,\sigma_3\bigl(-i\partial_\theta + A\bigr),\qquad
B_T=-\frac{1}{f(T)}\,\sigma_3\bigl(-i\partial_\theta + A\bigr).
\end{equation}
Each \(B_{t_0}\) is self-adjoint on \(L^2(Y_{t_0};S_{Y_{t_0}})\) with the natural domain
\(H^1(Y_{t_0};S_{Y_{t_0}})\).

We obtain the explicit mode matrices
\begin{equation} \label{eq:34}
B_0=\frac{m}{f(0)}\begin{pmatrix}1&0\\0&-1\end{pmatrix},\qquad
B_T=-\frac{m}{f(T)}\begin{pmatrix}1&0\\0&-1\end{pmatrix},
\end{equation} 
with \(m=k+A\). In the invertible case relevant to the APS discussion below, henceforth we assume \(m\neq 0\). Therefore, the boundary kernel is trivial, and both \(B_0\) and \(B_T\) are invertible.
The boundary eigenvalues are real and given by
\begin{equation}
\label{eq:mu}
\mu_{k,\pm}(t_0)=\pm \frac{m}{f(t_0)}.
\end{equation}
The assignment of the positive eigenvalue to the upper or lower spinor component is reversed between $t=0$ and $t=T$.

\subsubsection{APS boundary conditions}\label{subsec:aps-bc}
Let \(P_{>0}(B_{t_0})\) denote the \(L^2\)-orthogonal projection onto the direct sum of eigenspaces of \(B_{t_0}\) with positive eigenvalues. Since we are working in the case \(m\neq 0\), the boundary operators are invertible and \(P_{>0}(B_{t_0})=P_{\ge 0}(B_{t_0})\). The APS boundary condition is
\begin{equation}
P_{>0}(B_0)\bigl(\psi|_{t=0}\bigr)=0,\qquad
P_{>0}(B_T)\bigl(\psi|_{t=T}\bigr)=0.
\end{equation}
Since \(B_0\) and \(B_T\) are diagonal in \eqref{eq:34}, the APS boundary condition reduces to a single component constraint for each
mode \(m\neq 0\). Explicitly:
\begin{itemize}
\item If \(m>0\), then \(\mu > 0\) corresponds to the upper component when \(t=0\) and to the lower component when
\(t=T\), hence
\begin{equation}
u(0)=0,\qquad v(T)=0.
\end{equation}
\item If \(m<0\), then \(\mu > 0\) corresponds to the lower component when \(t=0\) and to the upper component when
\(t=T\), hence
\begin{equation}
v(0)=0,\qquad u(T)=0.
\end{equation}
\end{itemize}

With these boundary conditions imposed at both ends, each Fourier mode defines a Fredholm boundary equation for the bulk. The compatibility of the bulk solutions with the APS constraints determines the spectral condition and hence the spectrum.

\subsection{Bulk APS spectrum }\label{subsec:bulk-spectrum}
We introduce first-order differential operators
\begin{equation}
\mathcal{A}_{+} =\partial_t+\frac{f'}{2f}+\frac{m}{f},\qquad
\mathcal{A}_{-} =\partial_t+\frac{f'}{2f}-\frac{m}{f},
\end{equation}
so that the equations \eqref{eq:uv_system} become \(\mathcal{A}_+v=-i\lambda u\) and \(\mathcal{A}_-u=-i\lambda v\).

\subsubsection{Decoupling and reconstruction}\label{subsec:decouple-reconstruct}
Eliminating \(v\) yields the decoupled second-order equation for \(u\),
\begin{equation}\label{eq:u-decoupled}
\mathcal{A}_+\mathcal{A}_-u+\lambda^2 u=0.
\end{equation}
Similarly, eliminating \(u\) yields \(\mathcal{A}_-\mathcal{A}_+v+\lambda^2 v=0\). For \(\lambda\neq 0\), the second component can be
reconstructed from \(u\) using \eqref{eq:uv_system}:
\begin{equation}\label{eq:v-from-u}
v(t)=\frac{i}{\lambda}\Big(u'(t)+\Big(\frac{f'(t)}{2f(t)}-\frac{m}{f(t)}\Big)u(t)\Big),  \qquad \lambda \neq 0.
\end{equation}
Any nontrivial solution of \eqref{eq:u-decoupled} determines a unique spinor solution of
\eqref{eq:uv_system} after fixing an overall normalization.

\subsubsection{APS boundary conditions and boundary constraints}\label{subsec:aps-constraints}
The APS boundary condition imposes one
scalar constraint at each boundary component, and in the present basis, it takes the simple mode-by-mode form:
\begin{equation}
m>0:\quad u(0)=0,\ \ v(T)=0,
\qquad
m<0:\quad v(0)=0,\ \ u(T)=0.
\end{equation}

Alternatively, there exist linear functionals
\begin{equation}
b_0,b_T\in (\mathbb C^2)^\ast
\end{equation}
such that
\begin{equation}
b_0\bigl(\psi(0,\lambda)\bigr)=0,
\qquad
b_T\bigl(\psi(T,\lambda)\bigr)=0.
\end{equation}
With respect to the basis of the dual space \((\mathbb C^2)^\ast\), these functionals are
\begin{subequations}
\begin{align}
m>0:\quad & b_0=(1\ \ 0),\qquad b_T=(0\ \ 1), \\
m<0:\quad & b_0=(0\ \ 1),\qquad b_T=(1\ \ 0).
\end{align}
\end{subequations}

\begin{proposition}[Modewise APS operator]
Fix an allowed Fourier mode \(k\) and assume \(m=k+A\neq 0\). Let
\begin{equation}
D_k=
i\begin{pmatrix}
0 & \mathcal A_+\\
\mathcal A_- & 0
\end{pmatrix}
\end{equation}
act on $ \mathfrak H_k=L^2([0,T],f(t)\,dt;\C^2) $
with domain
\begin{equation}
\Dom(D_{k,\APS})=
\begin{cases}
\left\{\binom{u}{v}\in \mathfrak W_k : u(0)=0,\ v(T)=0\right\}, & m>0,\\
\left\{\binom{u}{v}\in \mathfrak W_k : v(0)=0,\ u(T)=0\right\}, & m<0.
\end{cases}
\end{equation}
Then \(D_{k,\APS}\) is a self-adjoint operator on \(\mathfrak H_k\) with compact resolvent. In particular, its spectrum is real, discrete, and of finite multiplicity.
\end{proposition}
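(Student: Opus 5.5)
The plan is to remove the weight, check symmetry by integration by parts, and get self-adjointness and compactness from an explicit resolvent.

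\textbf{Step 1 (unitary reduction).} Define
\[
U:\mathfrak H_k\to L^2([0,T];\C^2),\qquad U\psi=f^{1/2}\psi .
\]
Since \(f>0\) is smooth on \([0,T]\), \(U\) is unitary and maps \(\mathfrak W_k\) onto \(H^1([0,T];\C^2)\). It also preserves the pointwise boundary conditions, because \(f(0),f(T)>0\). A direct computation gives
\[
f^{1/2}\mathcal A_\pm f^{-1/2}=\partial_t\pm \frac{m}{f}.
\]
Hence \(\tilde D_k=UD_kU^{-1}\) has off-diagonal entries \(i(\partial_t\pm m/f)\). This is the unweighted operator, with a bounded symmetric potential and the same boundary conditions.

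\textbf{Step 2 (symmetry).} Take \(\psi=(u,v)\) and \(\phi=(p,q)\) in the domain. Integrating by parts,
\[
\langle \tilde D_k\psi,\phi\rangle-\langle \psi,\tilde D_k\phi\rangle
= i\bigl[v\bar p+u\bar q\bigr]_0^T .
\]
The multiplication terms \(\pm m/f\) are real, so they cancel. For \(m>0\) the conditions \(u(0)=v(T)=0\) and \(p(0)=q(T)=0\) kill every term. The case \(m<0\) is the same with the roles of the components exchanged. So \(\tilde D_k\) is symmetric.

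\textbf{Step 3 (surjectivity of \(\tilde D_k\mp i\)).} This is the main step; it is what shows the adjoint domain is no larger than the stated domain. Write \(\tilde D_k\psi-z\psi=g\) with \(z=\pm i\) as a first-order linear system \(\psi'=M(t)\psi+h\), where \(M\) is continuous. Let \(\Phi(t)\) be its fundamental matrix. Solutions form an affine two-parameter family, and the two boundary conditions give a \(2\times2\) linear system for the initial data. Its solvability for all \(g\) is equivalent to the absence of a nontrivial homogeneous solution satisfying both conditions, that is, to \(z\) not being an eigenvalue. That holds because \(z\) is non-real and \(\tilde D_k\) is symmetric: an eigenvector would give \(z\|\psi\|^2\in\R\).

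The solution lies in \(H^1\) because \(h\in L^2\). Hence \(\Ran(\tilde D_k\mp i)=L^2\), and by the basic criterion \(\tilde D_k\) is self-adjoint. Equivalently, one can compute \(\Dom(\tilde D_k^*)\) directly: elements of the adjoint domain are \(H^1\) by the ODE, and the boundary form then forces the stated conditions. Either route should work.

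\textbf{Step 4 (compact resolvent and conclusion).} The resolvent \((\tilde D_k-i)^{-1}\) maps \(L^2\) boundedly into \(H^1([0,T];\C^2)\); this follows from the closed graph theorem, or directly from the Green's kernel representation, which is continuous and bounded on \([0,T]^2\). The embedding \(H^1\hookrightarrow L^2\) on a bounded interval is compact by Rellich, so the resolvent is compact. The spectral theorem for self-adjoint operators with compact resolvent then gives a real, discrete spectrum with finite multiplicities; in fact each eigenspace has dimension at most two, since it solves a two-dimensional ODE. Conjugating back by \(U\) transfers all of this to \(D_{k,\APS}\) on \(\mathfrak H_k\).
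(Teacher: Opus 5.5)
Your proof is correct. It shares the paper's outer skeleton: symmetry by integration by parts, then compact resolvent from an $H^1$ bound plus Rellich. Where it differs is how self-adjointness is obtained.

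The paper works directly in the weighted space. After checking that the boundary form vanishes, it observes that the two-dimensional trace subspace of $\C^2\oplus\C^2$ cut out by the APS conditions is maximal isotropic. It then appeals to the general correspondence between such subspaces and self-adjoint realizations, with citations.

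You instead conjugate away the weight and verify the basic criterion $\Ran(\tilde D_k\mp i)=L^2$ by hand. Variation of constants turns the boundary value problem into a square $2\times 2$ system. That system is invertible exactly when $\pm i$ is not an eigenvalue, which symmetry rules out.

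Your route is self-contained. It does not rely on the fact, implicit in the paper's citation, that the maximal domain on a compact interval is $H^1$ with surjective endpoint trace. It also yields the resolvent explicitly, so the $L^2\to H^1$ bound is immediate. The paper's route is shorter, and it sets up the Lagrangian language that is reused for the regularized family in Section~\ref{sec:maslov}.

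Two small remarks.
\begin{enumerate}
\item Your boundary form $i[v\bar p+u\bar q]_0^T$ is the correct one; in weighted variables it reads $i[f(v\bar p+u\bar q)]_0^T$. The paper's displayed formula has the wrong sign on the $v\bar p$ terms: as written, it would be real when $\nu=\chi$, whereas the left side is purely imaginary. This is harmless, since every term is killed by the APS conditions.
\item The Green's kernel of a first-order system jumps across the diagonal $s=t$. It is therefore bounded and piecewise continuous, not continuous on $[0,T]^2$. Boundedness alone makes the resolvent Hilbert--Schmidt, so your conclusion is unaffected.
\end{enumerate}
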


\begin{proof}
Since \(f\in C^\infty([0,T])\) and \(f(t)>0\) on \([0,T]\), the weighted Sobolev space
\begin{equation}
\mathfrak W_k=H^1([0,T],f(t)\,dt;\C^2)
\end{equation}
coincides with \(H^1([0,T];\C^2)\) with the same norms. Let
\begin{equation}
\chi =\binom{u}{v},\qquad \nu=\binom{p}{q}\in \mathfrak W_k.
\end{equation}
A direct integration by parts in the weighted inner product on \(\mathfrak H_k\) gives
\begin{equation}
\langle D_k\chi,\nu\rangle_{\mathfrak H_k}
-
\langle \chi,D_k\nu\rangle_{\mathfrak H_k}
=
i f(0)\bigl(v(0)\overline{p(0)}-u(0)\overline{q(0)}\bigr)
-
i f(T)\bigl(v(T)\overline{p(T)}-u(T)\overline{q(T)}\bigr).
\end{equation}
If \(m>0\), the APS boundary conditions are
\begin{equation}
u(0)=0,\qquad v(T)=0,
\end{equation}
and the boundary form vanishes on \(\Dom(D_{k,\APS})\). If \(m<0\), the APS boundary conditions are $
v(0)=0,\qquad u(T)=0,$ and again the boundary form vanishes on \(\Dom(D_{k,\APS})\). Thus \(D_{k,\APS}\) is symmetric.

The associated boundary trace space is a two-dimensional subspace of $ \C^2\oplus \C^2,$
and in each case, it is maximal isotropic for the full boundary form above. Therefore, the symmetric operator is self-adjoint; see, for example, \cite{BoossWojciechowski,BaerBallmann2012}.

Finally, on the compact interval \([0,T]\), the graph norm of \(D_{k,\APS}\) is equivalent to the \(H^1\)-norm on the domain, and the embedding
\begin{equation}
H^1([0,T];\C^2)\hookrightarrow L^2([0,T];\C^2)
\end{equation}
is compact. Hence \(D_{k,\APS}\) has compact resolvent. The spectral conclusions follow.
\end{proof}

\subsubsection{Determinant formulation of the spectral condition}\label{subsec:determinant-spectrum}
Let \(\psi_1(t,\lambda)\) and \(\psi_2(t,\lambda)\) be two linearly independent solutions of
\eqref{eq:uv_system} on \([0,T]\). Then any solution can be written in the form
\begin{equation}
\psi(t,\lambda)=C_1\psi_1(t,\lambda)+C_2\psi_2(t,\lambda).
\end{equation}
Imposing the two APS constraints gives a homogeneous linear system for \((C_1,C_2)\), and a nontrivial solution exists if and only if the associated \(2\times 2\) boundary matrix is singular. Accordingly, we define
\begin{equation}\label{eq:Fk-def}
F_k(\lambda)=\det\begin{pmatrix}
b_0\psi_1(0,\lambda) & b_0\psi_2(0,\lambda)\\
b_T\psi_1(T,\lambda) & b_T\psi_2(T,\lambda)
\end{pmatrix}.
\end{equation}

\begin{proposition}[Determinant characterization of the modewise APS spectrum]
Fix an allowed Fourier mode \(k\) and assume \(m=k+A\neq 0\). Then
\begin{equation}
\lambda\in\Spec_{\mathrm{APS}}(k)
\quad\Longleftrightarrow\quad
F_k(\lambda)=0.
\end{equation}
Moreover, the zero set of \(F_k\) is independent of the chosen fundamental pair
\((\psi_1,\psi_2)\).
\end{proposition}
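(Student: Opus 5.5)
The plan is to identify eigenfunctions of \(D_{k,\APS}\) with solutions of the first-order system \eqref{eq:uv_system} that satisfy the two scalar constraints \(b_0\psi(0)=0\), \(b_T\psi(T)=0\). We then translate this into a linear-algebra condition on the coefficients \((C_1,C_2)\).

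First we establish the regularity and reduction step. Suppose \(\lambda\in\Spec_{\APS}(k)\). By the Modewise APS operator proposition, the spectrum is purely discrete, so \(\lambda\) is an eigenvalue: there is \(0\neq\psi\in\Dom(D_{k,\APS})\subset H^1([0,T];\C^2)\) with \(D_k\psi=\lambda\psi\). Write the system as \(\psi'=M(t,\lambda)\psi\), with coefficient matrix \(M\) smooth in \(t\) (because \(f>0\) is smooth) and entire in \(\lambda\). An \(H^1\) solution in the weak sense is absolutely continuous, and bootstrapping makes it smooth. It is therefore a classical solution of \eqref{eq:uv_system} on \([0,T]\), and its traces are pointwise values. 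By existence and uniqueness for linear ODEs, the solution space is two-dimensional and spanned by the fundamental pair \(\psi_1,\psi_2\). Hence \(\psi=C_1\psi_1+C_2\psi_2\) with \((C_1,C_2)\neq 0\).

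The APS constraints, which we rewrote earlier as \(b_0\psi(0)=0\) and \(b_T\psi(T)=0\), state exactly that the boundary matrix in \eqref{eq:Fk-def} annihilates \((C_1,C_2)^{\top}\). A nonzero kernel forces \(F_k(\lambda)=0\). Conversely, if \(F_k(\lambda)=0\), take a nonzero kernel vector \((C_1,C_2)\). The combination \(\psi=C_1\psi_1+C_2\psi_2\) is then nonzero, because \(\psi_1,\psi_2\) are linearly independent as functions; equivalently, their Wronskian matrix \(\Phi(t)=[\psi_1\ \psi_2]\) is invertible at every \(t\) by Liouville's formula. This \(\psi\) is smooth, hence in \(\mathfrak W_k\), and it satisfies the APS conditions, so it lies in \(\Dom(D_{k,\APS})\) with \(D_k\psi=\lambda\psi\). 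The one point needing care is that \(\lambda\) ranges over reals. Self-adjointness guarantees that nothing is lost on the spectral side, and any complex zeros of \(F_k\) would give eigenvectors of a self-adjoint operator with non-real eigenvalue, which is impossible. So it does no harm to read the equivalence over \(\C\).

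For independence of the fundamental pair, let \((\tilde\psi_1,\tilde\psi_2)\) be another fundamental pair at the same \(\lambda\). Then \([\tilde\psi_1\ \tilde\psi_2]=[\psi_1\ \psi_2]\,G\) for a constant invertible \(G=G(\lambda)\in GL_2(\C)\). This holds because both matrices are fundamental matrices of the same linear system, and \(G=\Phi(0)^{-1}\tilde\Phi(0)\). The boundary matrix is linear in the pair, so it transforms by right multiplication by \(G\), and \(\tilde F_k(\lambda)=\det G(\lambda)\,F_k(\lambda)\) with \(\det G\neq0\). The zero sets therefore coincide. Alternatively, this follows directly from the first part, since the right-hand side \(\lambda\in\Spec_{\APS}(k)\) does not depend on the pair.

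I expect the main obstacle to be mostly bookkeeping. We must justify that domain elements solving \(D_k\psi=\lambda\psi\) weakly are classical solutions, so that pointwise boundary evaluation and the two-dimensional solution space apply. We must also be explicit that the fundamental pair is chosen at each fixed \(\lambda\), for example normalized by \(\Phi(0,\lambda)=I\) to get analytic dependence, so that \(F_k\) is a well-defined function.
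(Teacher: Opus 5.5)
Your proposal is correct and follows the paper's proof. You expand an arbitrary solution in the fundamental pair, note that the two APS constraints form a homogeneous \(2\times 2\) system whose determinant is \(F_k(\lambda)\), and show \(\widetilde F_k(\lambda)=\det G(\lambda)\,F_k(\lambda)\) under a change of pair; your extra points (that \(H^1\) eigenfunctions are classical solutions, so pointwise traces make sense, and that self-adjointness rules out non-real zeros of \(F_k\)) are sound details the paper leaves implicit.
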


\begin{proof}
Any solution of \eqref{eq:uv_system} can be written uniquely as
\begin{equation}
\psi(t,\lambda)=C_1\psi_1(t,\lambda)+C_2\psi_2(t,\lambda).
\end{equation}
Imposing the APS boundary conditions at \(t=0\) and \(t=T\) gives a homogeneous linear system in \((C_1,C_2)\). A nontrivial solution exists if and only if the associated boundary matrix is singular, which is exactly the condition \(F_k(\lambda)=0\).

If another fundamental pair is obtained by
\begin{equation}
(\widetilde\psi_1,\widetilde\psi_2)=(\psi_1,\psi_2)M(\lambda),
\qquad
M(\lambda)\in\mathrm{GL}_2(\mathbb C),
\end{equation}
then
\begin{equation}
\widetilde F_k(\lambda)=\det(M(\lambda))\,F_k(\lambda),
\end{equation}
so the zero set is unchanged.
\end{proof}

In components, this reads:
\begin{subequations}
\begin{align}
m>0:\quad & F_k(\lambda)=u_1(0,\lambda)v_2(T,\lambda)-u_2(0,\lambda)v_1(T,\lambda), \\
m<0:\quad & F_k(\lambda)=v_1(0,\lambda)u_2(T,\lambda)-v_2(0,\lambda)u_1(T,\lambda).
\end{align}
\end{subequations}
%----------------------------
\subsubsection{Bulk Spectrum} \label{subsec:bulk_spectrum}
For a fixed Fourier mode $k$, let $m=k+A$ and consider the reduced first-order Dirac system \eqref{eq:uv_system}. 
To determine the APS spectrum numerically, we solve this system with the left APS boundary condition imposed at $t=0$ and define a scalar terminal residual at $t=T$.  When $m>0$ we impose the left APS condition $u(0)=0$ and define
\begin{equation}
F_k(\lambda)=v(T;\lambda),
\end{equation}
whereas when $m<0$ we impose the left APS condition $v(0)=0$ and define
\begin{equation}
F_k(\lambda)= u(T;\lambda).
\end{equation}
Thus, $F_k(\lambda)=0$ is exactly the remaining APS boundary condition, so the zeros of $F_k$ coincide with the APS eigenvalues in mode $k$. 
The plotted curve in \autoref{fig:F_k} is therefore not itself an eigenvalue branch, but a modewise characteristic function whose zero set is the spectrum.

\label{plots}

\begin{figure}[t]
  \centering
  \includegraphics[width=.85\linewidth]{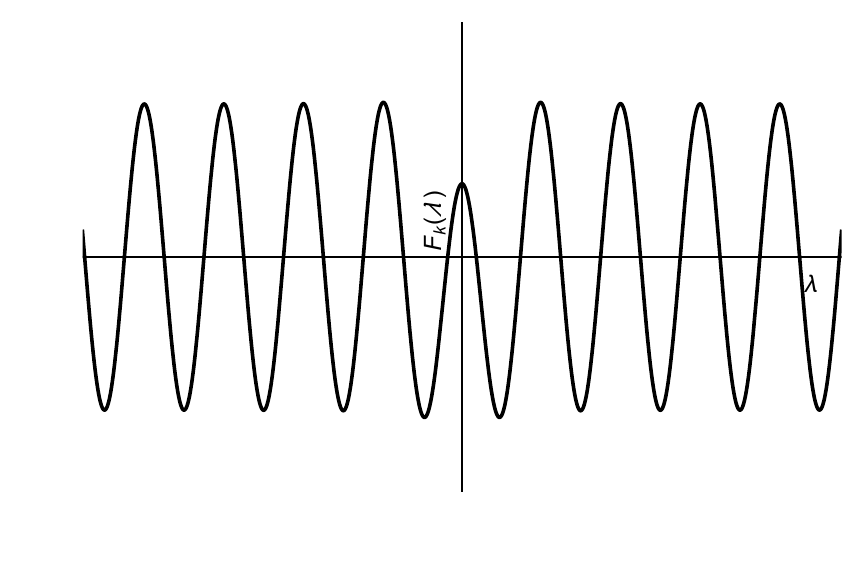}
 \caption{Modewise characteristic function $F_k(\lambda)$ for the APS boundary problem at $\alpha=1$, $A=0.3$, $k=1$, and $T=1.5$. The zeros of $F_k(\lambda)$ are the APS eigenvalues in this mode.}
  \label{fig:F_k}
\end{figure}

\begin{proposition}[Absence of zero modes in the invertible case]
\label{prop:no-zero-mode}
Fix an allowed Fourier mode $k$ and assume $m=k+A\neq 0$. Then $\lambda=0$ is
not an APS eigenvalue for the $k$-th mode.
\end{proposition}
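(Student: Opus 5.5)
At $\lambda=0$ the system \eqref{eq:uv_system} decouples completely. The plan is to solve the two scalar first-order equations explicitly and then show that the APS constraints force both components to vanish. With $\lambda=0$ the system reads $\mathcal A_+v=0$ and $\mathcal A_-u=0$, i.e.
\begin{equation*}
u'+\Bigl(\frac{f'}{2f}-\frac{m}{f}\Bigr)u=0,\qquad v'+\Bigl(\frac{f'}{2f}+\frac{m}{f}\Bigr)v=0 .
\end{equation*}
Both are linear first-order scalar ODEs with smooth coefficients on $[0,T]$, since $f>0$ there. They integrate directly. Writing $\ell(t)=\int_0^t d\tau/f(\tau)$, the general solutions are
\begin{equation*}
u(t)=u(0)\,\sqrt{\tfrac{f(0)}{f(t)}}\,e^{m\ell(t)},\qquad v(t)=v(0)\,\sqrt{\tfrac{f(0)}{f(t)}}\,e^{-m\ell(t)}.
\end{equation*}
The key structural fact is that the exponential and square-root factors never vanish. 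Consequently $u$, and likewise $v$, is either identically zero or nowhere zero on $[0,T]$. Any domain element of $D_{k,\APS}$ in the kernel lies in $H^1$, so by uniqueness for first-order ODEs (or via the weak formulation) it coincides with one of these explicit solutions.

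Next I impose the APS conditions of Proposition (Modewise APS operator), case by case in the sign of $m$.
\begin{itemize}
\item For $m>0$: the condition $u(0)=0$ gives $u\equiv0$, and the condition $v(T)=0$ gives $v\equiv0$, because a nontrivial $v$ never vanishes.
\item For $m<0$: the condition $v(0)=0$ gives $v\equiv0$, and the condition $u(T)=0$ gives $u\equiv0$.
\end{itemize}
In either case the only solution is trivial, so $0\notin\Spec_{\APS}(k)$.

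As a consistency check, one can phrase the same argument through the determinant characterization $F_k(0)\neq0$. A fundamental pair is $\psi_1=(u_*,0)^T$, $\psi_2=(0,v_*)^T$, where $u_*,v_*$ are the nowhere-vanishing solutions above. For $m>0$ this gives $F_k(0)=u_*(0)v_*(T)\neq0$, and the case $m<0$ is analogous.

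There is no real obstacle here. The only point needing care is that each APS constraint kills exactly one component, and that the constraints fall on \emph{different} components at the two ends. Note also that $m\neq0$ is used only to make the APS condition well defined; the decay or growth factor $e^{\pm m\ell}$ plays no role in the argument.
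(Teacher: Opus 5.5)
Your proposal is correct and follows the paper's proof: you solve the decoupled equations $\mathcal A_+v=0$ and $\mathcal A_-u=0$ explicitly, then note that each APS endpoint condition forces the relevant nowhere-vanishing-or-identically-zero component to vanish. Your extra remarks on $H^1$ regularity and the determinant check $F_k(0)=u_*(0)v_*(T)\neq 0$ are accurate but not needed for the argument.
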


\begin{proof}
For $\lambda=0$, the system \eqref{eq:uv_system} reduces to
\begin{equation}
\mathcal A_+v=0,\qquad \mathcal A_-u=0.
\end{equation}
Hence
\begin{equation}
v(t)=C_1 \exp\!\left(-\int_0^t\left(\frac{f'}{2f}+\frac{m}{f}\right)\,ds\right),
\qquad
u(t)=C_2 \exp\!\left(-\int_0^t\left(\frac{f'}{2f}-\frac{m}{f}\right)\,ds\right).
\end{equation}
If $m>0$, the APS conditions are $u(0)=0$ and $v(T)=0$, which force
$C_2=0$ and $C_1=0$. If $m<0$, the APS conditions are $v(0)=0$ and $u(T)=0$,
which again force $C_1=C_2=0$. Therefore, the only solution is the trivial
solution, so $\lambda=0$ is not an APS eigenvalue.
\end{proof}
%---------------------------

\section{The APS boundary correction: $\eta$-invariant and cancellation}
\label{sec:aps-boundary}
So far, we have considered the self-adjoint APS eigenvalue equation for the full Dirac operator $D$, whose modewise APS conditions constrain both components of the spinor.  Now we switch to the Fredholm APS problem for the chiral operator $D^+$. This shift is natural because the APS index theorem is formulated for the chiral Dirac operator, whereas the previous section concerned the self-adjoint spectral problem for the full Dirac operator.
\begin{equation}
D^{+}: \Gamma(S^{+}) \to \Gamma(S^{-}).
\end{equation}
For the chiral Dirac operator, the APS projector is imposed only on the $S^{+}$ boundary, while the complementary boundary condition appears in the adjoint problem for $D^{-}$. 

By \autoref{subsec:boundary-clifford}, the unit normal \(N\) is taken to be inward-pointing along each boundary component, and all boundary operators and APS projections below are understood in this convention. As in the hypersurface formalism above, the APS condition is defined from the spectral resolution of the intrinsic self-adjoint boundary Dirac operator on each boundary component. In the warped-cylinder model, these are the operators \(B_0\) and \(B_T\) in \eqref{eq:31}.

By \autoref{remark:spin_structure}, the allowed Fourier modes are
\begin{equation}
\mathcal K=\mathbb Z
\quad\text{in the periodic case,}\qquad
\mathcal K=\mathbb Z+\tfrac12
\quad\text{in the anti-periodic case.}
\end{equation}
%------------------------
\subsection{The boundary $\eta$-invariant and the reduced invariant $\xi$}
\label{subsec:boundary-eta}

Let $B$ be a self-adjoint first-order elliptic operator on a closed manifold.
For $\Re(s)\gg 1$, its $\eta$-function is defined by
\begin{equation}
\label{eq:eta-rewrite}
\eta(B,s)\;=\;\sum_{\mu\in \Spec(B)\setminus\{0\}} \sign(\mu)\,|\mu|^{-s}.
\end{equation}
It admits a meromorphic continuation to $s\in\C$ which is regular at $s=0$. We write \(\eta(B,0)\) for its value at \(s=0\).

In APS index theory, the boundary correction is expressed in terms of the reduced invariant
\begin{equation}
\label{eq:xi-def-header-rewrite}
\xi(B)\;=\;\frac{\eta(B,0)+h(B)}{2},
\qquad
h(B)=\dim\ker B.
\end{equation}
When $B$ is invertible, one has $h(B)=0$, hence,
\begin{equation}
\xi(B)=\frac{\eta(B,0)}{2}.
\end{equation}

We begin with the \(\eta\)-invariant computation for the circle operator and then return to the warped cylinder to show cancellation of the two APS boundary contributions when the gauge is constant.

\paragraph{Circle model.}
The intrinsic boundary Dirac operator is,
\begin{equation}
B_A e^{ik\theta}=(k+A)e^{ik\theta},
\qquad k\in\mathcal K.
\end{equation}
Assuming
\begin{equation}
k+A\neq 0 \qquad \text{for every } k\in\mathcal K,
\end{equation}
equivalently $0\notin\Spec(B_A)$.
We now justify the reduction to a unique parameter \(\rho\in(0,1)\). The open interval is forced by the invertibility assumption \(0\notin\Spec(B_A)\), since \(\rho=0\) or \(\rho=1\) would place \(0\) in the spectrum in the representation \eqref{eq:spec-rho-form}.

If $\mathcal K=\mathbb Z$, write
\begin{equation}
A=\ell+\rho,\qquad \ell\in\mathbb Z,\quad \rho\in(0,1).
\end{equation}
Then
\begin{equation}
\Spec(B_A)=\{n+A:n\in\mathbb Z\}
=\{n+\ell+\rho:n\in\mathbb Z\}
=\{m+\rho:m\in\mathbb Z\},
\end{equation}
after relabelling $m=n+\ell$.

If $\mathcal K=\mathbb Z+\tfrac12$, then
\begin{equation}
\Spec(B_A)=\Bigl\{n+\Bigl(A+\tfrac12\Bigr):n\in\mathbb Z\Bigr\}.
\end{equation}
Since $0\notin\Spec(B_A)$, the number $A+\tfrac12$ is not an integer. Hence, there is a unique decomposition
\begin{equation}
A+\tfrac12=\ell+\rho,\qquad \ell\in\mathbb Z,\quad \rho\in(0,1),
\end{equation}
and therefore
\begin{equation}
\Spec(B_A)=\{m+\rho:m\in\mathbb Z\},
\end{equation}
after relabelling the Fourier index.
Thus, in either spin structure, there is a unique $\rho\in(0,1)$ such that
\begin{equation}
\label{eq:spec-rho-form}
\Spec(B_A)=\{n+\rho:n\in\mathbb Z\}.
\end{equation}

Now we compute the $\eta$-function explicitly.
Using \eqref{eq:spec-rho-form}, for $\Re(s)>1$ one has absolute convergence and can split the positive and negative parts of the spectrum:
\begin{align}
\eta(B_A,s)
&=
\sum_{n+\rho>0}(n+\rho)^{-s}
-
\sum_{n+\rho<0}|n+\rho|^{-s} \notag\\
&=
\sum_{n=0}^{\infty}(n+\rho)^{-s}
-
\sum_{n=0}^{\infty}(n+1-\rho)^{-s}.
\label{eq:eta-rho-split}
\end{align}
Indeed, the negative eigenvalues are precisely
\begin{equation}
-(n+1-\rho),\qquad n=0,1,2,\dots.
\end{equation}
Therefore, for $\Re(s)>1$,
\begin{equation}
\label{eq:eta-hurwitz}
\eta(B_A,s)=\zeta(s,\rho)-\zeta(s,1-\rho),
\end{equation}
where $\zeta(s,a)$ denotes the Hurwitz zeta function. Each Hurwitz zeta function admits a meromorphic continuation to all $s\in\C$, with a simple pole at $s=1$ of residue $1$ \cite{Apostol}. Hence both $\zeta(s,\rho)$ and $\zeta(s,1-\rho)$ have the same pole at $s=1$. It follows that in the difference \eqref{eq:eta-hurwitz} the pole parts cancel. Hence \(\eta(B_A,s)\) admits a meromorphic continuation to all \(s\in\C\) and is regular at \(s=0\).

Evaluating at $s=0$ and using the standard formula
\begin{equation}
\zeta(0,a)=\frac12-a,
\qquad a\in(0,1),
\end{equation}
one obtains
\begin{equation}
\label{eq:eta-circle-rewrite}
\eta(B_A,0)
=
\Bigl(\frac12-\rho\Bigr)-\Bigl(\frac12-(1-\rho)\Bigr)
=
1-2\rho,
\end{equation}
and therefore
\begin{equation}
\label{eq:xi-circle-rewrite}
\xi(B_A)=\frac{1-2\rho}{2}.
\end{equation}

\paragraph{Relation to the actual endpoint operators.}
The operator \(B_A\) above is a normalized operator on the scalar boundary mode space over \(S^1\). After identifying the boundary components \(Y_0\) and \(Y_T\) with \(S^1\) via the coordinate \(\theta\), the actual positive-chirality endpoint operators are
\begin{equation}
B_0^+=-\frac1{f(0)}\,B_A,
\qquad
B_T^+=+\frac1{f(T)}\,B_A.
\end{equation}
Since \(f(0),f(T)>0\), positive rescaling does not change the reduced \(\eta\)-invariant:
\begin{equation}
\eta(cB,0)=\eta(B,0),\qquad \xi(cB)=\xi(B)\qquad (c>0),
\end{equation}
provided \(B\) is invertible. Thus, only the sign matters, and therefore
\begin{equation}
\xi(B_0^+)=-\xi(B_A),\qquad \xi(B_T^+)=\xi(B_A).
\end{equation}
This is the mechanism behind the cancellation of the two APS boundary contributions on the finite cylinder.
%------------------------
\subsection{Fredholm chiral APS problem and the APS index}
\label{subsec:fredholm-aps}

We now return to the chiral APS problem on the finite warped cylinder.
\begin{equation}
M=[0,T]\times S^1.
\end{equation}
Let $D^+:\Gamma(S^+)\to\Gamma(S^-)$ denote the positive-chirality part of the Dirac operator.
Because the metric and connection are $S^1$-invariant, both $D^+$ and the induced tangential boundary operators are diagonal in Fourier modes.

A spinor may be decomposed into Fourier modes as
\begin{equation}
\psi(t,\theta)=e^{ik\theta}\binom{u(t)}{v(t)},
\qquad k\in\mathcal K.
\end{equation}
In our matrix convention, $S^+$ is represented by the second component $v$, while $S^-$ is represented by the first component $u$.
The Dirac operator \eqref{eq:Diracop} takes the form,
\begin{equation}
D
=
i\begin{pmatrix}
0 & \mathcal{A}_+ \\
\mathcal{A}_- & 0
\end{pmatrix},
\qquad
\mathcal A_\pm
=
\partial_t+\frac{f'(t)}{2f(t)}\pm \frac{m}{f(t)},
\qquad
m=k+A.
\end{equation}
Thus,
\begin{equation}
D^+=i\mathcal{A}_+:\Gamma(S^+)\to\Gamma(S^-),
\qquad
D^-=i\mathcal{A}_-:\Gamma(S^-)\to\Gamma(S^+).
\end{equation}

Let $B_0^+$ and $B_T^+$ denote the tangential operators induced by $D^+$ at the boundary components $t=0$ and $t=T$, respectively. Similarly, let $B_0^-$ and $B_T^-$ denote the tangential operators induced by $D^-$. Under natural identification of both boundary circles with $S^1$, the change in boundary orientation together with the rescaling of the tangential metric implies that the endpoint operators differ by a sign and a positive constant factor.

The operators \(B_{t_0}\) introduced earlier act on the full boundary spinor bundle
\begin{equation}
S_{Y_{t_0}}=S^+_{Y_{t_0}}\oplus S^-_{Y_{t_0}}.
\end{equation}
In the chiral APS problem, we write \(B_{t_0}^\pm\) for the restrictions of \(B_{t_0}\) to the corresponding chiral summands.

\paragraph{Chiral decomposition of the boundary operator.}
Since
\begin{equation}
S_{Y_{t_0}}=S^-_{Y_{t_0}}\oplus S^+_{Y_{t_0}},
\end{equation}
and in our convention \(S^-\) is the upper component while \(S^+\) is the lower component, the full boundary operator decomposes diagonally as
\begin{equation}
B_{t_0}
=
\begin{pmatrix}
B_{t_0}^- & 0\\
0 & B_{t_0}^+
\end{pmatrix}.
\end{equation}
Thus \(B_{t_0}^\pm\) are simply the restrictions of \(B_{t_0}\) to the corresponding chiral summands. In particular, the modewise eigenvalues of \(B_{t_0}^\pm\) are read off directly from \eqref{eq:34}, and therefore
\begin{equation}
\label{eq:B0BTpm-rewrite}
B_0^+(k)=-\frac{m}{f(0)},
\qquad
B_T^+(k)=+\frac{m}{f(T)},
\qquad
B_0^-(k)=+\frac{m}{f(0)},
\qquad
B_T^-(k)=-\frac{m}{f(T)}.
\end{equation}
Here
\begin{equation}
U:L^2(Y_0;S_{Y_0})\longrightarrow L^2(Y_T;S_{Y_T})
\end{equation}
denotes the unitary induced by the identification of both boundary components with
\(S^1\) via the coordinate \(\theta\); in the chosen trivialization, it acts by
\begin{equation}
(U\phi)(\theta)=\phi(\theta).
\end{equation}
Hence, after identifying the two boundary Hilbert spaces by the unitary map $U$, the endpoint operators satisfy
\begin{equation}
\label{eq:BT-cB0-unitary}
B_T^\pm=-\,c\,U B_0^\pm U^{-1},
\qquad
c=\frac{f(0)}{f(T)}>0.
\end{equation}

\paragraph{Global APS domain.}

The APS chiral operator is,
\begin{equation}
D^+_{\APS}:\Dom(D^+_{\APS})\subset H^1(M;S^+)\longrightarrow L^2(M;S^-),
\end{equation}
with domain
\begin{equation}
\label{eq:global-aps-domain}
\Dom(D^+_{\APS})
=
\Bigl\{
\psi\in H^1(M;S^+):
P_{>0}(B_0^+)\bigl(\psi|_{t=0}\bigr)=0,\;
P_{>0}(B_T^+)\bigl(\psi|_{t=T}\bigr)=0
\Bigr\}.
\end{equation}
APS theory implies that the operator \(D^+_{\APS}\) is Fredholm.

For the positive-chirality \(k\)th Fourier mode, we write
\begin{equation}
\mathfrak H_k^+=L^2([0,T],f(t)\,dt;\C),\qquad
\mathfrak W_k^+=H^1([0,T],f(t)\,dt;\C).
\end{equation}

\medskip
\noindent\textbf{Modewise description.}
Due to $S^1$-invariance, $D^+_{\APS}$ decomposes into Fourier modes.
On the $k$-th mode, one obtains the scalar operator
\begin{equation}
D^+_{\APS,k}=i\mathcal{A}_+:
\Dom(D^+_{\APS,k})\subset \mathfrak W_k^+\to \mathfrak H_k^+,
\end{equation}
acting on the $v$-component.
For $m\neq 0$, the APS boundary condition becomes,
\begin{equation}
m>0:\quad v(T)=0 \ \text{(no condition at }0),
\qquad
m<0:\quad v(0)=0 \ \text{(no condition at }T).
\end{equation}
The adjoint APS conditions for $D^-$ impose the complementary endpoint conditions on $u$:
\begin{equation}
m>0:\quad u(0)=0,
\qquad
m<0:\quad u(T)=0.
\end{equation}

Thus, the chiral APS conditions are compatible with the two-component self-adjoint operator used in \autoref{subsec:aps-constraints}; modewise, they impose complementary endpoint conditions on $u$ and $v$.

%------------------------
\subsubsection{APS index formula and cancellation for constant gauge}
\label{subsubsec:eta-cancellation}

\paragraph{APS index formula.}
By the Atiyah-Patodi-Singer index theorem for Dirac operators on manifolds with boundary
(see \cite[Theorem~4.2]{APS1}; for background on conventions, see also \cite{LawsonMichelsohn,BGV}), the index is given by
\begin{equation}
\label{eq:APSindex}
\ind(D^+_{\APS})
=
\frac{i}{2\pi}\int_M F^{\nabla^E}
-\xi(B_0^+)-\xi(B_T^+).
\end{equation}
In constant gauge, the connection is flat, so \(F^{\nabla^E}=0\).

\paragraph{Geometric relation between the two endpoint operators.}
Equation \eqref{eq:BT-cB0-unitary} shows that, after identifying the two boundary circles, the tangential operators at $t=0$ and $t=T$ are related by a sign and a positive scale factor:
\begin{equation}
B_T^\pm=-\,c\,U B_0^\pm U^{-1},
\qquad
c=\frac{f(0)}{f(T)}>0.
\end{equation}
In particular, $B_T^+$ is unitarily equivalent to $-c\,B_0^+$.

\begin{lemma}[Sign/scale flip implies $\xi$-cancellation]
\label{lem:xi-sign-scale}
Let $B$ be a self-adjoint elliptic operator with discrete spectrum, and assume that $B$ is invertible.
If $c>0$, then
\begin{equation}
\eta(cB,0)=\eta(B,0),
\qquad
\eta(-B,0)=-\eta(B,0).
\end{equation}
Consequently, if
\begin{equation}
B_T=-\,c\,U B_0 U^{-1},
\qquad c>0,
\end{equation}
for some unitary $U$, then
\begin{equation}
\label{eq:xi-cancel-rewrite}
\xi(B_T)=-\xi(B_0),
\qquad\text{and hence}\qquad
\xi(B_0)+\xi(B_T)=0.
\end{equation}
\end{lemma}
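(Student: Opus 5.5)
The plan is to argue directly at the level of the $\eta$-function for $\Re(s)$ large and then pass to $s=0$ by uniqueness of meromorphic continuation. Since $B$ is invertible with discrete spectrum, $\Spec(cB)=\{c\mu:\mu\in\Spec(B)\}$ and $\Spec(-B)=\{-\mu:\mu\in\Spec(B)\}$, with multiplicities preserved and no zero eigenvalue created. Hence, for $\Re(s)$ large (where the series \eqref{eq:eta-rewrite} converges absolutely), I will compute termwise:
\begin{equation}
\eta(cB,s)=\sum_{\mu}\sign(c\mu)\,|c\mu|^{-s}=c^{-s}\,\eta(B,s),
\qquad
\eta(-B,s)=\sum_{\mu}\sign(-\mu)\,|\mu|^{-s}=-\eta(B,s).
\end{equation}
These are identities between holomorphic functions on a right half-plane. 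Since $c^{-s}=e^{-s\log c}$ is entire, both identities extend to the meromorphic continuations. As $\eta(B,s)$ is regular at $s=0$ (as assumed in \autoref{subsec:boundary-eta}), setting $s=0$ gives $\eta(cB,0)=\eta(B,0)$ and $\eta(-B,0)=-\eta(B,0)$.

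Next I will use unitary invariance. If $U$ is unitary, then $UB_0U^{-1}$ has the same spectrum with the same multiplicities as $B_0$. Hence $\eta(UB_0U^{-1},s)=\eta(B_0,s)$ identically, and $\dim\ker$ is preserved. Combining the two steps with $B_T=-c\,UB_0U^{-1}$ gives
\begin{equation}
\eta(B_T,0)=-\eta(cUB_0U^{-1},0)=-\eta(UB_0U^{-1},0)=-\eta(B_0,0).
\end{equation}
Because $B_0$ is invertible, so is $B_T$. Then $h(B_0)=h(B_T)=0$, and by \eqref{eq:xi-def-header-rewrite} we have $\xi(B_T)=\eta(B_T,0)/2=-\eta(B_0,0)/2=-\xi(B_0)$. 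Adding $\xi(B_0)$ to both sides yields \eqref{eq:xi-cancel-rewrite}.

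The only delicate point is the justification for passing from the half-plane identity to $s=0$. The identity must hold between continuations and not merely formally, and regularity at $0$ is needed on both sides. Uniqueness of analytic continuation on the connected domain of holomorphy handles the first issue, and the entire factor $c^{-s}$ introduces no poles, so the second follows from regularity of $\eta(B,s)$. I also note that invertibility is essential. Otherwise $h(-B)=h(B)$ does not flip sign, and the $\xi$-identity would acquire a correction $h(B)$.
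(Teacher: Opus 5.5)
Your proposal is correct and follows essentially the same route as the paper's proof. It uses the half-plane identities $\eta(cB,s)=c^{-s}\eta(B,s)$ and $\eta(-B,s)=-\eta(B,s)$ continued to $s=0$, invariance of the spectrum under unitary conjugation, and $h=0$ to get $\xi=\eta(\cdot,0)/2$; your added remarks on uniqueness of the continuation and on needing regularity of $\eta(B,s)$ at $s=0$ (so that $c^{-s}$ produces no $\log c$ correction) spell out what the paper leaves implicit.
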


\begin{proof}
When, $\Re(s)\gg 1$
\begin{equation}
\eta(cB,s)=c^{-s}\eta(B,s),
\end{equation}
so by meromorphic continuation $\eta(cB,0)=\eta(B,0)$.
Similarly,
\begin{equation}
\eta(-B,s)=-\eta(B,s),
\end{equation}
hence $\eta(-B,0)=-\eta(B,0)$.
Unitary conjugation does not change the spectrum, so
\begin{equation}
\eta(U B U^{-1},0)=\eta(B,0).
\end{equation}
If $B$ is invertible, then $h(B)=0$ and therefore $\xi(B)=\eta(B,0)/2$.
Applying these facts to $B_T=-c\,U B_0 U^{-1}$ yields \eqref{eq:xi-cancel-rewrite}.
\end{proof}

\begin{proposition}[Vanishing of the APS boundary correction for constant gauge]
\label{prop:aps-cancellation-constant-gauge}
Assume that $k+A\neq 0$ for every $k\in\mathcal K$. Then the endpoint operators $B_0^+$ and $B_T^+$ are invertible, and
\begin{equation}
\xi(B_0^+)+\xi(B_T^+)=0.
\end{equation}
Consequently,
\begin{equation}
\ind(D^+_{\APS})=0.
\end{equation}
\end{proposition}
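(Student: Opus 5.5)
The plan is to combine the explicit spectra of the endpoint operators with Lemma~\ref{lem:xi-sign-scale} and the APS index formula \eqref{eq:APSindex}.

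\emph{Invertibility.} By \eqref{eq:B0BTpm-rewrite}, on the $k$th Fourier mode $B_0^+$ acts by $-m/f(0)$ and $B_T^+$ by $+m/f(T)$, where $m=k+A$. Since $f>0$ and $k+A\neq 0$ for every $k\in\mathcal K$, no eigenvalue vanishes. Equivalently, as shown in the circle model, $\Spec(B_A)=\{n+\rho:n\in\mathbb Z\}$ with $\rho\in(0,1)$. This spectrum is bounded away from $0$, so $B_0^+$ and $B_T^+$ are invertible and $h(B_0^+)=h(B_T^+)=0$.

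\emph{Cancellation of the $\xi$-terms.} Use the identity $B_T^+=-c\,UB_0^+U^{-1}$ with $c=f(0)/f(T)>0$ from \eqref{eq:BT-cB0-unitary}. We can check it modewise: $-c\cdot(-m/f(0))=m/f(T)$, and $U$ preserves the Fourier index. Lemma~\ref{lem:xi-sign-scale} then gives $\xi(B_T^+)=-\xi(B_0^+)$ directly.

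As a cross-check, the circle computation gives the explicit values. From $B_0^+=-f(0)^{-1}B_A$, $B_T^+=f(T)^{-1}B_A$ and \eqref{eq:xi-circle-rewrite},
\[
\xi(B_0^+)=-\tfrac{1-2\rho}{2},\qquad \xi(B_T^+)=\tfrac{1-2\rho}{2},
\]
and these sum to zero. The $\eta$-functions involved are the Hurwitz differences \eqref{eq:eta-hurwitz}. These are regular at $s=0$, which justifies the continuation step in the lemma.

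\emph{Index.} In constant gauge $\nabla^E=d+iA\,d\theta$, so $F^{\nabla^E}=i\,dA\wedge d\theta=0$. The bulk term in \eqref{eq:APSindex} therefore vanishes, and the index equals $-(\xi(B_0^+)+\xi(B_T^+))=0$.

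As an independent sanity check, we can compute the kernels modewise using the boundary conditions of the modewise description. For $m>0$, the kernel of $D^+_{\APS,k}$ consists of solutions of $\mathcal A_+v=0$ with $v(T)=0$. Since $v$ is a multiple of a nowhere-vanishing exponential, $v\equiv0$. For the adjoint, $\mathcal A_-u=0$ with $u(0)=0$ forces $u\equiv 0$. The case $m<0$ is symmetric. This is the argument of Proposition~\ref{prop:no-zero-mode}, so each mode contributes index $0$, consistent with the formula.

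\emph{Main obstacle.} The only delicate point is justifying that the bulk term in \eqref{eq:APSindex} is exactly zero. The warped metric is not of product form near the boundary, so the APS theorem strictly requires either a product collar or correction terms, such as the Chern--Simons/transgression and mean-curvature contributions. I would handle this by relying on the direct modewise kernel computation, which gives index $0$ independently of any such corrections. The $\xi$-cancellation would then be presented as the structural explanation.
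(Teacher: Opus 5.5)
Your main line is exactly the paper's proof. Invertibility comes from \eqref{eq:B0BTpm-rewrite}, $\xi(B_T^+)=-\xi(B_0^+)$ comes from \eqref{eq:BT-cB0-unitary} and Lemma~\ref{lem:xi-sign-scale}, and the bulk term drops out because $F^{\nabla^E}=0$ in \eqref{eq:APSindex}. The explicit values $\mp\frac{1-2\rho}{2}$ are a valid cross-check. The gap is in the step you propose to rely on instead of the index theorem. Trivial kernel and trivial adjoint kernel in every Fourier mode give only injectivity and dense range of $D^+_{\APS}$. An index needs closed range, i.e.\ $\sup_k\|(D^+_{\APS,k})^{-1}\|<\infty$. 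With the conditions you use ($v(T)=0$ for $m>0$, $v(0)=0$ for $m<0$) this bound fails. With $\ell(t)=\int_0^t d\tau/f(\tau)$, the solution of $i\mathcal A_+v=g$, $v(T)=0$, $m>0$, is
\[
v(t)=i\int_t^T\sqrt{f(s)/f(t)}\;e^{\,m(\ell(s)-\ell(t))}\,g(s)\,ds,
\]
and its kernel reaches size $e^{m\ell(T)}$. Concretely, take $x_k=e^{ik\theta}f^{-1/2}e^{-m\ell(t)}\chi(t)$ with $\chi=1$ near $t=0$ and $\chi=0$ near $t=T$. It lies in the domain and satisfies $\|D^+x_k\|/\|x_k\|=O(\sqrt m\,e^{-cm})$ as $m\to+\infty$; the symmetric construction near $t=T$ works as $m\to-\infty$. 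So, as written, $D^+_{\APS}$ is not Fredholm, and neither your modewise count nor \eqref{eq:APSindex} yields an index.

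The cause is a sign, not the non-product collar. The collar is harmless here: conjugation by $f^{1/2}$ removes $\frac{f'}{2f}$, and in dimension two the local index density is just the curvature term. Since $D^+=i\bigl(\partial_t+\frac{f'}{2f}+f^{-1}B_A\bigr)$, the APS normal form $\sigma(\partial_u+B)$, with $u$ the inward coordinate, has $B=f(0)^{-1}B_A=-B_0^+$ at $t=0$ and $B=-f(T)^{-1}B_A=-B_T^+$ at $t=T$. Hence $P_{>0}(B_{t_0}^+)\psi|_{Y_{t_0}}=0$ admits precisely the boundary values of solutions decaying into the interior, rather than a complement to them. This undercuts the paper's claim that APS theory makes $D^+_{\APS}$ Fredholm just as much as it undercuts your argument. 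With the opposite choice ($v(0)=0$ for $m>0$, $v(T)=0$ for $m<0$), the exponential factor becomes $e^{-|m|\,|\ell(t)-\ell(s)|}\le 1$ and the modewise inverses are uniformly bounded. Your modewise computation then shows that $D^+_{\APS}$ is invertible, hence has index $0$, without any index theorem. The $\xi$-cancellation survives the flip unchanged, since negating both endpoint operators preserves $B_T^+=-c\,UB_0^+U^{-1}$.
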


\begin{proof}
By assumption, one has $m=k+A\neq 0$ for every $k\in\mathcal K$, so the modewise eigenvalues in \eqref{eq:B0BTpm-rewrite} never vanish. Thus, both $B_0^+$ and $B_T^+$ are invertible.
By \eqref{eq:BT-cB0-unitary} and Lemma~\ref{lem:xi-sign-scale},
\begin{equation}
\xi(B_T^+)=-\xi(B_0^+).
\end{equation}
Hence, the APS boundary correction cancels:
\begin{equation}
\xi(B_0^+)+\xi(B_T^+)=0.
\end{equation}
Since $F^{\nabla^E}=0$, the index formula \eqref{eq:APSindex} reduces to
\begin{equation}
\ind(D^+_{\APS})=0.
\end{equation}
\end{proof}

\begin{remark}[Wall mode and parameter variation]
\label{rem:wall-mode}
When $A$ is constant, no wall mode occurs precisely when $k+A\neq 0$ for every $k\in\mathcal K$.
The wall case $m=0$ becomes relevant when one varies $A$ (or, more generally, a boundary family) so that $A$ crosses the lattice of allowed shifts; depending on the spin structure, this lattice is $\Z$ or $\Z+\tfrac12$.
In that situation, $0\in\Spec(B)$ and the APS projection is no longer canonical on $\ker B$.
We return to this issue in the discussion of spectral flow.
\end{remark}

\begin{remark}[No APS boundary condition at infinity]
\label{rem:no-eta-at-infinity}
The cancellation above is a finite-cylinder statement: it uses the fact that $M$ has two genuine boundary components, and hence two APS boundary operators $B_0$ and $B_T$.
If one passes formally to the half-infinite cylinder $[0,\infty)\times S^1$ without imposing additional asymptotic conditions, there is no second APS boundary contribution and therefore no literal ``$\eta$-invariant at infinity'' canceling $\eta(B_0,0)$.

Consequently, throughout this paper, the phrase ``$T=\infty$'' is interpreted only as the large-$T$ limit of the finite APS problem on $[0,T]\times S^1$, not as an APS boundary condition imposed at infinity.
In particular, for a constant $A$ under the invertible assumption, the APS index vanishes for every finite $T$, and therefore also in the large-$T$ limit.
\end{remark}

%============================================================
% Section: Maslov index and bulk spectral flow 
% Conventions: inward normal flips; c(u)=+sigma3 at t=0, -sigma3 at t=T
%==========================================================
\section{Regularized boundary families, Maslov index, and zero-mode crossings}\label{sec:maslov}

Our motivation for the analysis in this section is to isolate the crossing mechanism associated with the boundary-zero set
\begin{equation}
k+A(s)=0.
\end{equation}
In the two-boundary APS setup studied in Sections~\ref{sec:2ddirac}-\ref{sec:aps-boundary}, the endpoint reduced $\eta$-invariant contributions cancel in the constant gauge invertible case, so the APS index does not directly provide a continuous framework for analyzing these boundary-zero crossings. For this reason, we introduce a regularized family of self-adjoint boundary conditions whose parameter dependence is continuous. This makes the standard spectral-flow/Maslov formalism accessible and enables direct analysis of the zero-mode crossing set.

\paragraph{Assumption.}
Throughout this section, we fix a Fourier mode \(k\in\mathcal K\) and a smoothing parameter \(\delta>0\). Unless explicitly stated otherwise, all operators and boundary conditions below refer to the regularized family \(D^{(\delta)}_{s,k}\). We assume endpoint invertibility at \(s=s_1,s_2\). Whenever a signed local crossing count is used, we assume the relevant crossings are isolated and simple.

\begin{theorem}[Zero-mode criterion for the regularized family]
\label{thm:section5-main}
Fix a mode \(k\in\mathcal K\) and assume
\begin{equation}
\delta \neq \frac{2}{\ell(T)},
\qquad
\ell(T)=\int_0^T \frac{d\tau}{f(\tau)}.
\end{equation}
Let \(\{D^{(\delta)}_{s,k}\}_{s\in[s_1,s_2]}\) denote the regularized \(k\)-mode operator family defined in \autoref{subsec:regularized_realizations_maslov}. Then
\begin{equation}
0\in\Spec\bigl(D^{(\delta)}_{s,k}\bigr)
\quad\Longleftrightarrow\quad
k+A(s)=0.
\end{equation}
Thus, zero modes of the regularized family occur exactly at the boundary zeros. If moreover
\(A'(s_*)\neq 0\) at such a point \(s_*\), then \(s_*\) is an isolated regular crossing for the regularized family, and the local spectral-flow/Maslov crossing formalism applies.
\end{theorem}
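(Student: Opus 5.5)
The plan is to reduce everything to the explicit $\lambda=0$ solutions of the mode system \eqref{eq:uv_system}, as in the proof of Proposition~\ref{prop:no-zero-mode}, and then to impose the regularized endpoint conditions of \autoref{subsec:regularized_realizations_maslov}. Write $m(s)=k+A(s)$ and $L(t)=\int_0^t d\tau/f(\tau)$, so that $L(T)=\ell(T)$. At $\lambda=0$ the system decouples into $\mathcal A_+v=0$ and $\mathcal A_-u=0$, with solutions
\begin{equation}
u(t)=C_2\,\sqrt{\tfrac{f(0)}{f(t)}}\,e^{m L(t)},\qquad
v(t)=C_1\,\sqrt{\tfrac{f(0)}{f(t)}}\,e^{-m L(t)} .
\end{equation}
Hence $0\in\Spec(D^{(\delta)}_{s,k})$ if and only if the $2\times 2$ boundary matrix obtained by applying the two regularized endpoint functionals $b^{(\delta)}_0(s)$ and $b^{(\delta)}_T(s)$ to this two-parameter family is singular. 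This is the determinant criterion of \autoref{subsec:determinant-spectrum}, specialized to $\lambda=0$. I expect the regularized functionals to interpolate continuously, through a profile in $m/\delta$, between the APS choices $(1\ 0)$ and $(0\ 1)$ at each end. I therefore first write $\Delta_\delta(m)$ for this determinant as an explicit function of $m$, $\delta$, $f(0)$, $f(T)$ and $e^{\pm m\ell(T)}$.

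\textbf{Step 1: $\Delta_\delta(0)=0$.} At $m=0$ the regularized conditions should degenerate symmetrically. In that case $u$ and $v$ are both multiples of $f^{-1/2}$, the two rows of the boundary matrix become proportional, and a kernel exists. This gives the implication ``$\Leftarrow$''.

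\textbf{Step 2: $\Delta_\delta(m)\neq0$ for $m\neq0$.} For $m\neq 0$ the expected mechanism mirrors Proposition~\ref{prop:no-zero-mode}. The regularized conditions have the same sign structure as APS, so the terms $e^{m\ell(T)}$ and $e^{-m\ell(T)}$ enter with signs that cannot cancel. I would factor
\begin{equation}
\Delta_\delta(m)=m\,g_\delta(m)
\end{equation}
and show that $g_\delta$ has no zeros. The natural route is to write $\Delta_\delta$ through $\sinh(m\ell(T))$ and $\cosh(m\ell(T))$ combined with the profile, and then use the convexity inequality $\sinh(x)/x\ge 1$, or monotonicity in $|m|$.

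\textbf{Main obstacle: the hypothesis $\delta\neq 2/\ell(T)$.} This is where I expect the hypothesis to enter. Expanding at $m=0$ should give something like $g_\delta(0)\propto 2-\delta\,\ell(T)$. This comes from balancing the linear term $m\ell(T)$ of the exponentials against the slope $\sim m/\delta$ of the regularizing profile (including the two endpoints, hence the factor $2$). So the nondegeneracy hypothesis is exactly $g_\delta(0)\neq0$. I would first compute this Taylor coefficient explicitly. Then I would prove that $g_\delta$ has a constant sign for $m\neq 0$, arguing that $g_\delta$ is even in $m$ and monotone in $|m|$; this is the step I regard as the main technical risk. If global monotonicity fails, the fallback is to combine Step 2 for small $|m|$ with the APS limit for $|m|\gg\delta$, where Proposition~\ref{prop:no-zero-mode} applies asymptotically.

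\textbf{Step 3: regular crossing.} Assume $A'(s_*)\neq0$ at a point where $m(s_*)=0$. Since $g_\delta(0)\neq0$,
\begin{equation}
\frac{d}{ds}\Delta_\delta(m(s))\Big|_{s_*}=g_\delta(0)\,A'(s_*)\neq 0 .
\end{equation}
I would then check three things. First, the kernel at $s_*$ is one-dimensional, spanned by $f^{-1/2}(c_1,c_2)^{T}$, because the boundary matrix has rank one. Second, the Robbin--Salamon crossing form, obtained by differentiating the regularized Lagrangian $\Lambda_s$ in the real symplectic boundary space and evaluating on this kernel vector, is a nonzero multiple of $A'(s_*)$. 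This amounts to the same derivative computation rewritten through the boundary form in the proof of the modewise self-adjointness proposition. Third, since the crossing form is nondegenerate, the crossing is regular and isolated, and the local spectral-flow/Maslov formula applies.
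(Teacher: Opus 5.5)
Your architecture matches the paper's: explicit $\lambda=0$ solutions, a $2\times2$ endpoint determinant, and its first-order behaviour at $m=0$. Steps~1 and~3 are sound, and your guess $g_\delta(0)\propto 2-\delta\ell(T)$ is correct. Your crossing-form check in Step~3 even makes explicit what the paper leaves implicit, since the paper only uses that $s\mapsto F(s,0,k)$ has a simple zero.

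The gap is Step~2, which is the whole ``$\Rightarrow$'' direction, and none of the mechanisms you propose works.
With the paper's conditions $(1+\alpha)u+(1-\alpha)w=0$, $\alpha_0=\tanh(m/\delta)=-\alpha_T$, your determinant is, up to a nonzero factor,
\[
\Delta_\delta(m)=(1+\alpha_0)^2e^{-m\ell(T)}-(1-\alpha_0)^2e^{m\ell(T)} .
\]
\begin{itemize}
\item The two exponentials enter with opposite signs and can cancel. For $\delta=2/\ell(T)$ one gets $\Delta_\delta\equiv0$ for every $m$, so the hypothesis is used globally, not merely through a Taylor coefficient at $m=0$.
\item Monotonicity of $g_\delta$ in $|m|$ is false in general. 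For $\delta\ell(T)$ slightly above $4$, $|g_\delta|$ decreases near $0$ and then grows like $e^{|m|(\ell(T)-4/\delta)}/|m|$.
\item The APS fallback fails whenever $\delta>2/\ell(T)$. The term $(1-|\alpha_0|)^2=O(e^{-4|m|/\delta})$ is multiplied by $e^{|m|\ell(T)}$ and dominates. So for large $|m|$ one has $\Delta_\delta(m)\approx-4\sign(m)\,e^{|m|(\ell(T)-4/\delta)}$, opposite in sign to the APS value $4\sign(m)\,e^{-|m|\ell(T)}$. Patching small and large $|m|$ would in any case leave a middle range uncovered.
\end{itemize}
More fundamentally, no argument that uses only the APS-like sign structure of the profile can succeed. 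The equally admissible odd interpolation $\alpha_0=\tanh\bigl(m/\delta+m^3/(2\delta^3)\bigr)$, $\alpha_T=-\alpha_0$, has zero modes at $m=\pm\delta\sqrt{\delta\ell(T)-2}\neq0$ whenever $\delta\ell(T)>2$.

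The missing idea is to use the exact $\tanh$ profile through $1\pm\tanh z=e^{\pm z}/\cosh z$, i.e.\ $(1+\tanh z)/(1-\tanh z)=e^{2z}$. This gives the closed form
\[
\Delta_\delta(m)=\frac{2\sinh\bigl(m(2/\delta-\ell(T))\bigr)}{\cosh^2(m/\delta)} .
\]
Equivalently, the matching condition reads $e^{2m\ell(T)}=e^{4m/\delta}$, so $\Delta_\delta(m)=0$ if and only if $m(\ell(T)-2/\delta)=0$. This is precisely the paper's proof of Proposition~\ref{prop:explicit-zero-mode-criterion}, and it gives everything at once:
\begin{itemize}
\item It settles both directions of the equivalence.
\item It shows $g_\delta(m)=2\sinh\bigl(m(2/\delta-\ell(T))\bigr)/\bigl(m\cosh^2(m/\delta)\bigr)$ has the constant sign of $2/\delta-\ell(T)$, with no monotonicity needed.
\item It gives $g_\delta(0)=2(2/\delta-\ell(T))$, hence $\partial_s\Delta_\delta|_{s_*}=2A'(s_*)(2/\delta-\ell(T))\neq0$. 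This is the paper's $\partial_sF=2A'(s_*)(\ell(T)-2/\delta)$, since $F(s,0,k)=-\Delta_\delta$.
\end{itemize}
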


\begin{corollary}[Global zero-mode criterion]
\label{cor:global-zero-mode-criterion}
Under the same assumptions, the global regularized family has a zero mode at parameter \(s\) if and only if there exists \(k\in\mathcal K\) such that
\begin{equation}
k+A(s)=0.
\end{equation}
\end{corollary}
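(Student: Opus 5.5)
The plan is to reduce the corollary to Theorem~\ref{thm:section5-main} through the Fourier decomposition. The two conditions in the theorem, the smoothing parameter \(\delta\) and the hypothesis \(\delta\neq 2/\ell(T)\), do not depend on \(k\); the latter involves only \(f\) and \(T\). The theorem therefore applies to every mode \(k\in\mathcal K\) simultaneously. The argument has three steps.

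\textbf{Step 1: the global operator is an orthogonal sum of the mode operators.} The metric, the connection and the regularized boundary conditions are all \(S^1\)-invariant, and the regularized conditions are defined modewise in \autoref{subsec:regularized_realizations_maslov}. Hence for each \(s\) the global regularized operator \(D^{(\delta)}_s\) is the Hilbert direct sum
\[
D^{(\delta)}_s=\bigoplus_{k\in\mathcal K} D^{(\delta)}_{s,k}
\qquad\text{on}\qquad
L^2(M;S)\cong\bigoplus_{k\in\mathcal K}\mathfrak H_k .
\]
The isomorphism is given by \(\psi\mapsto (e^{-ik\theta}\text{-Fourier coefficients})\), which is unitary with respect to the weighted inner products. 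The domain of the sum consists of those sequences with components in the mode domains for which \(\sum_k\|D^{(\delta)}_{s,k}\psi_k\|^2<\infty\).

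\textbf{Step 2: the kernel splits by modes.} The kernel of an orthogonal direct sum of closed operators is the closure of the direct sum of the individual kernels. It follows that
\[
0\in\Spec_{\mathrm{p}}\bigl(D^{(\delta)}_s\bigr)
\quad\Longleftrightarrow\quad
\ker D^{(\delta)}_{s,k}\neq 0 \text{ for some } k\in\mathcal K .
\]
Here a global kernel element is nonzero exactly when one of its Fourier components is nonzero, and that component then lies in the corresponding mode kernel. Conversely, a mode kernel element multiplied by \(e^{ik\theta}\) is a global zero mode.

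\textbf{Step 3: apply the theorem to each mode.} By Theorem~\ref{thm:section5-main}, \(\ker D^{(\delta)}_{s,k}\neq0\) holds if and only if \(k+A(s)=0\). Combining this with Step 2 gives the corollary.

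The one point needing care is that "zero mode" should mean \(0\) is an eigenvalue, rather than merely that \(0\) lies in the essential spectrum of the infinite sum. I would make this rigorous by noting that each mode operator has compact resolvent, as in the modewise APS proposition. I would also check that the spectra of the regularized mode operators stay away from \(0\) uniformly in \(|k|\) whenever \(k+A(s)\neq0\). For large \(|m|\) the boundary terms \(m/f\) dominate, so the regularized conditions coincide with the APS ones and the spectral gap grows like \(|m|\). Granting this, \(0\) is either an isolated eigenvalue or lies in the resolvent set, so the two notions agree.

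I expect this uniform-gap check to be the main obstacle. It depends on the precise form of the regularization in \autoref{subsec:regularized_realizations_maslov}. If that section builds the regularization so that it agrees with APS for \(|m|\ge\delta\), the gap is immediate from the modewise APS analysis, since large-\(|m|\) modes have no zero modes by Proposition~\ref{prop:no-zero-mode}.
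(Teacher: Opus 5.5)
Your Steps 1--3 are the paper's proof: the orthogonal mode decomposition \eqref{eq:global-regularized-decomposition}, followed by Theorem~\ref{thm:section5-main} applied in each mode. You are right that $\delta\neq 2/\ell(T)$ does not depend on $k$. If ``zero mode'' means a nonzero element of $\ker D^{(\delta)}_s$, this already proves the corollary, because the kernel of the Hilbert sum is the closed sum of the mode kernels. You should stop there: the uniform-gap check you add is false, and the conclusion you wanted from it (that the eigenvalue and spectrum readings agree) does not hold.

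\textbf{Why there is no uniform gap.} The regularized endpoint lines do not produce a gap of order $|m|$; they carry exponentially localized edge states.
\begin{itemize}
\item \emph{Exact solution at the left end.} For $m=m(s,k)>0$ one has $(1-\alpha_0)/(1+\alpha_0)=e^{-2m/\delta}$. Hence $x(t)=\bigl(-e^{-2m/\delta}e^{m\ell(t)},\,e^{-m\ell(t)}\bigr)$ solves the $\lambda=0$ system and satisfies the left condition in \eqref{eq:bc_tanh_clean} exactly.
\item \emph{Cut it off.} Take $\chi$ with $\chi=1$ on $[0,t_1]$ and $\operatorname{supp}\chi\subset[0,t_2]$, where $0<t_1<t_2<T$ and $\ell(t_2)<2/\delta$. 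Then $\chi x$ lies in the domain and $\|\chi x\|\gtrsim m^{-1/2}$.
\item \emph{Quasimode bound.} Since $\mathcal D_{s,k}x=0$, we get $\mathcal D_{s,k}(\chi x)=\chi'\,(-w,u)$. Its norm is at most $Ce^{-cm}$ with $c=\min\{\ell(t_1),\,2/\delta-\ell(t_2)\}>0$.
\item \emph{Consequence.} $\operatorname{dist}\bigl(0,\Spec D^{(\delta)}_{s,k}\bigr)\le Cm^{1/2}e^{-cm}\to 0$ as $k\to\infty$.
\end{itemize}
For $f\equiv 1$ one finds explicitly a pair of eigenvalues $\lambda\approx 2m\bigl(e^{-2m/\delta}\pm e^{-mT}\bigr)$. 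The minus branch vanishes exactly when $T=2/\delta$, consistent with the degeneracy condition.

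So $0$ is an accumulation point of nonzero eigenvalues coming from infinitely many modes. Therefore $0\in\sigma_{\mathrm{ess}}(D^{(\delta)}_s)$ for every $s$, and the two notions you hoped to identify genuinely differ. The corollary is correct precisely in the point-spectrum sense, which your Steps 1--3 establish. The same caveat applies to the paper's one-line proof, which writes $0\in\Spec$ but is valid only for the point spectrum.

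Your heuristic fails on two counts:
\begin{itemize}
\item $\tanh(m/\delta)$ never equals $\pm1$, so the regularized lines never coincide with the APS ones.
\item Even the sharp limiting lines $\{u=0\}\oplus\{w=0\}$ (for $m>0$) admit the same edge states, with $\lambda\approx\pm 2me^{-mT}$ when $f\equiv 1$.
\end{itemize}
Consequently Proposition~\ref{prop:no-zero-mode}, which excludes only $\lambda=0$ mode by mode, gives no uniform bound.
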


\paragraph{Steps to prove Theorem~\ref{thm:section5-main}}
First, we identify the boundary-zero set and define the regularized endpoint coefficients. Then, we rewrite the mode equation in a real first-order form and describe the associated boundary symplectic space and Lagrangians. Next, we show that the resulting regularized boundary problem defines a continuous self-adjoint Fredholm family to which the standard spectral-flow/Maslov correspondence applies. Finally, we compute the zero-mode matching condition explicitly and deduce the zero-mode criterion and the regularity of transverse crossings.

\subsection{Boundary zeros and regularized endpoint conditions}\label{subsec:boundary_zeros_regularized_bc}

The regularization replaces the discontinuous sign choice in the APS boundary condition by a continuous parameter \(m(s,k)=k+A(s)\). For \(m(s,k)\neq 0\), the regularized endpoint family converges as \(\delta\to 0\) to the corresponding APS choice, while remaining continuous through \(m(s,k)=0\).

Let \(A=A(s)\) be a smooth one-parameter family. The endpoint boundary operators are
\begin{equation}\label{eq:Bs-def}
B_s(0)=\frac{1}{f(0)}\,\sigma_3\bigl(-i\partial_\theta+A(s)\bigr),
\qquad
B_s(T)=-\frac{1}{f(T)}\,\sigma_3\bigl(-i\partial_\theta+A(s)\bigr),
\qquad s\in[s_1,s_2].
\end{equation}
Since these operators are \(S^1\)-invariant, they are diagonal in Fourier modes. Writing
\begin{equation}
\mathcal K=
\begin{cases}
\mathbb Z, & \text{periodic spin structure},\\
\mathbb Z+\tfrac12, & \text{anti-periodic spin structure},
\end{cases}
\qquad
m(s,k)=k+A(s),
\qquad k\in\mathcal K,
\end{equation}
their restriction to the \(k\)-th Fourier mode is
\begin{equation}\label{eq:Bs-eigs}
B_s(0)\big|_k
=
\frac{m(s,k)}{f(0)}
\begin{pmatrix}
1&0\\0&-1
\end{pmatrix},
\qquad
B_s(T)\big|_k
=
-\frac{m(s,k)}{f(T)}
\begin{pmatrix}
1&0\\0&-1
\end{pmatrix}.
\end{equation}
Hence, the boundary eigenvalues in mode \(k\) are
\begin{equation}
\mu_{k,\pm}(s;t_0)=\pm \frac{m(s,k)}{f(t_0)},
\qquad t_0\in\{0,T\}.
\end{equation}

A \emph{boundary-zero} for mode \(k\) is a parameter \(s_*\) such that
\begin{equation}\label{eq:mass-root}
m(s_*,k)=k+A(s_*)=0.
\end{equation}
We assume endpoint invertibility,
\begin{equation}
0\notin\Spec(B_{s_1}(t_0)),
\qquad
0\notin\Spec(B_{s_2}(t_0)),
\qquad t_0\in\{0,T\},
\end{equation}
so no boundary-zero occurs at \(s=s_1\) or \(s=s_2\).

A boundary-zero \(s_*\) is called \emph{transverse} if
\begin{equation}
\partial_s m(s_*,k)=A'(s_*)\neq 0.
\end{equation}
In the present scalar modewise setting, every transverse boundary-zero is simple and hence regular for the local Maslov crossing form. If \(A'(s_*)=0\), then the boundary-zero is nontransverse; it may be a genuine sign-changing crossing or merely a touching point. A sufficient local criterion for touching is
\begin{equation}
m(s_*,k)=0,\qquad \partial_s m(s_*,k)=0,\qquad \partial_s^2 m(s_*,k)\neq 0,
\end{equation}
equivalently,
\begin{equation}
k+A(s_*)=0,\qquad A'(s_*)=0,\qquad A''(s_*)\neq 0.
\end{equation}

Because \(A(s)\) is bounded on the compact interval \([s_1,s_2]\), only finitely many modes satisfy \(k+A(s)=0\) for some \(s\in[s_1,s_2]\).

Fix \(\delta>0\). On the scalar boundary mode space over \(S^1\), set
\begin{equation}
M_s=-i\partial_\theta+A(s),
\end{equation}
and defined by functional calculus
\begin{equation}\label{eq:alpha-global}
\alpha_0(s)=\tanh\!\Big(\frac{M_s}{\delta}\Big),
\qquad
\alpha_T(s)=-\alpha_0(s).
\end{equation}
On the \(k\)-th mode,
\begin{equation}\label{eq:alpha_def_clean}
\alpha_0(s,k)=\tanh\!\Big(\frac{m(s,k)}{\delta}\Big),
\qquad
\alpha_T(s,k)=-\alpha_0(s,k).
\end{equation}

\begin{remark}[Why this regularization is chosen]
The choice
\begin{equation}
\alpha_0(s,k)=\tanh\!\Big(\frac{m(s,k)}{\delta}\Big),
\qquad
\alpha_T(s,k)=-\alpha_0(s,k),
\end{equation}
is a choice rather than a canonical one. It gives a smooth interpolation between the two APS endpoint choices as the sign of \(m(s,k)\) changes, and the identity
\begin{equation}
\frac{1+\tanh z}{1-\tanh z}=e^{2z}
\end{equation}
makes the zero-mode matching equation explicitly solvable.
\end{remark}

\subsection{Real formulation, boundary form, and Lagrangians}\label{subsec:real_formulation_symplectic_space}

We now pass to a real first-order formulation adapted to the boundary symplectic form. Write the original mode spinor as
\begin{equation}
\psi=\binom{u}{v}.
\end{equation}
Let
\begin{equation}
U=\operatorname{diag}(1,i),
\end{equation}
and define
\begin{equation}
\mathcal U:L^2([0,T],f(t)\,dt;\mathbb C^2)\longrightarrow L^2([0,T],dt;\mathbb C^2),
\qquad
(\mathcal U\psi)(t)=f(t)^{1/2}U^\ast \psi(t).
\end{equation}
If
\begin{equation}
\phi=\mathcal U\psi=\binom{\tilde u}{w},
\qquad
\tilde u=f(t)^{1/2}u,
\qquad
w=-\,i\,f(t)^{1/2}v,
\end{equation}
then \(\mathcal U\) is unitary. We relabel \(\tilde u\) as \(u\), and write
\begin{equation}
\phi=\binom{u}{w}.
\end{equation}

Conjugating by \(\mathcal U\) transforms the mode differential expression into
\begin{equation}\label{eq:Duw-operator}
\mathcal D_{s,k}
=
\begin{pmatrix}
0 & -\partial_t-\dfrac{m(s,k)}{f(t)} \\
\partial_t-\dfrac{m(s,k)}{f(t)} & 0
\end{pmatrix}.
\end{equation}
Hence,
\begin{equation}\label{eq:uw-system}
\mathcal D_{s,k}\phi=\lambda\phi
\quad\Longleftrightarrow\quad
\begin{aligned}
u'(t) &= \frac{m(s,k)}{f(t)}\,u(t)+\lambda\,w(t),\\
w'(t) &= -\frac{m(s,k)}{f(t)}\,w(t)-\lambda\,u(t),
\end{aligned}
\qquad t\in[0,T].
\end{equation}
The coefficients are real.

In these variables, the regularized endpoint conditions are
\begin{equation}\label{eq:bc_tanh_clean}
\begin{aligned}
(1+\alpha_0(s,k))\,u(0)+(1-\alpha_0(s,k))\,w(0)&=0,\\
(1+\alpha_T(s,k))\,u(T)+(1-\alpha_T(s,k))\,w(T)&=0.
\end{aligned}
\end{equation}

\begin{lemma}[Real and complex zero modes]
\label{lem:real-complex-kernel}
Fix \(s\), \(k\), and \(\delta>0\), and let \(\mathcal D^{(\delta)}_{s,k}\) denote the operator associated with \eqref{eq:Duw-operator} and the endpoint conditions \eqref{eq:bc_tanh_clean}. Then, complex conjugation preserves \(\Dom(\mathcal D^{(\delta)}_{s,k})\) and commutes with \(\mathcal D^{(\delta)}_{s,k}\). Consequently,
\begin{equation}
\ker_{\C}\bigl(\mathcal D^{(\delta)}_{s,k}\bigr)
=
\ker_{\R}\bigl(\mathcal D^{(\delta)}_{s,k}\bigr)\otimes_{\R}\C.
\end{equation}
In particular, if the real kernel is one-dimensional, then the complex kernel is one-dimensional as a complex vector space.
\end{lemma}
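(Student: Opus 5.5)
The plan is to check directly that the operator \(\mathcal D^{(\delta)}_{s,k}\) and its domain are \emph{real}, i.e.\ built from real coefficients. Once we know that complex conjugation \(\kappa\phi=\overline{\phi}\) is an antilinear involution commuting with the operator, the identification of kernels is a standard linear-algebra fact.

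\textbf{Step 1: the domain is invariant under conjugation.} Take the domain to be
\[
\Dom(\mathcal D^{(\delta)}_{s,k})=\{\phi=\binom{u}{w}\in H^1([0,T];\C^2): \phi \text{ satisfies } \eqref{eq:bc_tanh_clean}\}.
\]
Complex conjugation preserves \(H^1([0,T];\C^2)\), since \(\overline{\phi}'=\overline{\phi'}\). It also commutes with the trace maps \(\phi\mapsto\phi(0),\phi(T)\). The coefficients \(1\pm\alpha_0(s,k)\) and \(1\pm\alpha_T(s,k)\) are real, because \(m(s,k)=k+A(s)\) is real and \(\tanh\) is real on \(\R\). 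Conjugating \eqref{eq:bc_tanh_clean} therefore shows that \(\overline{\phi}\) satisfies the same conditions.

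\textbf{Step 2: the operator commutes with conjugation.} The differential expression \eqref{eq:Duw-operator} has real coefficients \(\pm\partial_t\) and \(\pm m(s,k)/f(t)\), since \(f\) is real and positive. Hence \(\mathcal D_{s,k}\overline{\phi}=\overline{\mathcal D_{s,k}\phi}\) for all \(\phi\) in the domain.

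\textbf{Step 3: identify the kernels.} By Step 2, \(\ker_\C\) is \(\kappa\)-invariant. Write \(\phi=\Re\phi+i\Im\phi\) with \(\Re\phi=\tfrac12(\phi+\overline\phi)\) and \(\Im\phi=\tfrac1{2i}(\phi-\overline\phi)\). Both lie in the domain, both are real-valued, and both belong to the kernel. It follows that the natural map
\[
\ker_\R\otimes_\R\C\to\ker_\C,\qquad \chi\otimes z\mapsto z\chi,
\]
is surjective. For injectivity, take \(\R\)-linearly independent real functions \(\chi_j\) and suppose \(\sum z_j\chi_j=0\). Taking real and imaginary parts gives \(\sum(\Re z_j)\chi_j=0\) and \(\sum(\Im z_j)\chi_j=0\), so all \(z_j=0\).

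Taking dimensions gives \(\dim_\C\ker_\C=\dim_\R\ker_\R\), which yields the final claim. I expect no real obstacle here. The only point requiring care is Step 1: one must note explicitly that the regularized coefficients are real, which is exactly why the unitary \(U=\operatorname{diag}(1,i)\) was introduced. In the original \((u,v)\) variables the system carries factors of \(i\), and the corresponding statement would require the twisted conjugation \(\mathcal U^{-1}\kappa\,\mathcal U\).
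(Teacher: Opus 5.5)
Your proposal is correct and follows the same route as the paper. The paper likewise argues that the real coefficients of the mode operator and of the regularized endpoint relations make complex conjugation preserve the domain and commute with the operator, and that the real and imaginary parts of a complex zero mode are real zero modes. Your explicit check that the natural map $\ker_{\R}\otimes_{\R}\C\to\ker_{\C}$ is also injective fills in a step the paper leaves implicit.
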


\begin{proof}
The coefficients of \eqref{eq:Duw-operator} and the endpoint relations \eqref{eq:bc_tanh_clean} are real. Therefore, complex conjugation preserves both the domain and the action of the operator. If \(x\) is a complex zero mode, then its real and imaginary parts are real zero modes satisfying the same endpoint conditions, which gives the stated complexification identity.
\end{proof}

Let
\begin{equation}
x(t)=\begin{pmatrix}u(t)\\ w(t)\end{pmatrix},
\qquad
y(t)=\begin{pmatrix}p(t)\\ q(t)\end{pmatrix},
\end{equation}
with \(x,y\in H^1([0,T];\mathbb R^2)\). Integration by parts gives
\begin{equation}\label{eq:green-identity-uw}
\langle \mathcal D_{s,k}x,y\rangle_{L^2}
-
\langle x,\mathcal D_{s,k}y\rangle_{L^2}
=
\bigl[u(t)q(t)-w(t)p(t)\bigr]_{t=0}^{t=T}.
\end{equation}
If
\begin{equation}
J'=
\begin{pmatrix}
0&-1\\
1&0
\end{pmatrix},
\end{equation}
define
\begin{equation}
\omega(\xi,\eta)=\langle J'\xi,\eta\rangle_{\mathbb R^2},
\qquad
\xi,\eta\in\mathbb R^2.
\end{equation}
On the endpoint trace space
\begin{equation}
\mathcal H_k=\mathbb R^2\oplus\mathbb R^2,
\qquad
(x_0,x_T)=(x(0),x(T)),
\end{equation}
set
\begin{equation}\label{eq:Omega-def-green}
\Omega\big((x_0,x_T),(y_0,y_T)\big)
=
\omega(x_T,y_T)-\omega(x_0,y_0).
\end{equation}
Then \eqref{eq:green-identity-uw} becomes
\begin{equation}\label{eq:green-identity-Omega}
\langle \mathcal D_{s,k}x,y\rangle_{L^2}
-
\langle x,\mathcal D_{s,k}y\rangle_{L^2}
=
\Omega\big((x(0),x(T)),(y(0),y(T))\big).
\end{equation}
Thus, separated self-adjoint boundary conditions are encoded by Lagrangian subspaces of \((\mathcal H_k,\Omega)\).

We now define the two Lagrangians relevant for the zero-mode crossing problem. First,
\begin{subequations}
\begin{align}
\Lambda_k^0(s)
&=
\Bigl\{x=(u,w)^{\mathsf T}\in\mathbb R^2:\ (1+\alpha_0(s,k))u+(1-\alpha_0(s,k))w=0\Bigr\},\\
\Lambda_k^T(s)
&=
\Bigl\{x=(u,w)^{\mathsf T}\in\mathbb R^2:\ (1+\alpha_T(s,k))u+(1-\alpha_T(s,k))w=0\Bigr\},
\end{align}
\end{subequations}
and
\begin{equation}\label{eq:Lambda_bc_clean}
\Lambda_{\mathrm{bc},k}(s)=\Lambda_k^0(s)\oplus\Lambda_k^T(s)\subset\mathcal H_k.
\end{equation}

\begin{lemma}
\label{lem:bc-lagrangian}
For each fixed \(s\) and \(k\), the subspace \(\Lambda_{\mathrm{bc},k}(s)\subset\mathcal H_k\) is Lagrangian.
\end{lemma}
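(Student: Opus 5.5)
The plan is to reduce the statement to two facts: a dimension count, and the isotropy of each endpoint line under the standard area form $\omega$ on $\mathbb R^2$. The space $(\mathcal H_k,\Omega)$ is a real symplectic vector space of dimension $4$, since $\omega(\xi,\eta)=\langle J'\xi,\eta\rangle$ is nondegenerate and $\Omega=\omega\oplus(-\omega)$ on $\mathbb R^2\oplus\mathbb R^2$. A Lagrangian subspace is therefore exactly an isotropic subspace of dimension $2$.

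First I will check that each of $\Lambda_k^0(s)$ and $\Lambda_k^T(s)$ is a line in $\mathbb R^2$. Each is the kernel of the linear functional
\[
(u,w)\mapsto(1+\alpha)u+(1-\alpha)w,
\qquad \alpha=\alpha_0(s,k)\ \text{or}\ \alpha_T(s,k).
\]
This functional is nonzero because both coefficients vanishing would force $\alpha=-1$ and $\alpha=1$ at once. In fact $|\alpha|<1$ always, since $\tanh$ takes values in $(-1,1)$ and $\alpha_T=-\alpha_0$. Hence each $\Lambda_k^{t_0}(s)$ is one-dimensional, and $\Lambda_{\mathrm{bc},k}(s)$ has dimension $2$.

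Next I will use that every one-dimensional subspace $L\subset\mathbb R^2$ is $\omega$-isotropic, because $\omega$ is alternating: $\omega(\xi,\xi)=\langle J'\xi,\xi\rangle=0$, as $J'$ is a rotation by $\pi/2$. Concretely, writing $L=\mathbb R\,\xi$ gives $\omega(a\xi,b\xi)=ab\,\omega(\xi,\xi)=0$. Now take $(x_0,x_T),(y_0,y_T)\in\Lambda_k^0(s)\oplus\Lambda_k^T(s)$. Then
\[
\Omega\big((x_0,x_T),(y_0,y_T)\big)=\omega(x_T,y_T)-\omega(x_0,y_0)=0-0=0.
\]
So the direct sum is isotropic of half dimension, hence Lagrangian. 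If one wants to be explicit, $\Lambda_k^0(s)$ is spanned by $(1-\alpha_0,\,-(1+\alpha_0))^{\mathsf T}$, and similarly at $T$, but no computation is really needed.

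I do not expect a genuine obstacle. The only point needing care is nondegeneracy of the defining functional, which rests on $|\tanh|<1$. This observation is also worth recording for the later continuity argument: it shows the lines $\Lambda_k^{t_0}(s)$ depend smoothly on $s$ and never degenerate.
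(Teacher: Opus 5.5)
Your proposal is correct and follows essentially the same argument as the paper. Each endpoint line is a one-dimensional, hence isotropic, subspace of $(\mathbb R^2,\omega)$, so their direct sum is a two-dimensional isotropic subspace of the four-dimensional space $(\mathcal H_k,\Omega)$, i.e.\ Lagrangian; your check that the functional $(u,w)\mapsto(1+\alpha)u+(1-\alpha)w$ is never identically zero just makes explicit the one-dimensionality the paper asserts without comment.
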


\begin{proof}
Each of \(\Lambda_k^0(s)\) and \(\Lambda_k^T(s)\) is a one-dimensional isotropic subspace of \((\mathbb R^2,\omega)\). Their direct sum is therefore a two-dimensional maximal isotropic subspace of \((\mathcal H_k,\Omega)\).
\end{proof}

Second, at \(\lambda=0\) the system \eqref{eq:uw-system} decouples:
\begin{equation}
u'=\frac{m(s,k)}{f(t)}u,
\qquad
w'=-\frac{m(s,k)}{f(t)}w.
\end{equation}
Let
\begin{equation}\label{eq:ell_clean}
\ell(t)=\int_0^t \frac{d\tau}{f(\tau)}.
\end{equation}
Then the transfer matrix is
\begin{equation}\label{eq:transfer_clean}
x(T)=\Phi_{s,k,0}(T)\,x(0),
\qquad
\Phi_{s,k,0}(T)=
\begin{pmatrix}
e^{\,m(s,k)\,\ell(T)}&0\\
0&e^{-\,m(s,k)\,\ell(T)}
\end{pmatrix}.
\end{equation}
Since \(\Phi_{s,k,0}(T)\) is symplectic with respect to \(\omega\), its graph
\begin{equation}\label{eq:Lambda_cauchy_clean}
\Lambda_{s,k}(0)
=
\operatorname{graph}\bigl(\Phi_{s,k,0}(T)\bigr)
=
\Bigl\{(x_0,x_T)\in\mathcal H_k:\ x_T=\Phi_{s,k,0}(T)x_0\Bigr\}
\end{equation}
is Lagrangian.

\begin{lemma}
\label{lem:cauchy-lagrangian}
For each fixed \(s\) and \(k\), the subspace \(\Lambda_{s,k}(0)\subset\mathcal H_k\) is Lagrangian.
\end{lemma}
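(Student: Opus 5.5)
The approach is to check directly that the graph of a symplectic linear map is Lagrangian for the difference form \(\Omega\) defined in \eqref{eq:Omega-def-green}. There are two ingredients: \(\Lambda_{s,k}(0)\) has the right dimension, and \(\Omega\) vanishes on it.

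\emph{Dimension.} The map
\[
x_0\mapsto \bigl(x_0,\Phi_{s,k,0}(T)x_0\bigr)
\]
is an injective linear map \(\mathbb R^2\to\mathcal H_k\). Its image is exactly \(\Lambda_{s,k}(0)\), so \(\dim\Lambda_{s,k}(0)=2=\tfrac12\dim\mathcal H_k\). I will also record that \(\Omega\) is nondegenerate on \(\mathcal H_k\), since it is the difference of two copies of the nondegenerate form \(\omega\). This makes ``Lagrangian'' equivalent to ``isotropic of half dimension.''

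\emph{Isotropy.} For \((x_0,\Phi x_0)\) and \((y_0,\Phi y_0)\) in the graph, with \(\Phi=\Phi_{s,k,0}(T)\), we have
\[
\Omega=\omega(\Phi x_0,\Phi y_0)-\omega(x_0,y_0).
\]
This vanishes as soon as \(\Phi^{\mathsf T}J'\Phi=J'\).

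For \(2\times 2\) matrices, \(\Phi^{\mathsf T}J'\Phi=(\det\Phi)J'\). Here \(\Phi\) is diagonal with entries \(e^{\pm m(s,k)\ell(T)}\), so \(\det\Phi=1\). This gives the symplectic property claimed just before the lemma.

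As a cross-check that avoids the explicit formula, take the zero-mode solutions \(x\) and \(y\) of \eqref{eq:uw-system} at \(\lambda=0\) with initial data \(x_0,y_0\). By Green's identity \eqref{eq:green-identity-Omega}, \(\Omega\) evaluated on their traces equals \(\langle\mathcal D x,y\rangle-\langle x,\mathcal D y\rangle=0\). This works because the traces of such solutions are exactly the graph of \(\Phi\).

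No step here is a real obstacle. The only point needing care is the sign convention of \(\omega(\xi,\eta)=\langle J'\xi,\eta\rangle\), to make sure the determinant identity is applied with the correct transpose. That is immaterial here because \(\det\Phi=1\).
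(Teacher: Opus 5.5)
Your proof is correct and is essentially the paper's argument. The paper cites the standard fact that the graph of a symplectic map is Lagrangian in $(\mathcal H_k,\Omega)$ and asserts without computation that $\Phi_{s,k,0}(T)$ is symplectic; you make both ingredients explicit via the dimension count and $\Phi^{\mathsf T}J'\Phi=(\det\Phi)J'$ with $\det\Phi=1$, and your Green's-identity cross-check is also sound.
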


\begin{proof}
The graph of a symplectic map is Lagrangian in the product symplectic space \((\mathcal H_k,\Omega)\).
\end{proof}

\begin{lemma}[Limit of the regularized endpoint family]
\label{lem:sharp-limit-regularized}
Fix a mode \(k\in\mathcal K\) and a parameter value \(s\) such that
\begin{equation}
m(s,k)=k+A(s)\neq 0.
\end{equation}
Then, as \(\delta\downarrow 0\), the endpoint lines \(\Lambda_k^0(s)\) and \(\Lambda_k^T(s)\) converge to the corresponding APS endpoint lines:
\begin{equation}
m(s,k)>0
\quad\Longrightarrow\quad
\Lambda_k^0(s)\to\{u=0\},\qquad
\Lambda_k^T(s)\to\{w=0\},
\end{equation}
and
\begin{equation}
m(s,k)<0
\quad\Longrightarrow\quad
\Lambda_k^0(s)\to\{w=0\},\qquad
\Lambda_k^T(s)\to\{u=0\}.
\end{equation}
\end{lemma}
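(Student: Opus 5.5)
The plan is to work directly with the explicit defining equations of the two endpoint lines in $\mathbb R^2$ and send $\delta\downarrow 0$ in their coefficients. Convergence of lines is understood in the Grassmannian $\mathrm{Gr}_1(\mathbb R^2)\cong\mathbb{RP}^1$. Equivalently, the line $\{au+bw=0\}$ is determined by the point $[a:b]\in\mathbb{RP}^1$, and it suffices to show that this normal vector converges projectively, for instance after normalizing it to unit length.

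First I fix $m=m(s,k)\neq 0$ and note that, since $s$ and $k$ are held fixed, $m$ does not depend on $\delta$. Then $m/\delta\to+\infty$ if $m>0$ and $m/\delta\to-\infty$ if $m<0$. Hence $\alpha_0(s,k)=\tanh(m/\delta)\to\sign(m)$ and $\alpha_T(s,k)=-\alpha_0(s,k)\to-\sign(m)$.

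Next I take the normal vectors $n_0=(1+\alpha_0,\,1-\alpha_0)$ and $n_T=(1+\alpha_T,\,1-\alpha_T)$ from \eqref{eq:bc_tanh_clean}. Neither vector can vanish, because the sum of its entries is always $2$. This also shows the lines are well defined for every $\delta$, which is consistent with Lemma~\ref{lem:bc-lagrangian}. Since the entries depend continuously on $\alpha\in[-1,1]$ and the limit vectors are nonzero, the map $\alpha\mapsto[1+\alpha:1-\alpha]$ is continuous into $\mathbb{RP}^1$, so the lines converge to the lines defined by the limit normals.

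The limits are then read off case by case. If $m>0$, then $n_0\to(2,0)$, so $\Lambda_k^0(s)\to\{u=0\}$; and $n_T\to(0,2)$, so $\Lambda_k^T(s)\to\{w=0\}$. If $m<0$, the roles are exchanged: $n_0\to(0,2)$ gives $\{w=0\}$, and $n_T\to(2,0)$ gives $\{u=0\}$.

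Finally, I would check that these limit lines really are the APS lines of \autoref{subsec:aps-constraints}, written in the conjugated variables $\tilde u=f^{1/2}u$ and $w=-if^{1/2}v$. The multipliers $f^{1/2}>0$ and $-i$ do not change which component vanishes, so $u(0)=0\Leftrightarrow\tilde u(0)=0$ and $v(T)=0\Leftrightarrow w(T)=0$ when $m>0$, and similarly when $m<0$.

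I expect no step to be a genuine obstacle. The only points needing care are specifying the topology in which the lines converge, which the projective normalization handles, and carrying out the identification with the original APS conditions under $\mathcal U$ correctly.
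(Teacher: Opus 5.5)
Your proposal is correct and follows essentially the same route as the paper. The paper's proof is the one-line version: $\alpha_0(s,k)=\tanh(m/\delta)\to\sign(m)$ and $\alpha_T=-\alpha_0$, substituted into \eqref{eq:bc_tanh_clean}. Your extra points make explicit what the paper leaves implicit: convergence in $\mathbb{RP}^1$, justified by the fact that the normal $(1+\alpha,1-\alpha)$ never vanishes, and the check that the limit lines match the APS conditions of \autoref{subsec:aps-constraints} under the diagonal conjugation $\mathcal U$.
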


\begin{proof}
If \(m(s,k)\neq 0\), then
\begin{equation}
\alpha_0(s,k)=\tanh\!\Big(\frac{m(s,k)}{\delta}\Big)\longrightarrow \sign(m(s,k))
\qquad (\delta\downarrow 0),
\end{equation}
and \(\alpha_T(s,k)=-\alpha_0(s,k)\). Substituting these limits into \eqref{eq:bc_tanh_clean} yields the claim.
\end{proof}

\subsection{The regularized boundary and the Maslov formalism}\label{subsec:regularized_realizations_maslov}

For fixed \(s\), \(k\), and \(\delta>0\), let \(\mathcal D^{(\delta)}_{s,k}\) denote the operator associated with \eqref{eq:Duw-operator} on \(L^2([0,T],dt;\mathbb C^2)\) and the regularized endpoint conditions
\begin{equation}
\Dom(\mathcal D^{(\delta)}_{s,k})
=
\left\{
x=\binom{u}{w}\in H^1([0,T];\mathbb C^2):
\begin{array}{l}
(1+\alpha_0(s,k))u(0)+(1-\alpha_0(s,k))w(0)=0,\\
(1+\alpha_T(s,k))u(T)+(1-\alpha_T(s,k))w(T)=0
\end{array}
\right\}.
\end{equation}
Define the corresponding operator in the original mode variables by
\begin{equation}
D^{(\delta)}_{s,k} =\mathcal U^{-1}\mathcal D^{(\delta)}_{s,k}\mathcal U.
\end{equation}

At the global level,
\begin{equation}\label{eq:global-regularized-decomposition}
D^{(\delta)}_{s}
=
\widehat{\bigoplus}_{k\in\mathcal K} D^{(\delta)}_{s,k},
\end{equation}
but all arguments below are modewise.

\begin{proposition}[Fixed-\texorpdfstring{$s$}{s} regularized operator]
\label{prop:fixed-s-selfadjoint}
Fix \(k\in\mathcal K\), \(\delta>0\), and \(s\in[s_1,s_2]\). Then \(\mathcal D^{(\delta)}_{s,k}\) is a self-adjoint operator with separated boundary conditions associated with the mode differential expression \eqref{eq:Duw-operator}.
\end{proposition}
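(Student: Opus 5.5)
The plan is to verify directly that \(\mathcal D^{(\delta)}_{s,k}\) equals its adjoint. The tools are the Green identity \eqref{eq:green-identity-uw} and the Lagrangian property of \(\Lambda_{\mathrm{bc},k}(s)\) from Lemma~\ref{lem:bc-lagrangian}, both upgraded from real to complex coefficients. This parallels the argument for the modewise APS operator \(D_{k,\APS}\) earlier in the paper, but it must handle general real endpoint lines instead of coordinate lines.

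\emph{Step 1: well-definedness of the boundary conditions.} Since \(\alpha_0(s,k)=\tanh(m(s,k)/\delta)\in(-1,1)\) and \(\alpha_T=-\alpha_0\), all four coefficients \(1\pm\alpha_0\) and \(1\pm\alpha_T\) are strictly positive. Hence each endpoint relation in \eqref{eq:bc_tanh_clean} defines a genuine complex line \(\C\xi_{t_0}\subset\C^2\), spanned by a real vector \(\xi_{t_0}\). For instance, at \(t=0\) we may take \(\xi_0=(1-\alpha_0,\,-(1+\alpha_0))^{\mathsf T}\).

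\emph{Step 2: symmetry.} For \(x,y\in H^1([0,T];\C^2)\), the integration by parts giving \eqref{eq:green-identity-uw} goes through verbatim with conjugates. The coefficient \(m/f\) is real, so its contributions cancel, and
\[
\langle \mathcal D_{s,k}x,y\rangle-\langle x,\mathcal D_{s,k}y\rangle=\bigl[u\bar q-w\bar p\bigr]_{0}^{T}.
\]
If \(x(t_0)=a\xi_{t_0}\) and \(y(t_0)=b\xi_{t_0}\), then each endpoint term equals \(a\bar b\,\omega(\xi_{t_0},\xi_{t_0})\), up to sign. This vanishes because \(\omega\) is skew and \(\xi_{t_0}\) is real. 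Hence \(\mathcal D^{(\delta)}_{s,k}\) is symmetric.

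\emph{Step 3: the adjoint domain.} This is the only substantive step. On the compact interval the expression \eqref{eq:Duw-operator} is regular, with smooth bounded coefficients because \(f>0\). The maximal operator therefore has domain \(H^1([0,T];\C^2)\): if \(x,\mathcal D_{s,k}x\in L^2\), then \(x'\in L^2\). The minimal operator has domain \(H^1_0\), and the adjoint satisfies \(\mathcal D_{\min}\subset(\mathcal D^{(\delta)}_{s,k})^\ast\subset\mathcal D_{\max}\). So it suffices to take \(y\in H^1\) with vanishing boundary form against every \(x\) in the domain and show that \(y\) satisfies the same endpoint conditions. The traces \((x(0),x(T))\) can be prescribed independently in \(\C\xi_0\oplus\C\xi_T\), for example by smooth cutoffs near each end. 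Choosing \(x\) supported near \(t=0\) with \(x(0)=\xi_0\) forces \(\omega_{\C}(\xi_0,y(0))=0\). The form \((\xi,\eta)\mapsto\xi_1\bar\eta_2-\xi_2\bar\eta_1\) is nondegenerate on \(\C^2\), so the \(\omega\)-annihilator of a complex line is that same line. Hence \(y(0)\in\C\xi_0\), and likewise \(y(T)\in\C\xi_T\). Therefore \(\Dom((\mathcal D^{(\delta)}_{s,k})^\ast)=\Dom(\mathcal D^{(\delta)}_{s,k})\), and the operator is self-adjoint.

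As a byproduct, the same compactness argument used for the modewise APS operator gives compact resolvent: the graph norm is equivalent to the \(H^1\) norm, and \(H^1\hookrightarrow L^2\) is compact. Unitarity of \(\mathcal U\) then transfers all of this to \(D^{(\delta)}_{s,k}\).
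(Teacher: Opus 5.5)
Your proof is correct and follows the same route as the paper: Green's identity together with the Lagrangian property of $\Lambda_{\mathrm{bc},k}(s)$. You go further than the paper in two respects it leaves implicit: you verify ``symmetric with maximally isotropic trace space $\Rightarrow$ self-adjoint'' via $\mathcal D_{\min}\subset(\mathcal D^{(\delta)}_{s,k})^\ast\subset\mathcal D_{\max}$, and you handle the passage from the real form $\omega$ to its sesquilinear extension on $\C^2$.

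One wording fix: under $(\xi,\eta)\mapsto\xi_1\bar\eta_2-\xi_2\bar\eta_1$, the annihilator of a complex line equals that line only when the line is isotropic. That holds here, because the line is spanned by a real vector (your Step 2), but it fails for a general complex line such as $\C(1,i)$, whose annihilator is $\C(1,-i)$.
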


\begin{proof}
By \eqref{eq:green-identity-Omega}, the operator with separated boundary conditions is symmetric exactly when its endpoint trace space is isotropic in \((\mathcal H_k,\Omega)\). For fixed \(s\), the regularized endpoint conditions define the trace space
\begin{equation}
\Lambda_{\mathrm{bc},k}(s)=\Lambda_k^0(s)\oplus\Lambda_k^T(s),
\end{equation}
which is Lagrangian by Lemma~\ref{lem:bc-lagrangian}. Hence, the operator is symmetric and maximally isotropic, therefore self-adjoint.
\end{proof}

\begin{corollary}[Compact resolvent]
\label{cor:compact-resolvent-regularized}
For each fixed \(s\in[s_1,s_2]\), the operators \(\mathcal D^{(\delta)}_{s,k}\) and \(D^{(\delta)}_{s,k}\) have compact resolvent. In particular, their spectra are real, discrete, and consist of eigenvalues of finite multiplicity.
\end{corollary}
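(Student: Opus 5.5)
The statement is Corollary~\ref{cor:compact-resolvent-regularized}. I would prove it in two steps: first for the conjugated operator \(\mathcal D^{(\delta)}_{s,k}\), then transfer to \(D^{(\delta)}_{s,k}\) by unitary equivalence. Self-adjointness of \(\mathcal D^{(\delta)}_{s,k}\) is already given by Proposition~\ref{prop:fixed-s-selfadjoint}. In particular its spectrum is real and \(i\) lies in the resolvent set, so it suffices to show that \((\mathcal D^{(\delta)}_{s,k}-i)^{-1}\) is compact on \(L^2([0,T],dt;\C^2)\).

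The key step is an a priori estimate. On the domain, I want to show
\begin{equation*}
\|x\|_{H^1}\le C\bigl(\|\mathcal D^{(\delta)}_{s,k}x\|_{L^2}+\|x\|_{L^2}\bigr).
\end{equation*}
This is elementary because the differential expression \eqref{eq:Duw-operator} has the form \(J\partial_t+V(t)\). Here \(J=\begin{pmatrix}0&-1\\1&0\end{pmatrix}\) is invertible and \(V\) is a bounded real potential built from \(m(s,k)/f(t)\); it is bounded since \(f\) is smooth and positive on \([0,T]\). Hence
\begin{equation*}
x'=J^{-1}\bigl(\mathcal D_{s,k}x-Vx\bigr),
\qquad
\|x'\|_{L^2}\le \|\mathcal D_{s,k}x\|_{L^2}+\|V\|_\infty\|x\|_{L^2}.
\end{equation*}
The boundary conditions play no role in this estimate; they only cut out a closed subspace of \(H^1\). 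Consequently the graph norm on \(\Dom(\mathcal D^{(\delta)}_{s,k})\) is equivalent to the \(H^1\)-norm. The resolvent maps \(L^2\) boundedly into the domain with the graph norm, hence boundedly into \(H^1([0,T];\C^2)\).

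Next I compose with the Rellich embedding \(H^1([0,T];\C^2)\hookrightarrow L^2([0,T];\C^2)\), which is compact on the compact interval. This shows the resolvent is compact. This is the same argument used for the modewise APS operator earlier in the paper.

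The spectral conclusions then follow from the spectral theorem for self-adjoint operators with compact resolvent. The spectrum is real, consists only of isolated eigenvalues with no finite accumulation point, and each eigenspace is finite-dimensional.

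Finally I transfer everything to \(D^{(\delta)}_{s,k}=\mathcal U^{-1}\mathcal D^{(\delta)}_{s,k}\mathcal U\). The map \(\mathcal U\) is unitary from \(L^2([0,T],f\,dt;\C^2)\) to \(L^2([0,T],dt;\C^2)\), so
\begin{equation*}
(D^{(\delta)}_{s,k}-z)^{-1}=\mathcal U^{-1}(\mathcal D^{(\delta)}_{s,k}-z)^{-1}\mathcal U .
\end{equation*}
Compactness, self-adjointness and the spectrum are therefore preserved.

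I do not expect a genuine obstacle; the only point that needs checking is that \(\mathcal U\) preserves \(H^1\). Multiplication by \(f^{1/2}\) and by \(f^{-1/2}\) is bounded on \(H^1\) because \(f\) is smooth and bounded below, so this holds. As a result, the domain \(\mathcal U^{-1}\Dom(\mathcal D^{(\delta)}_{s,k})\) is a genuine \(H^1\)-domain with separated boundary conditions.
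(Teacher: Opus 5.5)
Your proposal is correct and follows the same route as the paper's proof: self-adjointness from Proposition~\ref{prop:fixed-s-selfadjoint}, then equivalence of the graph norm with the \(H^1\)-norm on the domain, then the compact embedding \(H^1([0,T];\C^2)\hookrightarrow L^2([0,T];\C^2)\), and finally transfer to \(D^{(\delta)}_{s,k}\) through the unitary \(\mathcal U\). Your argument is in fact more complete than the paper's, because you give the explicit estimate \(x'=J^{-1}(\mathcal D_{s,k}x-Vx)\) behind the norm equivalence, which the paper only asserts.
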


\begin{proof}
The operator is first order on the compact interval \([0,T]\) with \(H^1\)-domain and separated homogeneous boundary conditions. Hence the graph norm is equivalent to the \(H^1\)-norm on the domain, and the embedding
\begin{equation}
H^1([0,T];\mathbb C^2)\hookrightarrow L^2([0,T];\mathbb C^2)
\end{equation}
is compact. Unitary equivalence preserves compactness of the resolvent and the spectrum.
\end{proof}

\begin{lemma}[Continuity of the regularized mode family]
\label{lem:gap-continuity-regularized}
Fix \(k\in\mathcal K\) and \(\delta>0\). Then the maps
\begin{equation}
s\longmapsto \Lambda_{\mathrm{bc},k}(s),
\qquad
s\longmapsto \Lambda_{s,k}(0)
\end{equation}
are continuous in the Lagrangian Grassmannian of \((\mathcal H_k,\Omega)\), and the corresponding self-adjoint family
\begin{equation}
s\longmapsto \mathcal D^{(\delta)}_{s,k}
\end{equation}
it is continuous in the gap topology. Equivalently, the family
\begin{equation}
s\longmapsto D^{(\delta)}_{s,k}
\end{equation}
is gap-continuous.
\end{lemma}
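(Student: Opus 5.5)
The plan is to prove continuity of the two Lagrangian paths by writing explicit smooth parametrizations, and then to transfer continuity to the operator family through the standard correspondence between boundary Lagrangians and self-adjoint realizations.

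\textbf{The boundary Lagrangian path.} The endpoint line \(\Lambda_k^0(s)\) is the kernel of the covector \(\bigl(1+\alpha_0(s,k),\,1-\alpha_0(s,k)\bigr)\). This covector never vanishes, since \(|\alpha_0|<1\) and hence both entries are positive. Consequently
\[
\Lambda_k^0(s)=\operatorname{span}\bigl\{(1-\alpha_0(s,k),\,-(1+\alpha_0(s,k)))^{\mathsf T}\bigr\},
\]
and this spanning vector is nonzero and depends smoothly on \(s\), because \(s\mapsto\tanh(m(s,k)/\delta)\) is a composition of smooth maps. The orthogonal projection onto \(\Lambda_k^0(s)\) is therefore a smooth function of \(s\), which gives continuity in the Grassmannian, equivalently in the Lagrangian Grassmannian of \((\mathbb R^2,\omega)\). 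The same argument applies to \(\Lambda_k^T(s)\), using \(\alpha_T=-\alpha_0\). Projections onto a direct sum of orthogonal factors are block-diagonal, so \(\Lambda_{\mathrm{bc},k}(s)\) is continuous as well.

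\textbf{The Cauchy data path.} For \(\Lambda_{s,k}(0)\), the graph of \(\Phi_{s,k,0}(T)\) is the image of the injective linear map
\[
x_0\mapsto \bigl(x_0,\Phi_{s,k,0}(T)x_0\bigr).
\]
Its entries \(e^{\pm m(s,k)\ell(T)}\) are smooth in \(s\). A frame depending continuously on \(s\) with constant rank gives a continuous projection, for instance \(P=F(F^{\mathsf T}F)^{-1}F^{\mathsf T}\). This settles the first assertion.

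\textbf{Gap continuity of the operators.} The step needing care is the passage to gap continuity of \(s\mapsto\mathcal D^{(\delta)}_{s,k}\). I would avoid appealing to an abstract theorem and instead construct a smooth family of unitary or bounded invertible gauges. The differential expression depends on \(s\) only through the bounded zeroth-order term \(m(s,k)/f(t)\), which is norm-continuous in \(s\). The boundary condition is handled separately.

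To straighten the boundary condition, choose smooth \(2\times2\) rotation matrices \(R_0(s)\) and \(R_T(s)\) with \(R_{0}(s)\Lambda_k^0(s_1)=\Lambda_k^0(s)\) and \(R_T(s)\Lambda_k^T(s_1)=\Lambda_k^T(s)\). Then interpolate them by a smooth matrix function \(R(s,t)\) on \([0,T]\), for example \(R(s,t)=\exp\bigl(\chi(t)\theta_0(s)J'+(1-\chi(t))\theta_T(s)J'\bigr)\), where \(\chi\) is a cutoff equal to \(1\) near \(0\) and \(0\) near \(T\). Rotations by \(J'\) preserve \(\omega\), so multiplication by \(R(s,\cdot)\) is a unitary on \(L^2\) that maps \(\Dom(\mathcal D^{(\delta)}_{s_1,k})\) onto \(\Dom(\mathcal D^{(\delta)}_{s,k})\). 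Conjugating, \(R^{-1}\mathcal D^{(\delta)}_{s,k}R\) has the fixed domain \(\Dom(\mathcal D^{(\delta)}_{s_1,k})\) and equals the fixed first-order part plus a bounded multiplication operator. That operator is built from \(R^{-1}[\,\cdot\,,R]'\) and the mass term, and it is norm-continuous in \(s\).

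Two standard facts then finish the argument. A self-adjoint operator plus a norm-continuous bounded perturbation forms a gap-continuous family, and conjugation by norm-continuous unitaries preserves gap continuity. Gap continuity of \(D^{(\delta)}_{s,k}\) follows because \(\mathcal U\) is a fixed unitary independent of \(s\).

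The main obstacle is checking that the interpolating rotation both preserves the differential structure and produces only a bounded error term. The key observation is that the derivative part \(J'\partial_t\) commutes with \(J'\)-rotations up to the term \(\partial_tR\), which is bounded. If this step fails, the fallback is to cite the general result that continuity of the boundary Lagrangian implies gap continuity (Booss-Bavnbek–Lesch–Phillips, Boo\ss–Wojciechowski).
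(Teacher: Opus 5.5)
Your proposal is correct. For the two Lagrangian paths it is the paper's argument: continuity in $s$ of $\alpha_0(s,k)=\tanh(m(s,k)/\delta)$, of $\alpha_T=-\alpha_0$, and of $\Phi_{s,k,0}(T)$. You only make it explicit, via nonvanishing spanning vectors and the frame projection $F(F^{\mathsf T}F)^{-1}F^{\mathsf T}$.

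For gap continuity you take a genuinely different route. The paper simply appeals to standard results for separated self-adjoint first-order systems on a compact interval (Nicolaescu, B\"ar--Ballmann). You instead remove the $s$-dependence of the domain by hand. The key facts are that $\mathcal D_{s,k}=J'\partial_t-\frac{m(s,k)}{f}\sigma_1$ and that rotations $R=\exp(\varphi J')$ commute with $J'$. This gives $R^{-1}\mathcal D^{(\delta)}_{s,k}R=J'\partial_t-\partial_t\varphi-\frac{m(s,k)}{f}R^{-1}\sigma_1R$ on the fixed domain $\Dom(\mathcal D^{(\delta)}_{s_1,k})$. That is a fixed self-adjoint operator plus a bounded symmetric potential that is uniformly continuous in $s$. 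You then undo the conjugation with the norm-continuous unitaries $R(s,\cdot)$ and the fixed $\mathcal U$.

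The continuous angle lifts $\theta_0(s),\theta_T(s)$ that you take for granted are in fact explicit. The spanning vectors of $\Lambda_k^0(s)$ and $\Lambda_k^T(s)$ stay in the open fourth quadrant, at angles $-\arctan e^{2m(s,k)/\delta}$ and $-\arctan e^{-2m(s,k)/\delta}$ respectively. One small point of wording: multiplication by $R(s,\cdot)$ is unitary because $R$ is pointwise orthogonal. Its symplecticity is not what is used, since the domains match by construction.

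The paper's citation is shorter and covers any continuous path of boundary Lagrangians. However, it leaves to the reader the check that the cited theorems apply to $s$-dependent separated conditions; your stated fallback is exactly that citation. Your gauge argument is self-contained and proves slightly more. For smooth $A$, the family is conjugate, by unitaries smooth in $s$, to a smooth bounded perturbation of a single self-adjoint operator. That is the natural setting for the differentiable crossing statements in Theorem~\ref{thm:section5-main}.
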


\begin{proof}
The coefficients of \eqref{eq:Duw-operator} depend continuously on \(s\) through \(m(s,k)=k+A(s)\). The boundary coefficients
\begin{equation}
\alpha_0(s,k)=\tanh\!\Big(\frac{m(s,k)}{\delta}\Big),
\qquad
\alpha_T(s,k)=-\alpha_0(s,k)
\end{equation}
therefore vary continuously in \(s\), so \(\Lambda_{\mathrm{bc},k}(s)\) varies continuously. The same is true for the zero-mode propagator \(\Phi_{s,k,0}(T)\) and hence for its graph \(\Lambda_{s,k}(0)\). Standard results for separated self-adjoint first-order systems on a compact interval then yield gap continuity; see, for example, \cite{Nicolaescu1995,BaerBallmann2012}.
\end{proof}

\begin{corollary}[Modewise spectral flow equals Maslov index]
\label{cor:sf-equals-maslov}
Fix \(k\in\mathcal K\) and \(\delta>0\). Then
\begin{equation}
\operatorname{SF}\bigl(\{D^{(\delta)}_{s,k}\}_{s\in[s_1,s_2]};0\bigr)
=
\mu_{\mathrm{Maslov}}\!\bigl(\Lambda_{s,k}(0),\,\Lambda_{\mathrm{bc},k}(s)\bigr).
\end{equation}
\end{corollary}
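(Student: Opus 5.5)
The corollary will follow from the general spectral-flow/Maslov correspondence for self-adjoint first-order systems on a compact interval; see \cite{RobbinSalamon1995,CappellLeeMiller1994,Nicolaescu1995,KirkLesch}. The work is to check that its hypotheses hold for the regularized family, and then to transport the identity back to the original variables.

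\textbf{Reduction to the $\mathcal U$-conjugated family.} Since $D^{(\delta)}_{s,k}=\mathcal U^{-1}\mathcal D^{(\delta)}_{s,k}\mathcal U$ with $\mathcal U$ unitary and independent of $s$, the two families have the same spectra for each $s$. The conjugation is also a homeomorphism in the gap topology, so spectral flow through $0$ is unchanged. It therefore suffices to prove
\[
\operatorname{SF}\bigl(\{\mathcal D^{(\delta)}_{s,k}\}_{s\in[s_1,s_2]};0\bigr)=\mu_{\mathrm{Maslov}}\bigl(\Lambda_{s,k}(0),\Lambda_{\mathrm{bc},k}(s)\bigr).
\]

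\textbf{Hypotheses of the correspondence.} I will verify these in order.
\begin{enumerate}
\item Each $\mathcal D^{(\delta)}_{s,k}$ is self-adjoint with compact resolvent, by Proposition~\ref{prop:fixed-s-selfadjoint} and Corollary~\ref{cor:compact-resolvent-regularized}.
\item The family is gap-continuous, and both Lagrangian paths are continuous, by Lemma~\ref{lem:gap-continuity-regularized}.
\item The kernel identification. A complex zero mode solves \eqref{eq:uw-system} at $\lambda=0$, so $x(T)=\Phi_{s,k,0}(T)x(0)$. Its trace $(x(0),x(T))$ therefore lies in $\Lambda_{s,k}(0)\cap\Lambda_{\mathrm{bc},k}(s)$, viewed in the complexification. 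Conversely, any pair in this intersection determines a unique solution by propagation, and that solution satisfies the boundary conditions. Hence the trace map is an isomorphism
\[
\ker\mathcal D^{(\delta)}_{s,k}\cong\bigl(\Lambda_{s,k}(0)\cap\Lambda_{\mathrm{bc},k}(s)\bigr)\otimes_{\mathbb R}\mathbb C,
\]
using Lemma~\ref{lem:real-complex-kernel}. This gives a dimension-preserving identification of crossings on both sides, and in particular invertibility at the endpoints exactly when the Lagrangians are transverse there.
\end{enumerate}

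\textbf{Main obstacle: matching signs at crossings.} Spectral flow is defined through eigenvalue crossings of a family that also varies its domain. The Maslov index is defined through crossing forms of the Lagrangian pair. To match them I would use homotopy invariance of both sides with fixed endpoints.

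First, perturb the path of $A$ so that all crossings are regular, which is possible since the spaces involved are finite-dimensional. I would then compare at a regular crossing $s_*$ with kernel element $x$. Differentiating the eigenvalue $\lambda(s)$ with $\lambda(s_*)=0$, via Green's identity \eqref{eq:green-identity-Omega}, should give $\lambda'(s_*)\|x\|^2$ as the sum of two terms:
\begin{itemize}
\item a bulk term $\langle \partial_s\mathcal D\, x,x\rangle=-2\,m_s\int_0^T u w/f\,dt$, where $m_s=\partial_s m(s,k)$;
\item a boundary term equal to the crossing form of $\Lambda_{\mathrm{bc},k}$.
\end{itemize}
The bulk term should be exactly the crossing form of the Cauchy-data path $\Lambda_{s,k}(0)$, computed from $\partial_s\Phi_{s,k,0}$. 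Hence the sign of $\lambda'(s_*)$ equals the signature of the relative crossing form, possibly up to a global sign convention for $\Omega$. I expect fixing that orientation convention, so that the identity holds with $+$ rather than $-$, to be the delicate point. I would settle it by checking a single explicit crossing in the model. If the direct computation becomes unwieldy, I would instead cite the general theorem (for example \cite[Thm.~0.4]{Nicolaescu1995} or \cite{BoossWojciechowski}) directly, with the hypotheses above.
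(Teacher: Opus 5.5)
Your proposal is correct, and its backbone is the paper's proof. The paper checks Proposition~\ref{prop:fixed-s-selfadjoint}, Corollary~\ref{cor:compact-resolvent-regularized} and Lemma~\ref{lem:gap-continuity-regularized}, then simply cites \cite[Theorem~3.14]{Nicolaescu1995}, which is your fallback. Your additions (the $s$-independent unitary reduction, the trace isomorphism between the kernel and $\Lambda_{s,k}(0)\cap\Lambda_{\mathrm{bc},k}(s)$, and the crossing-by-crossing comparison) amount to reproving the cited theorem in this model, and that computation does go through.

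The sign needs no ``check on one example''; such a check would presuppose the conventions it is meant to fix. Write $\hat x_s=(x_s(0),x_s(T))$ for the eigenvector curve, and $\hat y_s$ for the trace of the curve $y_s$ with $\mathcal D_{s,k}y_s=0$, $y_{s_*}=x$. Green's identity \eqref{eq:green-identity-Omega} then gives exactly $\lambda'(s_*)\|x\|^2=\Omega(\hat x,\partial_s\hat y_s)-\Omega(\hat x,\partial_s\hat x_s)$. This is the Robbin--Salamon relative crossing form of the ordered pair $(\Lambda_{s,k}(0),\Lambda_{\mathrm{bc},k}(s))$ for the paper's $\Omega$, so the identity holds with a plus sign.

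In the model everything is explicit. At $m(s_*,k)=0$ the kernel is spanned by $x\equiv(1,-1)$. The two crossing forms on $\hat x$ are $2A'(s_*)\ell(T)$ and $4A'(s_*)/\delta$. Since $\|x\|^2=2T$, this gives $\lambda'(s_*)=A'(s_*)\bigl(\ell(T)-2/\delta\bigr)/T$, consistent with Corollary~\ref{cor:transverse-boundary-zeros}. Your route therefore yields the actual sign of every transverse crossing, which the paper never states; the citation route gains brevity and generality.

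One caveat concerns your homotopy step. Perturbing $A$ rel endpoints and invoking homotopy invariance requires $D^{(\delta)}_{s_i,k}$ itself to be invertible, not merely $m(s_i,k)\neq 0$. By Proposition~\ref{prop:explicit-zero-mode-criterion}, these two conditions coincide only when $\delta\neq 2/\ell(T)$. For $\delta=2/\ell(T)$, $0$ is an eigenvalue for every $s$ and the relative crossing form vanishes identically, so no perturbation of $A$ produces regular crossings. That case is formally allowed by ``$\delta>0$'', is also passed over in the paper's proof, and needs conventions for non-invertible endpoints. When $\delta\neq 2/\ell(T)$, what makes the perturbation work is the explicit crossing form $2A'(s_*)\bigl(\ell(T)-2/\delta\bigr)$, not finite-dimensionality.
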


\begin{proof}
By Proposition~\ref{prop:fixed-s-selfadjoint}, Corollary~\ref{cor:compact-resolvent-regularized}, and Lemma~\ref{lem:gap-continuity-regularized}, the family \(\{D^{(\delta)}_{s,k}\}_{s\in[s_1,s_2]}\) is a continuous path of self-adjoint Fredholm operators in the standard separated-boundary setting. The usual spectral-flow/Maslov correspondence therefore applies; see, for example, \cite[Theorem~3.14]{Nicolaescu1995}.
\end{proof}

\begin{remark}[Scope]
\label{rem:scope-maslov-regularized-only}
The spectral-flow/Maslov correspondence is used here only for the continuous regularized family \(D^{(\delta)}_{s,k}\). It is not applied to the APS family across boundary zeros \(k+A(s)=0\), since the APS projector is discontinuous there.
\end{remark}

\subsection{Explicit zero-mode criterion and proof of the main theorem}\label{subsec:matching_clean}

We now compute the zero-mode crossing condition. By the previous subsections,
\begin{equation}
0\in\Spec\bigl(D^{(\delta)}_{s,k}\bigr)
\quad\Longleftrightarrow\quad
\Lambda_{s,k}(0)\cap\Lambda_{\mathrm{bc},k}(s)\neq\{0\}.
\end{equation}
Equivalently, there exists a nonzero solution of the zero-mode system satisfying the regularized endpoint conditions \eqref{eq:bc_tanh_clean}.

We repeatedly use the non-degenerate condition
\begin{equation}\label{eq:non-degenerate-delta}
\delta \neq \frac{2}{\ell(T)},
\qquad
\ell(T)=\int_0^T \frac{d\tau}{f(\tau)}.
\end{equation}

Choose a nonzero spanning vector for the left endpoint line \(\Lambda_k^0(s)\):
\begin{equation}
v_0(s,k)=\binom{1-\alpha_0(s,k)}{-(1+\alpha_0(s,k))}\in\Lambda_k^0(s).
\end{equation}
Let
\begin{equation}
x(t;s,\lambda,k)=\binom{u(t;s,\lambda,k)}{w(t;s,\lambda,k)}
\end{equation}
denote the unique solution of \eqref{eq:uw-system} with initial condition
\begin{equation}
x(0;s,\lambda,k)=v_0(s,k).
\end{equation}
Define the scalar matching function
\begin{equation}\label{eq:F_matching_clean}
F(s,\lambda,k)=
(1+\alpha_T(s,k))\,u(T;s,\lambda,k)
+
(1-\alpha_T(s,k))\,w(T;s,\lambda,k).
\end{equation}
Then \(F(s,\lambda,k)=0\) if and only if the corresponding solution satisfies both regularized endpoint conditions.

\begin{proposition}[Explicit zero-mode criterion for the regularized family]
\label{prop:explicit-zero-mode-criterion}
Fix a mode \(k\) and write \(m(s,k)=k+A(s)\). For \(\lambda=0\), the matching function \eqref{eq:F_matching_clean} is
\begin{equation}\label{eq:F-zero-explicit}
F(s,0,k)
=
\Bigl(1-\tanh\!\frac{m(s,k)}{\delta}\Bigr)^2 e^{\,m(s,k)\ell(T)}
-
\Bigl(1+\tanh\!\frac{m(s,k)}{\delta}\Bigr)^2 e^{-\,m(s,k)\ell(T)}.
\end{equation}
Equivalently,
\begin{equation}\label{eq:F-zero-factor}
F(s,0,k)=0
\quad\Longleftrightarrow\quad
m(s,k)\Bigl(\ell(T)-\frac{2}{\delta}\Bigr)=0.
\end{equation}
In particular, under \eqref{eq:non-degenerate-delta},
\begin{equation}
F(s,0,k)=0
\quad\Longleftrightarrow\quad
m(s,k)=0.
\end{equation}
Hence, zero modes of the regularized family occur exactly at the boundary zeros.
\end{proposition}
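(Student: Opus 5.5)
The plan is to evaluate the matching function \eqref{eq:F_matching_clean} at \(\lambda=0\) directly. At \(\lambda=0\) the system \eqref{eq:uw-system} decouples and its solution operator is the diagonal transfer matrix \eqref{eq:transfer_clean}, so nothing beyond that matrix is needed. Write \(\alpha_0=\alpha_0(s,k)\), \(m=m(s,k)\) and \(\ell=\ell(T)\) throughout.

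First, the solution with initial value \(v_0(s,k)\) is
\begin{equation}
x(T;s,0,k)=\Phi_{s,k,0}(T)\,v_0(s,k),
\end{equation}
which gives
\begin{equation}
u(T)=(1-\alpha_0)e^{m\ell},\qquad w(T)=-(1+\alpha_0)e^{-m\ell}.
\end{equation}
Substituting \(\alpha_T=-\alpha_0\) into \eqref{eq:F_matching_clean}, the right-end coefficients become \(1+\alpha_T=1-\alpha_0\) and \(1-\alpha_T=1+\alpha_0\). This yields \eqref{eq:F-zero-explicit} at once.

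Second, I would solve \(F(s,0,k)=0\). Since \(|\tanh|<1\), both \(1\pm\alpha_0\) are strictly positive, so the equation is equivalent to
\begin{equation}
\Bigl(\frac{1+\alpha_0}{1-\alpha_0}\Bigr)^2=e^{2m\ell}.
\end{equation}
Both sides are positive, so taking the positive square root is legitimate. Applying the identity \(\frac{1+\tanh z}{1-\tanh z}=e^{2z}\) with \(z=m/\delta\) turns this into \(e^{2m/\delta}=e^{m\ell}\). By injectivity of the exponential this holds iff \(2m/\delta=m\ell\), i.e. \(m(\ell-2/\delta)=0\), which is \eqref{eq:F-zero-factor}. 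Under \eqref{eq:non-degenerate-delta} the factor \(\ell-2/\delta\) is nonzero, so \(F(s,0,k)=0\iff m(s,k)=0\).

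Third, I would connect this back to zero modes of \(D^{(\delta)}_{s,k}\). The line \(\Lambda_k^0(s)\) is one-dimensional and is spanned by \(v_0(s,k)\), which is nonzero because its first entry \(1-\alpha_0\) is positive. By uniqueness for the linear ODE, every solution satisfying the left condition is a scalar multiple of \(x(\cdot\,;s,0,k)\). Hence a nonzero kernel element exists iff \(F(s,0,k)=0\). By Lemma~\ref{lem:real-complex-kernel} it suffices to work with real solutions, and unitary equivalence via \(\mathcal U\) transfers the conclusion to \(D^{(\delta)}_{s,k}\).

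The computation itself is routine. The step needing the most care is the root extraction in the second step: one must check the positivity of \(1\pm\alpha_0\) so that no sign ambiguity or spurious solution appears, and must recognise that \(m=0\) always solves the equation while the degenerate value \(\delta=2/\ell(T)\) makes \(F(s,0,k)\) vanish identically in \(s\).
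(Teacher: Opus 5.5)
Your proposal is correct and follows the paper's proof essentially step for step. You evaluate the decoupled $\lambda=0$ solution starting from $v_0(s,k)$, substitute $\alpha_T=-\alpha_0$, and use $\frac{1+\tanh z}{1-\tanh z}=e^{2z}$ to reduce $F(s,0,k)=0$ to $m(s,k)\bigl(\ell(T)-2/\delta\bigr)=0$. Where the paper compares $e^{2m\ell(T)}$ with $e^{4m/\delta}$, you take a positive square root; the two are equivalent, and your positivity check on $1\pm\alpha_0$ makes explicit what the paper leaves implicit. Your third step, linking $F(s,0,k)=0$ to a nonzero kernel element via the one-dimensionality of $\Lambda_k^0(s)$ and ODE uniqueness, spells out what the paper only states when it introduces the matching function.
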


\begin{proof}
At \(\lambda=0\), the system decouples:
\begin{equation}
u(t)=u(0)e^{\,m(s,k)\ell(t)},
\qquad
w(t)=w(0)e^{-\,m(s,k)\ell(t)},
\end{equation}
where
\begin{equation}
\ell(t)=\int_0^t \frac{d\tau}{f(\tau)}.
\end{equation}
With
\begin{equation}
x(0;s,0,k)=v_0(s,k)
=
\binom{1-\alpha_0(s,k)}{-(1+\alpha_0(s,k))},
\qquad
\alpha_0(s,k)=\tanh\!\Bigl(\frac{m(s,k)}{\delta}\Bigr),
\end{equation}
we get
\begin{equation}
u(T)=(1-\alpha_0(s,k))e^{\,m(s,k)\ell(T)},
\qquad
w(T)=-(1+\alpha_0(s,k))e^{-\,m(s,k)\ell(T)}.
\end{equation}
Since \(\alpha_T(s,k)=-\alpha_0(s,k)\), substitution into \eqref{eq:F_matching_clean} gives \eqref{eq:F-zero-explicit}. Using
\begin{equation}
\frac{1+\tanh z}{1-\tanh z}=e^{2z},
\end{equation}
the equation \(F(s,0,k)=0\) is equivalent to
\begin{equation}
e^{2m(s,k)\ell(T)}
=
\left(\frac{1+\tanh(m(s,k)/\delta)}{1-\tanh(m(s,k)/\delta)}\right)^2
=
e^{4m(s,k)/\delta},
\end{equation}
which is exactly \eqref{eq:F-zero-factor}. The last statement follows from \eqref{eq:non-degenerate-delta}.
\end{proof}

\begin{corollary}[Transverse boundary zeros are isolated regularized crossings]
\label{cor:transverse-boundary-zeros}
Assume \eqref{eq:non-degenerate-delta}. Let \(s_*\) be a boundary-zero for mode \(k\), and assume \(A'(s_*)\neq 0\). Then \(s_*\) is an isolated simple zero of the scalar function \(s\mapsto F(s,0,k)\). More precisely,
\begin{equation}
\partial_s F(s,0,k)\big|_{s=s_*}
=
2A'(s_*)\Bigl(\ell(T)-\frac{2}{\delta}\Bigr)\neq 0.
\end{equation}
\end{corollary}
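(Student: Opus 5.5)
The plan is to differentiate the closed form \eqref{eq:F-zero-explicit} from Proposition~\ref{prop:explicit-zero-mode-criterion} directly. I would not work with the general matching function $F(s,\lambda,k)$. Since $F(s,0,k)$ depends on $s$ only through $m=m(s,k)=k+A(s)$, write $F(s,0,k)=G(m(s,k))$ with
\begin{equation*}
G(m)=\bigl(1-a(m)\bigr)^2 e^{m\ell(T)}-\bigl(1+a(m)\bigr)^2 e^{-m\ell(T)},
\qquad a(m)=\tanh(m/\delta).
\end{equation*}
$G$ is smooth (indeed real-analytic) in $m$, and $s\mapsto m(s,k)$ is smooth because $A$ is. The chain rule then gives $\partial_s F(s,0,k)=G'(m(s,k))\,A'(s)$, so everything reduces to computing $G'(0)$.

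Next I compute $G'(m)$ by the product rule:
\begin{equation*}
G'(m)=-2(1-a)a'e^{m\ell(T)}+(1-a)^2\ell(T)e^{m\ell(T)}-2(1+a)a'e^{-m\ell(T)}+(1+a)^2\ell(T)e^{-m\ell(T)}.
\end{equation*}
At $m=0$ we have $a(0)=0$ and $a'(0)=1/\delta$, and all exponentials equal $1$. Hence
\begin{equation*}
G'(0)=-\frac{2}{\delta}+\ell(T)-\frac{2}{\delta}+\ell(T)=2\Bigl(\ell(T)-\frac{2}{\delta}\Bigr).
\end{equation*}
Since $m(s_*,k)=0$, evaluating the chain rule at $s_*$ yields the stated formula $\partial_sF(s,0,k)|_{s=s_*}=2A'(s_*)(\ell(T)-2/\delta)$. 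This is nonzero because $A'(s_*)\neq0$ by hypothesis and $\ell(T)\neq 2/\delta$ by \eqref{eq:non-degenerate-delta}.

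Finally, I conclude that $s_*$ is isolated and simple. It is a zero, since $G(0)=1-1=0$, and its derivative there is nonzero, so it is a simple zero. Because $\partial_sF(\cdot,0,k)$ is continuous, it stays nonzero on a neighbourhood of $s_*$, so $F(\cdot,0,k)$ is strictly monotone there and $s_*$ is its only zero in that neighbourhood. Alternatively, Proposition~\ref{prop:explicit-zero-mode-criterion} identifies the zeros of $F(\cdot,0,k)$ with the zeros of $m(\cdot,k)$, and $A'(s_*)\neq 0$ makes $s_*$ an isolated zero of $m$.

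There is no real obstacle here. The only point needing care is the derivative of the $\tanh$ factor: the two $-2/\delta$ contributions must come out with the same sign rather than cancelling, because this is exactly what produces the degenerate value $\delta=2/\ell(T)$.
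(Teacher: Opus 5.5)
Your proposal is correct and follows the paper's proof: differentiate the closed form \eqref{eq:F-zero-explicit} in $s$, then use $m(s_*,k)=0$ and $\partial_s m(s_*,k)=A'(s_*)$. Your chain-rule computation $G'(0)=2\bigl(\ell(T)-\tfrac{2}{\delta}\bigr)$, based on $\tanh(0)=0$ and $\tfrac{d}{dm}\tanh(m/\delta)\big|_{m=0}=1/\delta$, and your local-monotonicity argument for isolation fill in details the paper leaves implicit.
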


\begin{proof}
Differentiate \eqref{eq:F-zero-explicit} with respect to \(s\), and use \(m(s_*,k)=0\) together with \(\partial_s m(s_*,k)=A'(s_*)\).
\end{proof}

\begin{proof}[Proof of Theorem~\ref{thm:section5-main}]
The zero-mode statement is exactly Proposition~\ref{prop:explicit-zero-mode-criterion}. If moreover \(A'(s_*)\neq 0\), then Corollary~\ref{cor:transverse-boundary-zeros} shows that \(s_*\) is an isolated regular crossing. Corollary~\ref{cor:sf-equals-maslov} then places this crossing in the standard modewise spectral-flow/Maslov formalism.
\end{proof}

\begin{proof}[Proof of Corollary~\ref{cor:global-zero-mode-criterion}]
By the orthogonal decomposition \eqref{eq:global-regularized-decomposition},
\begin{equation}
0\in\Spec\bigl(D^{(\delta)}_{s}\bigr)
\quad\Longleftrightarrow\quad
\exists\,k\in\mathcal K \text{ such that } 0\in\Spec\bigl(D^{(\delta)}_{s,k}\bigr).
\end{equation}
The result, therefore, follows from Theorem~\ref{thm:section5-main}.
\end{proof}

\begin{remark}
Theorem~\ref{thm:section5-main} identifies exactly when regularized zero modes occur and when the corresponding crossings are isolated and regular. We do not pursue here a general closed sign formula for arbitrary nontransverse crossings.
\end{remark}

\subsection{Three explicit \(A(s)\)-paths}\label{subsec:three_examples_maslov}

We illustrate the regularized crossing mechanism for three explicit one-parameter gauge families:
\begin{itemize}
\item \textbf{Example 1:} \(A(s)=s-\tfrac12\), \quad \(s\in[0,1]\).
\item \textbf{Example 2:} \(A(s)=3s-\tfrac14\), \quad \(s\in[0,1]\).
\item \textbf{Example 3:} \(A(s)=\sin(4\pi s)\), \quad \(s\in[\varepsilon,1-\varepsilon]\), with \(\varepsilon=\tfrac1{24}\).
\end{itemize}
For these plotted examples, we restrict to \(k\in\mathbb Z\). Since the global problem decomposes into Fourier modes, the figures display the modewise branches corresponding to the finite number of modes that can contribute crossings on the chosen parameter interval.

In each case, the theoretical zero-mode set is determined by
\begin{equation}
k+A(s)=0
\end{equation}
under the non-degenerate assumption. The numerical continuation is illustrative: it visualizes nearby branches \(F(s,\lambda,k)=0\) and distinguishes transverse crossings from nontransverse touching points.

In Examples 1 and 2, every interior boundary-zero is transverse, since \(A'(s)=1\) and \(A'(s)=3\), respectively. In Example 3, nontransverse boundary zeros occur only in the periodic spin structure, for the modes \(k=\pm1\), at
\begin{equation}
s_* \in \left\{ \frac18,\frac38,\frac58,\frac78 \right\}.
\end{equation}
At these points \(A'(s_*)=0\), so they are touching points rather than transverse crossings. The cutoff \(s\in[\varepsilon,1-\varepsilon]\) ensures endpoint invertibility in Example 3.

\subsection{Numerical plots: branch tracking and crossings}\label{subsec:plots_maslov}

Figures~\ref{fig:maslov_example1}-\ref{fig:maslov_example3} display modewise branches \(s\mapsto \lambda(s,k)\) for selected Fourier modes \(k\). A zero-mode crossing occurs exactly when a branch meets \(\lambda=0\), equivalently, when
\begin{equation}
F(s,\lambda,k)=0
\quad\text{with}\quad
\lambda=0.
\end{equation}
Thus, the figures provide a numerical visualization of the criterion proved above.

For branch tracking, we use the Riccati reformulation of the ratio \(r=w/u\), which is numerically more stable than solving \(F(s,\lambda,k)=0\) from scratch at each step, especially near degenerate configurations such as Example~3. This is numerically equivalent to the same boundary-value problem with the same endpoint conditions \eqref{eq:bc_tanh_clean}.

\begin{figure}[t]
  \centering
  \includegraphics[width=.85\linewidth]{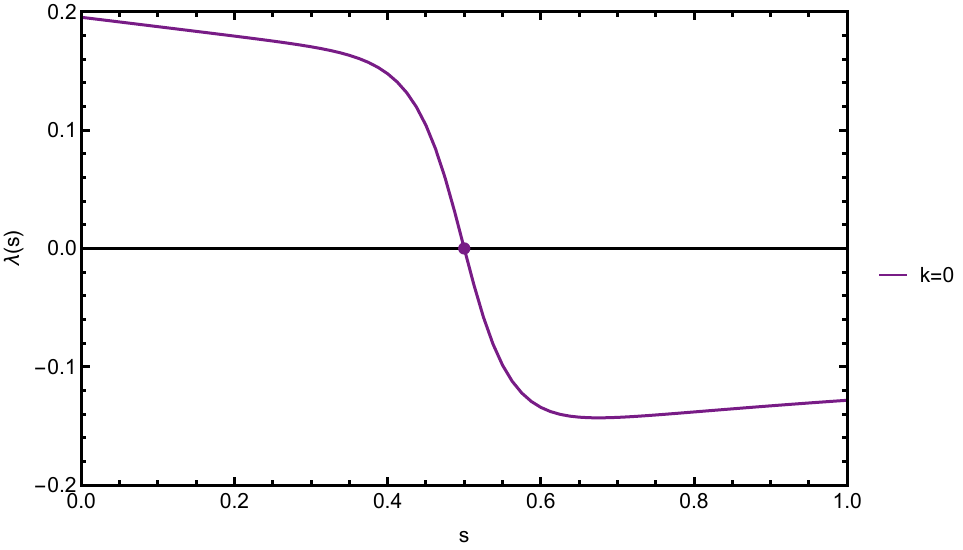}
  \caption{Example 1 (\(A(s)=s-\tfrac12\)): tracked branches \(\lambda(s,k)\) for representative modes \(k\). Boundary-zero locations are marked.}
  \label{fig:maslov_example1}
\end{figure}

\begin{figure}[t]
  \centering
  \includegraphics[width=.85\linewidth]{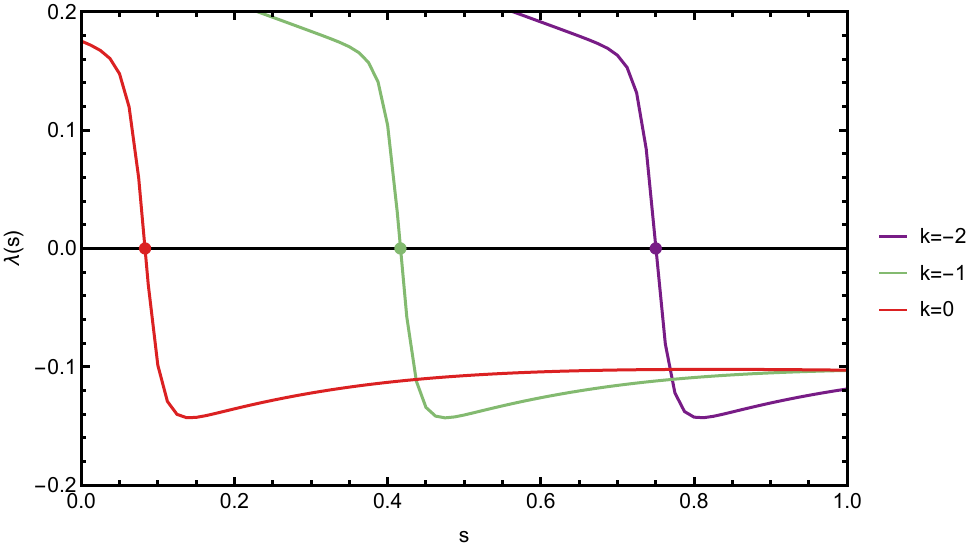}
  \caption{Example 2 (\(A(s)=3s-\tfrac14\)): tracked branches \(\lambda(s,k)\) for selected modes.}
  \label{fig:maslov_example2}
\end{figure}

\begin{figure}[t]
  \centering
  \includegraphics[width=.85\linewidth]{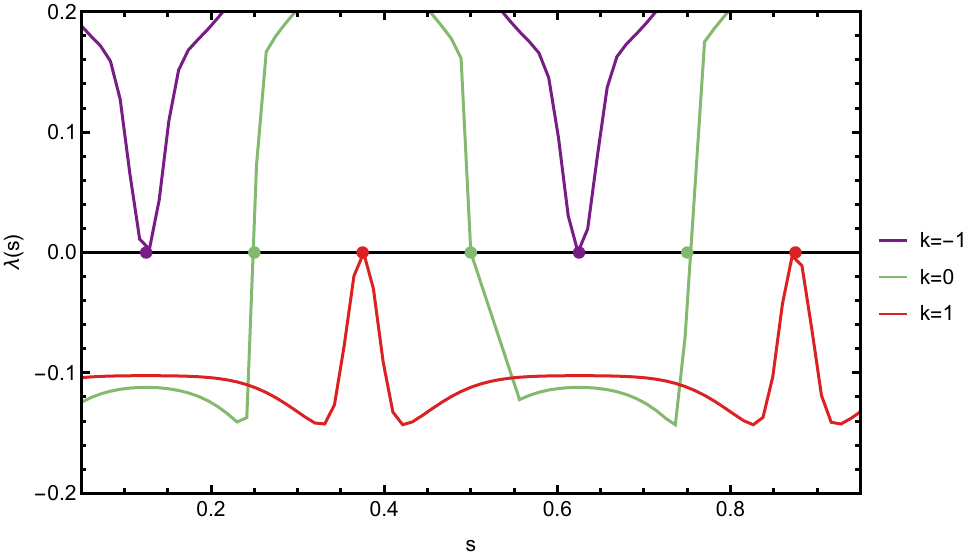}
  \caption{Example 3 (\(A(s)=\sin(4\pi s)\) on \([\varepsilon,1-\varepsilon]\), \(\varepsilon=\tfrac1{24}\)): tracked branches for the selected modes, with endpoint cutoff ensuring invertibility at the ends.}
  \label{fig:maslov_example3}
\end{figure}

The figures are produced numerically from the Riccati reformulation, with the same endpoint conditions as in \eqref{eq:bc_tanh_clean}. Boundary zeros \(m(s,k)=0\) are used as canonical starting points for branch continuation, in agreement with Proposition~\ref{prop:explicit-zero-mode-criterion}.

\paragraph{Conclusion of the regularized analysis.}
Under the non-degenerate assumption, the regularized zero modes occur exactly at \(k+A(s)=0\), and transverse boundary zeros yield isolated regular crossings to which the standard modewise spectral-flow/Maslov formalism applies.

\appendix

\section{Heun reduction of the radial Dirac equation}\label{app:heun}

The main text uses only the following structural facts about the radial mode equation:
(i) after decoupling, it becomes a second-order scalar ODE,
(ii) under the explicit warped function $f(t)=e^t+\alpha e^{-t}$ its coefficients become rational after $z=e^t$,
and (iii) the resulting equation has four \emph{regular singular points}, hence belongs to the general Heun class.
We therefore do \emph{not} need closed-form Heun solutions; what matters for the spectral problem is the singularity
structure and the associated Frobenius exponents.

Throughout, we fix a Fourier mode $k$ and write.
\begin{equation}
m=k+A, \qquad f(t)>0,
\end{equation}
where $f$ is the warping function, $m \neq 0$ and $\lambda\in\mathbb{R}$ is the spectral parameter in $D\psi=\lambda\psi$.

\medskip
\noindent

Our goal in this appendix is to show the Heun reduction in a self-contained way:
\S\ref{app:heun-decoupled} decoupling of the first-order Dirac system to a scalar ODE,
\S\ref{app:heun-liouville} Liouville substitution to remove the first-derivative term,
\S\ref{app:heun-exp}-\S\ref{app:heun-z-sing} rational $z$-equation and identification of its singular points,
\S\ref{app:heun-frob-0}-\S\ref{app:heun-frob-inf} Frobenius exponents at $z=0$, $z=\pm i\sqrt{\alpha}$ and $z=\infty$,
\S\ref{app:heun-classification2}-\S\ref{app:heun-role2} Summary of the Heun classification and its role in the shooting formulation used in Section~3.3.

\noindent
The singularity analysis and extraction of characteristic exponents uses the standard Frobenius method for
second-order linear ODEs with regular singular points, together with the usual reduction of the point
$z=\infty$ via $z=1/\zeta$; see, e.g., Ince~\cite{InceODE}. Finally, the fact that a second-order equation
with four regular singularities belongs to the general Heun class is standard; see Ronveaux~\cite{RonveauxHeun}.

\subsection{Decoupled radial equation}\label{app:heun-decoupled}
We begin by decoupling the first-order radial Dirac system into a single scalar equation
for $u$, which will serve as the master ODE governing each Fourier mode. In \autoref{subsec:bulk-spectrum} we obtained the coupled system
\begin{subequations}
\begin{align}
v' + \frac{f'}{2f}\,v + \frac{m}{f}\,v &= -\,i\lambda\,u, \label{eq:mode-v-app}\\
u' + \frac{f'}{2f}\,u - \frac{m}{f}\,u &= -\,i\lambda\,v. \label{eq:mode-u-app}
\end{align}
\end{subequations}
Eliminating \(v\) yields a second-order equation for \(u\). Differentiating \eqref{eq:mode-u-app} and using
\eqref{eq:mode-v-app} to substitute for \(v\) and \(v'\) gives
\begin{equation}\label{eq:u-second-order}
u'' + \frac{f'}{f}\,u'
+\Bigg[
\lambda^2
- \frac{m^2}{f^2}
+ \frac{m f'}{f^2}
+ \frac{f''}{2f}
- \frac{(f')^2}{4f^2}
\Bigg]u=0.
\end{equation}
This is the master scalar equation governing each Fourier mode of the bulk spectrum.

\subsection{Liouville transformation}\label{app:heun-liouville}
In this section, we remove the first-derivative term by the standard Liouville substitution $u=f^{-1/2}w$; to this end, we apply the Liouville substitution
\begin{equation}\label{eq:liouville-sub}
u(t)=f(t)^{-1/2}\,w(t).
\end{equation}
A direct computation shows that \eqref{eq:u-second-order} is equivalent to the Schr\"odinger-type equation
\begin{equation}\label{eq:w-schrodinger}
w'' + V(t)\,w=0,
\end{equation}
where the effective potential is
\begin{equation}\label{eq:V-effective}
V(t)=\lambda^2-\frac{m^2}{f(t)^2}+ \frac{m f'(t)}{f(t)^2}.
\end{equation}
In particular, \eqref{eq:w-schrodinger} isolates the singular structure coming from the warp factor \(f\).

\subsection{Exponential change of variables}\label{app:heun-exp}
Now we convert the equation into a rational-coefficient ODE via $z=e^t$.
The warp factor for our metric is
\begin{equation}
f(t)=e^{t}+\alpha e^{-t}\qquad (\alpha>0),
\end{equation}
and introduce the exponential coordinate
\begin{equation}\label{eq:z-exp}
z=e^{t},\qquad z\in (1,e^{T}).
\end{equation}
Then
\begin{equation}
f(t)=z+\frac{\alpha}{z}=\frac{z^2+\alpha}{z},\qquad
\frac{d}{dt}=z\frac{d}{dz}.
\end{equation}
Under this change of variables, \eqref{eq:w-schrodinger} becomes
\begin{equation}\label{eq:w-z-form}
\frac{d^2 w}{dz^2}+\frac{1}{z}\frac{dw}{dz}+\frac{1}{z^2}\,Q(z)\,w=0,
\end{equation}
where
\begin{equation}\label{eq:Qz}
Q(z)=
\lambda^2
-\frac{m^2 z^2}{(z^2+\alpha)^2}
+\frac{m(z^2-\alpha)}{(z^2+\alpha)^2}.
\end{equation}

\subsection{The \(z\)-equation and its singular points}\label{app:heun-z-sing}
After the Liouville substitution \(u=f^{-1/2}w\) and the change of variable \(z=e^t\), the mode equation takes the form
\begin{equation}\label{eq:w-z-form-app2}
w_{zz}+\frac{1}{z}w_{z}+\frac{1}{z^{2}}Q(z)\,w=0,
\end{equation}
where \(Q(z)\) is the rational function displayed in \eqref{eq:Qz}. The singular points of
\eqref{eq:w-z-form-app2} are exactly the points where the coefficients fail to be analytic: \(z=0\), the zeros of
\(z^{2}+\alpha\) (namely \(z=\pm i\sqrt{\alpha}\)), and \(z=\infty\). Each of these points is a \emph{regular singular}
point (see \autoref{app:heun-frob-0}-\ref{app:heun-frob-inf} below). Hence \eqref{eq:w-z-form-app2} is a Fuchsian
second-order equation with four regular singularities, which places it in the (general) Heun class.

\medskip\noindent
\emph{Remark.} We will not need the explicit canonical Heun parameters in the main text. What matters for our purposes
is the singularity structure and the fact that the spectral problem can be handled via the shooting formulation in \autoref{subsec:bulk-spectrum} without invoking closed-form Heun solutions.

\subsection{Frobenius analysis at \(z=0\)}\label{app:heun-frob-0}
Here, we discuss the Frobenius exponents at $z=0$, which determine the local behavior of solutions
near the left endpoint $t\to -\infty$ in $z=e^{t}$ coordinate.

As \(z\to 0\) we have \(z^{2}+\alpha=\alpha+O(z^{2})\), hence expanding \(Q(z)\) gives
\begin{equation}
Q(z)=\lambda^{2}- \frac{m}{\alpha}+O(z^{2}).
\end{equation}
Therefore, the leading model near \(z=0\) is
\begin{equation}
w_{zz}+\frac{1}{z}w_{z}+\frac{1}{z^{2}}\Big(\lambda^{2} - \frac{m}{\alpha}\Big)w=0.
\end{equation}
Substituting a Frobenius ansatz \(w\sim z^{\rho}\) yields the indicial equation
\begin{equation}
\rho(\rho-1)+\rho+\Big(\lambda^{2} - \frac{m}{\alpha}\Big)=0
\qquad\Longleftrightarrow\qquad 
\rho^{2}+\Big(\lambda^{2}- \frac{m}{\alpha}\Big)=0.
\end{equation}
Hence, the characteristic exponents at \(z=0\) are
\begin{equation}
\rho_{0}^{\pm}=\pm i\sqrt{\lambda^{2}- \frac{m}{\alpha}}
\qquad \text{when} \ \lambda^{2}- \tfrac{m}{\alpha}>0,
\end{equation}
and 
\begin{equation}
\rho_{0}^{\pm}=\pm \sqrt{\frac{m}{\alpha} - \lambda^{2}} \qquad \text{when} \ \lambda^{2}- \tfrac{m}{\alpha}<0
\end{equation}

\subsection{Frobenius analysis at \(z=\pm i\sqrt{\alpha}\)}\label{app:heun-frob-pm}
Let \(z_{\ast}\in\{+i\sqrt{\alpha},-i\sqrt{\alpha}\}\) and write \(z=z_{\ast}+\zeta\). Since \(z^{2}+\alpha\) has a
simple zero at \(z=z_{\ast}\), we have
\begin{equation}
z^{2}+\alpha = 2z_{\ast}\,\zeta + O(\zeta^{2}),
\end{equation}
We observe \(\frac{1}{z^{2}}Q(z)\) has a double pole at \(\zeta=0\). In particular, each of
\(z=\pm i\sqrt{\alpha}\) is a \emph{regular singular point}. Writing \(w\sim \zeta^{\rho}\) and collecting the most
singular terms yield an indicial equation of the form
\begin{equation}
\rho(\rho-1)+p_{\ast}\rho+q_{\ast}=0,
\end{equation}
where \(p_{\ast},q_{\ast}\) depend explicitly on \((m,\alpha)\) through the principal part of the Laurent expansion of
\(Q(z)\) at \(z=z_{\ast}\). We record the corresponding characteristic exponents as
\begin{equation}
\rho_{\ast}^{\pm}=\frac{1-p_{\ast}}{2}\pm\sqrt{\Big(\frac{1-p_{\ast}}{2}\Big)^{2}-q_{\ast}}.
\end{equation}
(Explicit expressions for \(p_{\ast},q_{\ast}\) can be obtained by a direct expansion of \(Q\) at
\(z=z_{\ast}\), but are not required for the spectral analysis in the main text.)

\subsection{Frobenius analysis at \(z=\infty\)}\label{app:heun-frob-inf}
To analyze \(z=\infty\), set \(z=1/\zeta\) and rewrite \eqref{eq:w-z-form-app2} in the variable \(\zeta\to 0\).
Since \(Q(z)=\lambda^{2}+O(1/z^2)\) as \(z\to\infty\), the leading model is
\begin{equation}
w_{zz}+\frac{1}{z}w_{z}+\frac{\lambda^{2}}{z^{2}}w=0,
\end{equation}
which shows that \(z=\infty\) is again a \emph{regular singular point}. The corresponding asymptotic behaviors are
\begin{equation}
w(z)\sim z^{\pm i\lambda}
\qquad
\qquad z\to\infty
\end{equation}
This confirms that $z=\infty$ is again a regular singular point and that solutions have
power-law/oscillatory behavior controlled by $\lambda$.

\subsection{Heun classification}\label{app:heun-classification2}
Equation \eqref{eq:w-z-form-app2} has four regular singular points
\begin{equation}
z=0,\qquad z=+i\sqrt{\alpha},\qquad z=-i\sqrt{\alpha},\qquad z=\infty.
\end{equation}
A second-order Fuchsian ODE with four regular singular points is equivalent, after a M\"obius change of variable and a
gauge transformation removing characteristic exponents, to a (general) Heun equation. This justifies describing the
mode equation as \emph{Heun-type} and explains why closed-form expressions for the bulk spectrum are not expected in
general.

\subsection{Role of the Heun reduction in the spectral analysis}\label{app:heun-role2}
The Heun classification is used here only to clarify the analytic complexity of the reduced radial equation. Our
spectral analysis does not require explicit Heun solutions: instead, we use the shooting formulation in \autoref{subsec:bulk-spectrum}, where eigenvalues are characterized as zeros of a scalar endpoint condition. The
Frobenius data above remain useful for 
\begin{enumerate}[label=(\roman*)]
\item checking local behavior, 
\item designing stable numerical integration, and 
\item developing asymptotic regimes (e.g.\ large \(|m|\) or large \(T\)).    
\end{enumerate}

\section{Complement: Gorokhovsky--Lesch gauge-conjugation picture}\label{subsec:GLflow}

\paragraph{A complementary gauge-conjugation viewpoint.}
This discussion is logically independent of the regularized APS/Maslov analysis above. Its purpose is to record a complementary gauge-conjugation interpretation for \emph{local elliptic self-adjoint boundary conditions} preserved by the relevant unitary gauge transformation. It should therefore be read as a parallel viewpoint, not as a second proof of the regularized crossing results.

Consider a local elliptic self-adjoint boundary condition
\begin{equation}
\mathcal F=(\mathcal F_0,\mathcal F_T),
\qquad
\mathcal F_0\subset H^{1/2}(Y_0;S_{Y_0}),
\quad
\mathcal F_T\subset H^{1/2}(Y_T;S_{Y_T}),
\end{equation}
such that the associated bulk Dirac operator is self-adjoint and Fredholm. Writing
\begin{equation}
\gamma:H^1(M;S\otimes E)\to
H^{1/2}(Y_0;S_{Y_0})\oplus H^{1/2}(Y_T;S_{Y_T})
\end{equation}
for the trace map, we set
\begin{equation}\label{eq:Dom-DAF}
\Dom(D_{A,\mathcal F})
=
\{\psi\in H^1(M;S\otimes E):\gamma(\psi)\in \mathcal F_0\oplus\mathcal F_T\},
\end{equation}
and denote by
\begin{equation}
D_{A,\mathcal F}:\Dom(D_{A,\mathcal F})\subset L^2(M;S\otimes E)\to L^2(M;S\otimes E)
\end{equation}
the resulting self-adjoint Fredholm operator.

Let
\begin{equation}
\mathfrak g:S^1\to U(1)
\end{equation}
be a smooth unitary, extended trivially in the \(t\)-direction to \(M=[0,T]\times S^1\), and assume that \(\mathcal F\) is \(\mathfrak g\)-invariant:
\begin{equation}
\mathfrak g(\mathcal F_0\oplus\mathcal F_T)=\mathcal F_0\oplus\mathcal F_T.
\end{equation}
Then multiplication by \(\mathfrak g\) preserves \(\Dom(D_{A,\mathcal F})\).

We fix \(A_0\in\mathbb R\). The Gorokhovsky-Lesch gauge-conjugation path is
\begin{equation}\label{eq:GL_path}
D_{A_0,\mathcal F}(\tau)
=
D_{A_0,\mathcal F}
+
\tau\,\mathfrak g^{-1}[D_{A_0},\mathfrak g],
\qquad \tau\in[0,1].
\end{equation}
Since \(\mathfrak g\) depends only on \(\theta\),
\begin{equation}
[D_{A_0},\mathfrak g]
=
i\gamma_2\,\frac1{f(t)}\,\partial_\theta\mathfrak g,
\qquad
\mathfrak g^{-1}[D_{A_0},\mathfrak g]
=
i\gamma_2\,\frac1{f(t)}\,\mathfrak g^{-1}\partial_\theta\mathfrak g,
\end{equation}
so this is a bounded zeroth-order multiplication operator. Hence \eqref{eq:GL_path} is a norm-continuous path of self-adjoint Fredholm operators with common domain \(\Dom(D_{A_0,\mathcal F})\), and
\begin{equation}
D_{A_0,\mathcal F}(1)=\mathfrak g^{-1}D_{A_0,\mathcal F}\mathfrak g.
\end{equation}

Let
\begin{equation}
P_{A_0,\mathcal F}=\mathbf 1_{[0,\infty)}(D_{A_0,\mathcal F}).
\end{equation}
Now we define the Toeplitz operator
\begin{equation}
T_{\mathfrak g}
=
P_{A_0,\mathcal F}\,\mathfrak g\,P_{A_0,\mathcal F}
:
\Ran(P_{A_0,\mathcal F})\to \Ran(P_{A_0,\mathcal F}).
\end{equation}
Under the hypotheses above, \(T_{\mathfrak g}\) is Fredholm, and the Gorokhovsky-Lesch theorem gives
\begin{equation}\label{eq:GL_SF_Toeplitz}
\operatorname{SF}\bigl(\{D_{A_0,\mathcal F}(\tau)\}_{\tau\in[0,1]}\bigr)
=
\ind(T_{\mathfrak g});
\end{equation}
see \cite{GorokhovskyLesch2013}.

We now specialize to
\begin{equation}
\mathfrak g_N(\theta)=e^{iN\theta},
\qquad N\in\mathbb Z,
\end{equation}
whose winding number is
\begin{equation}\label{eq:winding_def}
\wind(\mathfrak g_N)
=
\frac{1}{2\pi i}\int_{S^1}\mathfrak g_N^{-1}\,d\mathfrak g_N
=
N.
\end{equation}
Equivalently, \(\mathfrak g_N\) represents the class \(N\in K^1(S^1)\cong\mathbb Z\).

In our cylinder model,
\begin{equation}
\mathfrak g_N^{-1}(\partial_\theta+iA)\mathfrak g_N
=
\partial_\theta+i(A+N),
\end{equation}
so
\begin{equation}\label{eq:gauge_shift_bulk}
D_{A+N}=\mathfrak g_N^{-1}D_A\mathfrak g_N.
\end{equation}
If \(\mathcal F\) is also \(\mathfrak g_N\)-invariant, then
\begin{equation}\label{eq:gauge_shift_realization}
D_{A+N,\mathcal F}
=
\mathfrak g_N^{-1}D_{A,\mathcal F}\mathfrak g_N.
\end{equation}

Therefore, whenever
\begin{equation}
A(s_2)-A(s_1)=N\in\mathbb Z,
\end{equation}
the endpoint operators \(D_{A(s_1),\mathcal F}\) and \(D_{A(s_2),\mathcal F}\) are gauge-equivalent, and the GL theorem yields
\begin{equation}\label{eq:GL_takeaway}
\operatorname{SF}\bigl(\{D_{A(s_1),\mathcal F}(\tau)\}_{\tau\in[0,1]}\bigr)
=
\ind\!\bigl(P_{A(s_1),\mathcal F}\,\mathfrak g_N\,P_{A(s_1),\mathcal F}\bigr),
\end{equation}
where
\begin{equation}
P_{A(s_1),\mathcal F}=\mathbf 1_{[0,\infty)}(D_{A(s_1),\mathcal F}).
\end{equation}
In the present circle model, this index depends only on the homotopy class of \(\mathfrak g_N\), hence, only on the winding number \(N\).

The GL construction is a \emph{gauge-conjugation} computation for gauge-equivalent endpoint operators under local boundary conditions. It is therefore complementary to, but not literally the same path as, the parameter family \(s\mapsto D_{A(s),\mathcal F}\). In particular, it applies when the endpoint shift is an integer and the chosen local elliptic self-adjoint boundary condition is preserved by the corresponding unitary \(\mathfrak g_N\).

\subsubsection*{Three explicit examples}

\paragraph{Example 1: \(A(s)=s-\tfrac12\), \(s\in[0,1]\).}
Here
\begin{equation}
A(1)-A(0)=\tfrac12-(-\tfrac12)=1.
\end{equation}
Hence one chooses
\begin{equation}
\mathfrak g_1(\theta)=e^{i\theta},
\qquad
\wind(\mathfrak g_1)=1.
\end{equation}
If \(\mathcal F\) is \(\mathfrak g_1\)-invariant, then
\begin{equation}
D_{A(1),\mathcal F}
=
\mathfrak g_1^{-1}D_{A(0),\mathcal F}\mathfrak g_1,
\end{equation}
and
\begin{equation}
\operatorname{SF}\bigl(\{D_{A(0),\mathcal F}(\tau)\}_{\tau\in[0,1]}\bigr)
=
\ind\!\bigl(P_{A(0),\mathcal F}\,\mathfrak g_1\,P_{A(0),\mathcal F}\bigr).
\end{equation}

\paragraph{Example 2: \(A(s)=3s-\tfrac14\), \(s\in[0,1]\).}
Here
\begin{equation}
A(1)-A(0)=\tfrac{11}{4}-\Bigl(-\tfrac14\Bigr)=3.
\end{equation}
Hence one chooses
\begin{equation}
\mathfrak g_3(\theta)=e^{i3\theta},
\qquad
\wind(\mathfrak g_3)=3.
\end{equation}
If \(\mathcal F\) is \(\mathfrak g_3\)-invariant, then
\begin{equation}
D_{A(1),\mathcal F}
=
\mathfrak g_3^{-1}D_{A(0),\mathcal F}\mathfrak g_3,
\end{equation}
and
\begin{equation}
\operatorname{SF}\bigl(\{D_{A(0),\mathcal F}(\tau)\}_{\tau\in[0,1]}\bigr)
=
\ind\!\bigl(P_{A(0),\mathcal F}\,\mathfrak g_3\,P_{A(0),\mathcal F}\bigr).
\end{equation}

\paragraph{Example 3: \(A(s)=\sin(4\pi s)\), \(s\in[\varepsilon,1-\varepsilon]\), \(\varepsilon=\tfrac1{24}\).}
The cutoff excludes endpoint boundary zeros and therefore guarantees endpoint invertibility. In this case,
\begin{equation}
A(1-\varepsilon)-A(\varepsilon)
=
\sin(4\pi-4\pi\varepsilon)-\sin(4\pi\varepsilon)
=
-2\sin(4\pi\varepsilon)
=
-1.
\end{equation}
Hence one chooses
\begin{equation}
\mathfrak g_{-1}(\theta)=e^{-i\theta},
\qquad
\wind(\mathfrak g_{-1})=-1.
\end{equation}
If \(\mathcal F\) is \(\mathfrak g_{-1}\)-invariant, then
\begin{equation}
D_{A(1-\varepsilon),\mathcal F}
=
\mathfrak g_{-1}^{-1}D_{A(\varepsilon),\mathcal F}\mathfrak g_{-1},
\end{equation}
and
\begin{equation}
\operatorname{SF}\bigl(\{D_{A(\varepsilon),\mathcal F}(\tau)\}_{\tau\in[0,1]}\bigr)
=
\ind\!\bigl(P_{A(\varepsilon),\mathcal F}\,\mathfrak g_{-1}\,P_{A(\varepsilon),\mathcal F}\bigr).
\end{equation}

\section{Numerical implementation and plots}\label{sec:numerics}

\begin{proposition}[Liouville/Wronskian identity]\label{prop:liouville}
For any fundamental matrix $\psi(t)$ we have
\begin{equation}
f(t)\,\det\psi(t)=\textup{constant in }t.
\end{equation}
Equivalently, for any two solutions $\psi^{(1)}=(u_1,v_1)$ and $\psi^{(2)}=(u_2,v_2)$,
\begin{equation}
f(t)\,\bigl(u_1(t)v_2(t)-u_2(t)v_1(t)\bigr)
\end{equation}
is $t$-independent.
\end{proposition}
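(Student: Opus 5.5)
The plan is to apply Abel's (Liouville's) formula to the first-order system \eqref{eq:uv_system}, written as $\psi'=\mathcal M(t)\psi$ with $\psi=(u,v)^{\mathsf T}$. From \eqref{eq:uv_system_a}--\eqref{eq:uv_system_b} we read off
\begin{equation*}
\mathcal M(t)=\begin{pmatrix}-\dfrac{f'}{2f}+\dfrac{m}{f} & -i\lambda\\[1ex] -i\lambda & -\dfrac{f'}{2f}-\dfrac{m}{f}\end{pmatrix},
\end{equation*}
so $\operatorname{tr}\mathcal M(t)=-f'(t)/f(t)$. The mass terms $\pm m/f$ cancel in the trace, and the off-diagonal spectral terms do not enter it.

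For a fundamental matrix $\psi(t)$, the key step is the standard identity $(\det\psi)'=\operatorname{tr}\mathcal M\,\det\psi$. To check it, we differentiate the determinant column by column, or row by row. Each row derivative $\psi_j'=\sum_l \mathcal M_{jl}\psi_l$ contributes only its diagonal part $\mathcal M_{jj}\psi_j$ to the determinant, since the other part repeats a row.

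This gives $(\det\psi)'=-(f'/f)\det\psi$. Then
\begin{equation*}
\frac{d}{dt}\bigl(f\det\psi\bigr)=f'\det\psi+f\Bigl(-\frac{f'}{f}\Bigr)\det\psi=0,
\end{equation*}
and $f>0$ is smooth on $[0,T]$, so everything is well defined. The equivalent statement for two solutions $\psi^{(1)},\psi^{(2)}$ follows by taking $\psi=(\psi^{(1)}\ \psi^{(2)})$, whose determinant is $u_1v_2-u_2v_1$. The claim also holds when the solutions are linearly dependent, since then both sides vanish identically; the same computation applies to any pair of solutions.

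I expect no real obstacle. The only point of care is to use the sign conventions of \eqref{eq:uv_system} exactly, so that the trace is $-f'/f$ for all $m$ and $\lambda$. As a cross-check, in the unitarily conjugated variables of Section~\ref{sec:maslov}, $\tilde u=f^{1/2}u$ and $w=-if^{1/2}v$, the system \eqref{eq:uw-system} has traceless coefficient matrix. The Wronskian $\tilde u_1w_2-\tilde u_2w_1=-if\,(u_1v_2-u_2v_1)$ is therefore constant, which confirms the result.
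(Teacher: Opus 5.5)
Your proposal is correct and follows the same route as the paper: Liouville's formula $(\det\psi)'=\operatorname{tr}\mathcal M\,\det\psi$ with $\operatorname{tr}\mathcal M=-f'/f$ (the $\pm m/f$ terms cancel), so $f\det\psi$ is constant. You write out the coefficient matrix explicitly, which the paper leaves implicit as $A(t;\lambda)$ with undefined $p,q$, and your cross-check via the traceless system \eqref{eq:uw-system} is a valid extra confirmation.
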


\begin{proof}
Liouville's formula gives
\begin{equation}
\frac{d}{dt}\det\psi(t)=\mathrm{tr}\bigl(A(t;\lambda)\bigr)\det\psi(t).
\end{equation}
In our system,
\begin{equation}
\mathrm{tr}\bigl(A(t;\lambda)\bigr)=(-q+p)+(-q-p)=-2q(t)=-\frac{f'(t)}{f(t)}.
\end{equation}
Thus, $\frac{d}{dt}\log\det\psi(t)=-\frac{f'(t)}{f(t)}$, hence $\det\psi(t)=C/f(t)$ and $f(t)\det\psi(t)=C$ is constant.
\end{proof}

\paragraph{Numerical illustration of Proposition~\ref{prop:liouville}.}
To complement \ref{prop:liouville}, we plot in \autoref{fig:appendixC-numerics} the numerical deviation of the conserved quantity
\begin{equation}
f(t)\det\psi(t)
\end{equation}
from its exact constant value. Here \(\psi(t)\) is the fundamental matrix of the first-order system, normalized by \(\psi(0)=I\), so the exact constant is
\begin{equation}
f(0)\det\psi(0)=f(0)=2
\end{equation}
for the parameter choice \(\alpha=1\). To match the bulk-spectrum example of \autoref{subsec:bulk_spectrum}, we use
\begin{equation}
\alpha=1,\qquad A=0.3,\qquad k=1,\qquad T=1.5,
\end{equation}
and fix the representative spectral value \(\lambda=2\). Since the exact quantity is constant, the plotted oscillations represent only numerical error. For visibility, we therefore plot the scaled error
\begin{equation}
10^{8}\,\bigl|f(t)\det\psi(t)-2\bigr|.
\end{equation}
We also include a zoomed view over a shorter subinterval to display the fine oscillatory structure more clearly.

\begin{figure}[t]
    \centering
    \includegraphics[width=0.48\textwidth]{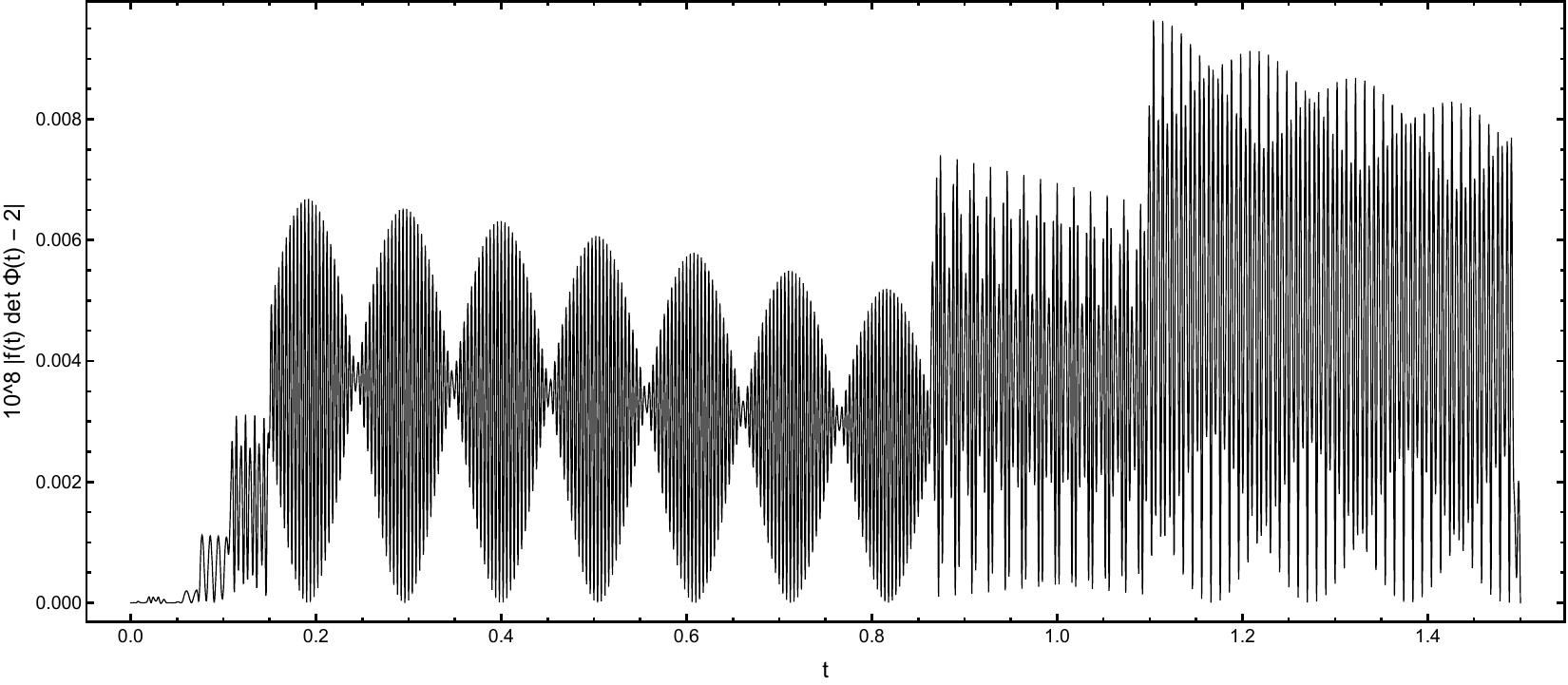}\hfill
    \includegraphics[width=0.48\textwidth]{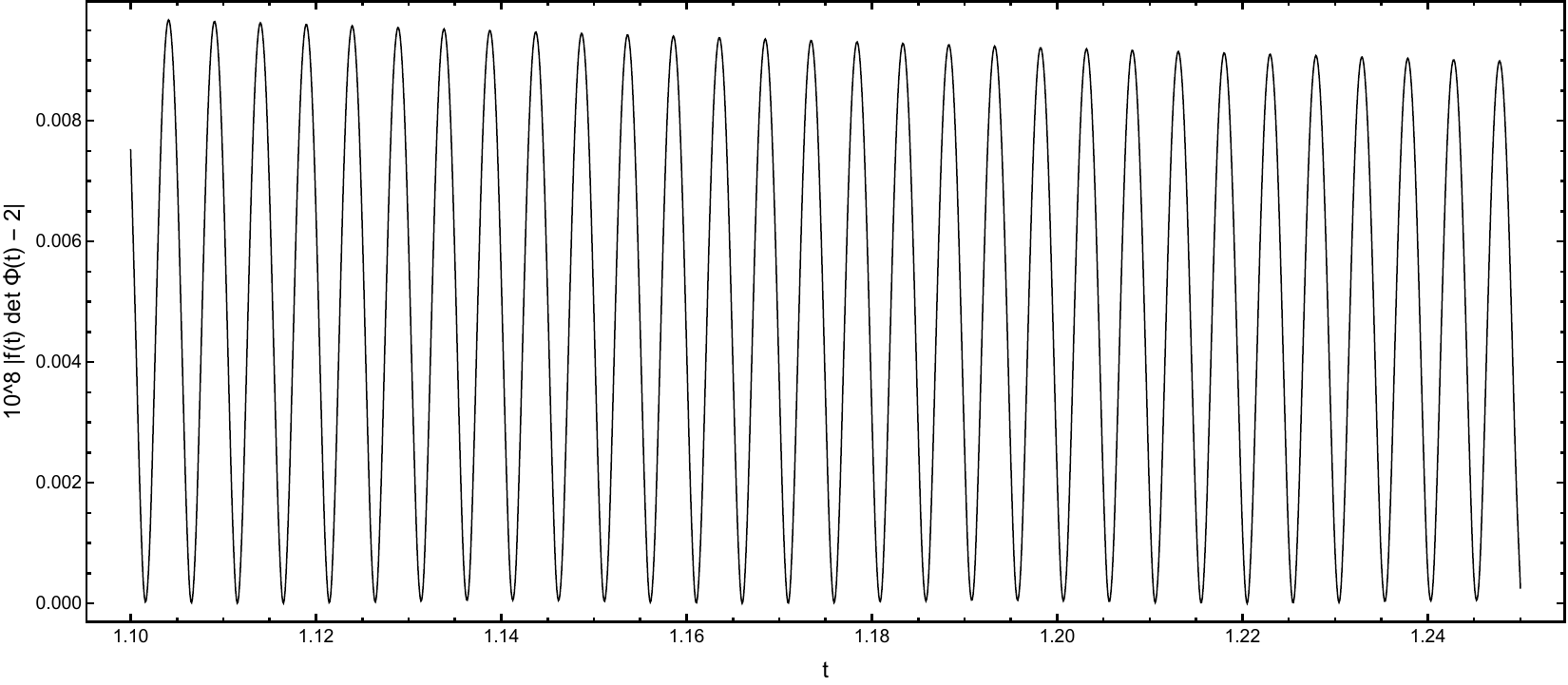}
    \caption{Numerical illustration of Proposition~\ref{prop:liouville} for the parameter values $\alpha=1$, $A=0.3$, $k=1$, and $T=1.5$, matching the setup of \autoref{subsec:bulk_spectrum}, with representative spectral parameter $\lambda=2$. The plotted quantity is $10^{8}|f(t)\det\psi(t)-2|$, where $\psi(t)$ is the fundamental matrix normalized by $\psi(0)=I$. Since the exact conserved value is $f(0)=2$, the graph measures only numerical deviation from the Liouville/Wronskian identity. Right: zoomed view on a shorter subinterval showing the oscillatory structure more clearly.}
    \label{fig:appendixC-numerics}
\end{figure}

%% References
\bibliographystyle{utphys}
\bibliography{ref}

\end{document}